\documentclass[11pt, reqno]{article}

\usepackage[a4paper, left=3cm, right=3cm, top=3cm]{geometry}
\usepackage{euscript,amscd,amsgen,amsfonts,amssymb,latexsym,graphicx,psfrag,multicol}
\usepackage{amsmath}
\usepackage{amsthm}
\usepackage{dsfont}
\usepackage[english]{babel}
\usepackage{abstract}
\usepackage{bbm}
\usepackage{hyperref}
\usepackage{nameref}
\usepackage{cleveref}
\usepackage{extarrows}
\usepackage[arrow, matrix, curve]{xy}
\usepackage{pdfpages}
\usepackage{tikz}
\usepackage{graphicx}
\usepackage{MnSymbol}
\usetikzlibrary{through}
\usepackage[numbers,sort]{natbib}
\usepackage{tabularx}

\def \bE {{\mathbb{E}}}
\def \bF {{\mathbb{F}}}
\def \bN {{\mathbb{N}}}
\def \bR {{\mathbb{R}}}

\def \cF {{\mathcal{F}}}
\def \cG {{\mathcal{G}}}

\def \cM {{\mathcal{M}}}
\def \cN {{\mathcal{N}}}

\def \cQ {{\mathcal{Q}}}

\def \cV {{\mathcal{V}}}

\def \cZ {{\mathcal{Z}}}

\def \bP {{\textbf{P}}}
\def \bQ {{\textbf{Q}}}
\def \bS {{\textbf{S}}}

\def \d {\text{d}}

\def \CPS {{\text{CPS}}}
\def \SCPS {{\text{SCPS}}}
\def \loc {{\text{loc}}}

\def \NoTC {{\text{NoTC}}}

\def \SQ {{\tilde{S}^\bQ}}

\def \hSQ {{\widehat{S}^{\widehat{\bQ}}}}
\def \hSn {{\hat{S}^n}}
\def \hQn {{\widehat{\bQ}^n}}
\def \bSn {{\bar{S}^n}}
\def \tSn {{\tilde{S}^n}}

\def \bSQ {{\bar{S}^{\bar{\bQ}}}}
\def \bbQ {{\overline{\bQ}}}

\def \bS {{\bar{S}}}

\renewcommand{\emptyset}{\font\cmsy = cmsy10 at 10pt
 \hbox{\cmsy \char 59}
}

\numberwithin{equation}{section}

\theoremstyle{plain}                    
\newtheorem{theorem}{Theorem}[section]
\newtheorem{lemma}[theorem]{Lemma}
\newtheorem{corollary}[theorem]{Corollary}
\newtheorem{proposition}[theorem]{Proposition}
\newtheorem{remark}[theorem]{Remark}

\theoremstyle{definition}
\newtheorem{definition}[theorem]{Definition}

\newtheorem{example}[theorem]{Example}

\newtheorem{assumption}{Assumption}

\theoremstyle{remark}

\linespread{1.1}                       
      
\newcommand*\samethanks[1][\value{footnote}]{\footnotemark[#1]}

\renewcommand\labelenumi{(\roman{enumi})}    
\renewcommand\theenumi\labelenumi                                   

\DeclareMathSymbol{\varnothing}{\mathord}{AMSb}{"3F}
\DeclareMathOperator*{\esssup}{ess\,sup}
\DeclareMathOperator*{\essinf}{ess\,inf}

\title{Asset Price Bubbles in markets with Transaction Costs}
\author{F. Biagini, T. Reitsam}
\date{\today}

\makeatletter
\def\@tocline#1#2#3#4#5#6#7{\relax
  \ifnum #1>\c@tocdepth % then omit
  \else
    \par \addpenalty\@secpenalty\addvspace{#2}%
    \begingroup \hyphenpenalty\@M
    \@ifempty{#4}{%
      \@tempdima\csname r@tocindent\number#1\endcsname\relax
    }{%
      \@tempdima#4\relax
    }%
    \parindent\z@ \leftskip#3\relax \advance\leftskip\@tempdima\relax
    \rightskip\@pnumwidth plus4em \parfillskip-\@pnumwidth
    #5\leavevmode\hskip-\@tempdima
      \ifcase #1
       \or\or \hskip 1em \or \hskip 2em \else \hskip 3em \fi%
      #6\nobreak\relax
    \dotfill\hbox to\@pnumwidth{\@tocpagenum{#7}}\par
    \nobreak
    \endgroup
  \fi}

\newcommand*{\rom}[1]{\expandafter\@slowromancap\romannumeral #1@}
\makeatother

%\numberwithin{equation}{section}

\author{Francesca Biagini\thanks{Workgroup Financial and Insurance Mathematics, Department of Mathematics, Ludwig-Maximilians Universit{\"a}t, Theresienstraße 39, 80333 Munich, Germany. Emails: biagini@math.lmu.de, reitsam@math.lmu.de.} \and Thomas Reitsam\samethanks[1] \thanks{The financial support of the Verein zur Versicherungswissenschaft M\"unchen e.V. is gratefully acknowledged} }
 
 \title {Asset Price Bubbles in market models with proportional transaction costs}

\begin{document} 

\maketitle

\begin{abstract}
We study asset price bubbles in market models with proportional transaction costs $\lambda\in (0,1)$ and finite time horizon $T$ in the setting of \cite{schachermayeradmissible}. By following \cite{herdegenschweizer}, we define the fundamental value $F$ of a risky asset $S$ as the price of a super-replicating portfolio for a position terminating in one unit of the asset and zero cash. We then obtain a dual representation for the fundamental value by using the super-replication theorem of \cite{schachermayersuperreplication}. We say that an asset price has a bubble if its fundamental value differs from the ask-price $(1+\lambda)S$. We investigate the impact of transaction costs on asset price bubbles and show that our model intrinsically includes the birth of a bubble. 
\end{abstract}\noindent
\textbf{Keywords:}\begin{small} financial bubbles, fundamental value, super-replication, transaction costs, consistent price systems\end{small}
\smallbreak\noindent
\textbf{Mathematics Subject Classification (2010):} 91G99, 91B70, 60G44
\smallbreak\noindent
\textbf{JEL Classification:} G10, C60

\section{Introduction}
\label{sec:introduction}
In this paper we study \emph{financial asset price bubbles} in market models with \emph{proportional transaction costs} and finite time horizon. In the economic literature there are several contributions discussing the impact of transaction costs on the formation of asset price bubbles. It is apparent that bubbles may also appear in markets with big transaction costs, see \cite{anthony2012financial}, \cite{gerding2007} and also \cite{gerdinglaw}, \cite{levitin2011explaining}, \cite{shiller2015irrational} for the speficic case of the real estate market. Several approaches can be found in the literature to explain bubbles, like asymmetric information, see \cite{allengorton93}, \cite{allen1993finite}, heterogenous beliefs, see \cite{harrisonkreps1978}, \cite{scheinkmanoverconfidence}, and noise trading such as positive feedback activity \cite{delong1990positive}, \cite{shleifer2000inefficient}, \cite{sornette2003critical}, in combination with limits to arbitrage, see \cite{abreu2003bubbles}, \cite{delong1990noise}, \cite{shleifer1990noise}, \cite{shleifer1997limits}. In \cite{scheinkmanoverconfidence}, the authors include transaction costs in an equilibrium model with heterogeneous beliefs and show that small transaction costs may reduce speculative trading and then prevent bubble's formation. However, price volatility and size of the bubble are not reduced effectively. For an overview of heterogeneous beliefs, we refer to \cite{xiongoverview}. Also in \cite{vsperka2013transaction}, the authors show in an agent-based simulation that transaction costs can have positive impact by stabilizing the financial market model in the long run. \\
From a mathematical point of view, there is a wide literature on the theory of asset bubbles in frictionless market models. In general, a bubble is given by the difference of the \emph{market price} of the asset and its \emph{fundamental value}. While the market price can be observed, it is less obvious how to define the fundamental value. In the martingale theory of asset prices bubbles, see \cite{cox2005local}, \cite{jarrowprottercomplete}, \cite{jarrowprotterincomplete}, \cite{loewenstein2000rational}, the fundamental value of a given asset $S$ is given by its expectation of future cash flows with respect to an equivalent local martingale measure. This definition has been criticized in \cite{guasonifragility} for its sensitivity with respect to model's choice. Another approach defines the fundamental value of an asset by its super-replication prices, see \cite{herdegenschweizer}, \cite{hestonbubble}. Other models explicitly describe the impact of microeconomic interactions on asset price formation, see \cite{biagini2015formation}, \cite{jarrow2012liquidity}. In \cite{jarrow2012liquidity}, the fundamental value is exogenously given and asset price bubbles are endogenously determined by the impact of liquidity risk. In \cite{biagini2015formation}, microeconomic dynamics may at an aggregate level determine a shift in the martingale measure. Further references on asset price bubbles are \cite{biagini2014shifting}, \cite{biagini2017financial}, \cite{biagini2018liquidity}, \cite{jarrow2011detect}, \cite{jarrow2009forward}, \cite{jarrow2011foreign}, \cite{schatz2019inefficient}. For a comprehensive overview see also \cite{protterbubbles} and the entry ``Bubbles and Crashes'' of \cite{contbubbles}. \\
The aim of this paper is twofold. First, we wish to introduce and study the notion of asset price bubbles in market models with proportional transaction costs. In \cite{guasonifragility}, the authors suggest a robust definition of asset price bubbles which can be interpreted as a bubble under proportional transaction costs. However, to the best of our knowledge a thorough study of this topic is still missing in the literature.\\
Secondly, we want to investigate in a mathematical setting the impact of transaction costs on asset bubbles' formation and size. In particular, we can see that the presence of market frictions may prevent the birth of a bubble in some cases, but not always, and that we obtain that the introduction of transaction costs may not always reduce the size of a bubble, see \eqref{ali:bubblereductionexample}, consistently with the results in the economic literature.  \\
In market models with proportional transaction costs $\lambda\in (0,1)$ we distinguish between  the \emph{ask price} $(1+\lambda)S$ and the \emph{bid price} $(1-\lambda)S$ for a given asset price $S$. It is well-known that in the frictionless case the no-arbitrage condition no free lunch with vanishing risk (NFLVR) is equivalent to the existence of an equivalent local martingale measure, see \cite{delbaenschachermayer94}. In the presence of proportional transaction costs, the existence of \emph{consistent (local) price systems} (Definition \ref{def:cps}) for each $\lambda>0$ guarantees that the corresponding market model is arbitrage-free in the sense of Definition 4 of \cite{guasoniftap}. Equivalence is obtained for continuous asset price processes. Furthermore, in \cite{bayraktar2018market}, the authors establish an equivalence between the weaker notions of strictly consistent local martingale systems and the NUPBR\footnote{no unbounded profit with bounded risk} and NLABPs\footnote{no local arbitrage with bounded portfolios} conditions in the robust sense. Roughly speaking, a consistent (local) price system can be thought as a dual market model without transaction costs where the trading happens parallel. For a detailed overview of the theory of proportional transaction costs, we refer to the books \cite{kabanovbook} and \cite{schachermayerasymptotic}.\\
 Due to the presence of transaction costs, positions in cash and in the asset are asymmetric. By following \cite{herdegenschweizer}, we define the fundamental value $F$ of a given asset $S$ as the price of a super-replicating portfolio for a position terminating in one unit of the asset and zero cash. More precisely, we are interested in super-replicating a position in the asset, and not in the liquidation value of the portfolio. First we study some properties of the fundamental value. We establish a dual representation for $F$ for any time $t\in [0,T]$ based on the super-replication results from \cite{campischachermayer} and \cite{schachermayersuperreplication} and show time independence of the consistent (local) price system in the dual representation, see Theorem \ref{thm:timeindependent}. In particular, in Theorem \ref{thm:cadlag} we study when the fundamental value admits a right-continuous modification. An asset price bubble is defined as the difference of the ask price with respect to the fundamental value. For the frictionless case, if the NFLVR condition is satisfied in the setting of \cite{herdegenschweizer}, one can apply the duality result from \cite{kramkovduality} and obtain that there is a bubble if and only if the price process $S$ is a strict local martingale under all equivalent local martingale measures. In particular, if there is at least one equivalent local martingale measure such that $S$ is a true martingale, there is no bubble in the market model. Analogously, in our framework there is no bubble in the market model if there exists a consistent price system in the non-local sense for any $\lambda>0$, see Proposition \ref{lem:characterization}.\\
  Further, we discuss this theoretical setting in several examples. In particular, the impact of proportional transaction costs is investigated by comparing our model to the frictionless framework of \cite{herdegenschweizer}. It is immediate to see that no bubble in the frictionless market model means no bubble in the analogous market model with transaction costs. On the other side, if there is a bubble in the market model with proportional transaction costs, there is also a bubble in the frictionless market model. However, if there is a bubble in the frictionless market model, the introduction of transaction costs can possibly eliminate the asset price bubble. Finally, we note that our definition of asset price bubble intrinsically includes bubbles' birth, i.e., the possibility that there is no bubble at the initial time $0$, but the bubble starts at some later time $t_0>0$ with positive probability, as in the settings of \cite{biagini2014shifting} and \cite{herdegenschweizer}.\\
The paper is organized as follows. In Section \ref{sec:setting}, we outline the setting for market models with proportional transaction costs and extend the notion of admissible strategies. In Section \ref{sec:bubbles}, we introduce the definition of the fundamental value and of asset price bubbles, and establish a dual representation for the fundamental value. Further, we prove the main results, Theorem \ref{thm:timeindependent} and Theorem \ref{thm:cadlag}. In Section \ref{sec:comparison}, the impact of proportional transaction costs on bubbles' formation and size is investigated. In Section \ref{sec:examples}, we illustrate our results through concrete examples. In the Appendix, we state the super-replication results from \cite{schachermayersuperreplication} with small modifications.

\section{The Setting}
\label{sec:setting}

Let $T>0$ describe a finite time horizon and let $(\Omega,\cF,(\cF_t)_{0\leq t \leq T},\bP)$ be a filtered probability space where the filtration $\bF:=(\cF_t)_{0\leq t\leq T}$ satisfies the usual conditions of right-continuity and saturatedness, with $\cF_0=\{\emptyset,\Omega\}$ and $\cF_T=\cF$. We consider a financial market model consisting of a risk-free asset $B$, normalized to $B\equiv 1$, and a risky asset $S$. Throughout the paper we assume that $S=(S_t)_{0\leq t\leq T}$ is an $\bF$-adapted stochastic process, with c\`adl\`ag and positive paths. For trading the risky asset in the market model, proportional transaction costs $0<\lambda<1$ are charged, i.e., to buy one share of $S$ at time $t$ the trader has to pay $(1+\lambda)S_t$ and for selling one share of $S$ at time $t$ the trader receives $(1-\lambda)S_t$. The interval $[(1-\lambda)S_t,(1+\lambda)S_t]$ is called \emph{bid-ask-spread}. Further, we assume that $S_t\in L_+^1(\cF_t,\bP)$ for all $t\in [0,T]$.  This assumption is needed in the proof of Lemma \ref{lem:local} and thus also for the main result, Theorem \ref{thm:timeindependent}.

\begin{definition}
\label{def:cps}
For $\lambda>0$ and a stopping time $0\leq \sigma<\tau\leq T$, we call $\CPS(\sigma,\tau)$ (resp. $\CPS_{\loc}(\sigma,\tau)$) the family of pairs $(\bQ,\SQ)$ such that $\bQ$ is a probability measure on $\cF_\tau$, $\bQ\sim\bP|_{\cF_\tau}$, $\SQ$ is a martingale (resp. local martingale) under $\bQ$ on $\lsem \sigma,\tau\rsem$, and
\begin{align}
\label{eq:cps}
(1-\lambda)S_t\leq \SQ_t\leq (1+\lambda)S_t,\quad\text{for }\sigma\leq t\leq \tau.
\end{align}
A pair $(\bQ,\SQ)$ in $\CPS(\sigma,\tau)$ (resp. $\CPS_{\loc}(\sigma,\tau)$) is called \emph{consistent price system} (resp. \emph{consistent local price system}). By $\cQ(\sigma,T)$ (resp. $\cQ_\loc(\sigma,T)$) we denote the set of measures $\bQ$ such that there exists a pair $(\bQ,\SQ)\in\CPS(\sigma,T)$ (resp. $(\bQ,\SQ)\in\CPS_\loc(\sigma,T)$). Further, we write $L^p(\cF_\sigma,\cQ):=\bigcap_{\bQ\in\cQ(\sigma,T)} L^p(\cF_\sigma,\bQ)$ and $L^p(\cF_\sigma,\cQ_\loc):=\bigcap_{\bQ\in\cQ_\loc(\sigma,T)} L^p(\cF_\sigma,\bQ)$. By $L_+^p(\cF_\sigma,\cQ_\loc)$ (resp. $L_+^p(\cF_\sigma,\cQ)$) we denote the space of $[0,\infty)$-valued random variables $X\in L^p(\cF_\sigma,\cQ_\loc)$ (resp. $X\in L^p(\cF_\sigma,\cQ)$).
%By $\SCPS(\sigma,\tau)$ (resp. $\SCPS_{\loc}(\sigma,\tau)$) we denote the family of elements $(\bQ,\SQ)\in \CPS(\sigma,\tau)$ (resp. $(\bQ,\SQ)\in \CPS_{\loc}(\sigma,\tau)$) for which the inequalities in \eqref{eq:cps} are strict.
\end{definition}
\noindent
A consistent (local) price system can be thought as a frictionless market with better conditions for traders, see \cite{schachermayercps}. The existence of a $\lambda$-consistent (local) price system for every $0<\lambda<1$ implies that the corresponding market model is arbitrage-free in the sense of Definition 4 of \cite{guasoniftap}. Considering consistent price systems in the non-local or local sense corresponds in the frictionless case to the characterization of no arbitrage using true martingales or local martingales. In both cases the difference lies in the choice of admissible trading strategies. If there is no natural num\'eraire it seems reasonable to compare the portfolio with positions which may be short in each asset. On the other hand, if we fix a num\'eraire we control the portfolio only in units of the num\'eraire. In particular, we do not allow short positions in the risky asset. See Chapter 5 of \cite{guasoniftap} for a more detailed discussion. For the convenience of the reader we summarize the assumptions that we use through out the paper.

\begin{assumption}
\label{assumption}
We assume that $S$ admits a consistent \emph{local} price system for every $0<\lambda'\leq \lambda$.
\end{assumption} \noindent
In the sequel we will sometimes need a stronger assumption, namely, the existence of consistent price systems (in the non-local sense) for every $0<\lambda'\leq \lambda$.

\begin{assumption}
\label{assumption2}
We assume that $S$ admits a consistent price system for every $0<\lambda'\leq \lambda$. 
\end{assumption} \noindent

\begin{remark}
\label{rem:l1}
We first note for every $(\bQ,\SQ)\in\CPS_\loc(0,T)$, then $\SQ_t\in L_+^1(\cF_t,\bQ)$ because $\SQ$ is a $\bQ$-supermartingale.\\
Furthermore, Assumption \ref{assumption} guarantees that for any $t\in[0,T]$, $S_t\in L^1(\cF_t,\cQ_\loc)$, as \eqref{eq:cps} implies
\begin{align*}
S_t\leq \frac{1}{1-\lambda}\SQ_t,\quad \forall t\in[0,T],
\end{align*}
for any $(\bQ,\SQ)\in\CPS_\loc(0,T)$. 
\end{remark}\noindent
By following \cite{guasoniftap}, \cite{kabanov99}, for $\lambda>0$ we denote by $K_t^\lambda$ the solvency cone at time $t$, defined as
\begin{align}
K_t^\lambda=\text{cone}\left\{(1+\lambda)S_t e_1-e_2,-e_1+\frac{1}{(1-\lambda)S_t} e_2\right\},
\end{align}
where $e_1=(1,0), e_2=(0,1)$ are the unit vectors in $\bR^2$, and by $(-K_t^\lambda)^\circ$ the corresponding polar cone, given by
\begin{align}
\label{ali:polarcone}
\begin{split}
(-K_t^\lambda)^\circ&=\left\{(w_1,w_2)\in\bR_+^2\mid (1-\lambda)S_t\leq \frac{w_2}{w_1}\leq (1+\lambda)S_t\right\}\\
&=\left\{w\in\bR^2\mid \langle x,w\rangle\leq 0, \forall x\in\left( -K_t^\lambda\right)\right\}.
\end{split}
\end{align}

\begin{definition}
\label{def:representation}
We define $\mathcal{Z}(\sigma,\tau)$ (resp. $\mathcal{Z}_{\loc}(\sigma,\tau)$) as the set of processes $Z=(Z_t^1,Z_t^2)_{\sigma\leq t\leq \tau}$ such that $Z^1$ is a $\bP$-martingale on $\lsem\sigma,\tau\rsem$ and $Z^2$ is a $\bP$-martingale (resp. local $\bP$-martingale) on $\lsem\sigma,\tau\rsem$ and such that $Z_t\in (-K_t^\lambda)^\circ\backslash\{0\}$ a.s. for all $t\in\lsem\sigma,\tau\rsem$.
\end{definition} \noindent
The following proposition from \cite{guasoniftap} provides a convenient representation of consistent (local) price systems by elements in $\mathcal{Z}$ (resp. $\mathcal{Z}_\loc$) and follows directly from the definition of $(-K_t^\lambda)^\circ$ in \eqref{ali:polarcone}.

\begin{proposition}[Proposition 3, \cite{guasoniftap}]
\label{prop:pmartingale}
Let $Z=(Z_t^1,Z_t^2)_{\sigma\leq t\leq T}$ be a $2$-dimensional stochastic process with $Z_\tau^1\in L^1(\cF_\tau,\bP)$. Define the measure $\bQ(Z)\ll\bP$ by $\d\bQ(Z)/\d\bP:=Z_\tau^1/\bE[Z_\tau^1]$. Then $Z\in \cZ(\sigma,\tau)$ (resp. $Z\in \cZ_{\loc}(\sigma,\tau)$) if and only if $(\bQ(Z),(Z^2/Z^1))$ is a consistent price system (resp. consistent local price system) on $\lsem\sigma,\tau\rsem$.
\end{proposition}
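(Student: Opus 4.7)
The plan is to combine the explicit description of the polar cone in \eqref{ali:polarcone} with Bayes' rule for changes of measure; I expect no serious obstacle, since the content of the proposition is essentially definitional once both sides are unpacked.

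First I would carry out the pointwise cone computation. For $w=(w_1,w_2)\in\bR^2\setminus\{0\}$, testing against the two generators of $-K_t^\lambda$ yields exactly the inequalities $(1-\lambda)S_t\, w_1\leq w_2\leq (1+\lambda)S_t\, w_1$. Since $S_t>0$ and $(1+\lambda)/(1-\lambda)>1$, these force $w_1>0$ (otherwise both coordinates vanish). Applied $\bP$-a.s.\ to $Z_t$, the cone condition $Z_t\in(-K_t^\lambda)^\circ\setminus\{0\}$ is therefore equivalent to $Z_t^1>0$ a.s.\ together with the bid-ask bounds $(1-\lambda)S_t\leq Z_t^2/Z_t^1\leq (1+\lambda)S_t$, which is precisely \eqref{eq:cps} for the ratio $\tilde S:=Z^2/Z^1$. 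In particular $\tilde S$ is well-defined on $\lsem\sigma,\tau\rsem$.

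The remaining content is the equivalence between the $\bP$-martingale property of $Z^2$ and the $\bQ(Z)$-martingale property of $\tilde S$. Using that $Z^1$ is a strictly positive $\bP$-martingale on $\lsem\sigma,\tau\rsem$, one has $\bE[Z_\tau^1]=\bE[Z_\sigma^1]>0$, so $\bQ(Z)$ is a well-defined probability measure equivalent to $\bP|_{\cF_\tau}$ with density process $Z^1/\bE[Z_\tau^1]$ on $\lsem\sigma,\tau\rsem$. Bayes' rule then yields, for $\sigma\leq s\leq t\leq\tau$,
$$\bE_{\bQ(Z)}\!\left[\frac{Z_t^2}{Z_t^1}\,\bigg|\,\cF_s\right]=\frac{\bE[Z_t^2\mid\cF_s]}{Z_s^1},$$
which equals $Z_s^2/Z_s^1$ if and only if $\bE[Z_t^2\mid\cF_s]=Z_s^2$. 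Both directions of the non-local statement follow immediately.

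For the local variant I would localize $Z^2$ along a sequence $(\tau_n)\uparrow\tau$ reducing it to a true $\bP$-martingale; stopping preserves both the $\bP$-martingality of $Z^1$ and the pointwise cone condition, so the same Bayes identity applied to the stopped processes transfers between $(Z^2)^{\tau_n}$ being a $\bP$-martingale and $(\tilde S)^{\tau_n}$ being a $\bQ(Z)$-martingale, giving the equivalence between $\cZ_{\loc}(\sigma,\tau)$ and $\CPS_{\loc}(\sigma,\tau)$. The only genuinely substantive step is the cone computation, which simultaneously encodes the bid-ask bounds and guarantees that the ratio $Z^2/Z^1$ is pointwise well-defined; the rest is a direct transcription of the definitions through Bayes' rule.
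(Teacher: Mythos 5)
The paper does not reprove this proposition: it is imported from Guasoni et al.\ (their Proposition~3), with the remark that it ``follows directly from the definition of $(-K_t^\lambda)^\circ$ in \eqref{ali:polarcone}.'' Your proof fills in precisely the details that remark leaves implicit, and the approach --- a pointwise cone computation to recover the bid--ask bounds plus strict positivity of $Z^1$, followed by the Bayes formula to translate the $\bP$-martingale property of $Z^2$ into the $\bQ(Z)$-martingale property of $Z^2/Z^1$ --- is exactly the standard one. Both calculations are correct.

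Two small points worth flagging. First, for the local case it is cleaner to invoke the standard measure-change correspondence directly: since $Z^1/\bE[Z^1_\tau]$ is the density process of $\bQ(Z)$, a process $M$ is a local $\bQ(Z)$-martingale on $\lsem\sigma,\tau\rsem$ if and only if $M\,Z^1$ is a local $\bP$-martingale there; applied to $M=Z^2/Z^1$ this gives the equivalence with $Z^2$ being a local $\bP$-martingale in one line, sidestepping the need to verify that a localizing sequence for $Z^2$ under $\bP$ also localizes $Z^2/Z^1$ under $\bQ(Z)$. Second, in the ``if'' direction you write ``Using that $Z^1$ is a strictly positive $\bP$-martingale\dots'' --- but for that direction the martingale property of $Z^1$ is not a consequence of $(\bQ(Z),Z^2/Z^1)$ being a CPS; it must be taken as part of the hypothesis (or, equivalently, $Z^1$ must be read as the density process of $\bQ(Z)$). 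This looseness is inherited from the original statement and is not a defect you introduced, but it is worth being explicit that the Bayes identity requires $Z^1_s=\bE_\bP[Z^1_\tau\mid\cF_s]$ rather than deriving it.
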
 \noindent
Next, we introduce the notion of \emph{self-financing} strategies and \emph{admissibility}, by extending Definition 3 and 5 of \cite{schachermayeradmissible} to a general starting value.

\begin{definition}
Let $0<\lambda<1$. A \emph{self-financing trading strategy} starting with initial endowment $X_\sigma\in L_+^0(\cF_\sigma,\bP)$ is a pair of $\bF$-predictable finite variation processes $(\varphi_t^1,\varphi_t^2)_{\sigma\leq t\leq T}$ on $\lsem\sigma,T\rsem$ such that
\begin{enumerate}
\item $\varphi_\sigma^1=X_\sigma$ and $\varphi_\sigma^2=0$,
\item denoting by $\varphi_t^1=\varphi_\sigma^1+\varphi_t^{1,\uparrow}-\varphi_t^{1,\downarrow}$ and $\varphi_t^2=\varphi_t^{2,\uparrow}-\varphi_t^{2,\downarrow}$, the Jordan-Hahn decomposition of $\varphi^1$ and $\varphi^2$ into the difference of two non-decreasing processes, starting at $\varphi_\sigma^{1,\uparrow}=\varphi_\sigma^{1,\downarrow}=\varphi_\sigma^{2,\uparrow}=\varphi_\sigma^{2,\downarrow}=0$, these processes satisfy
\begin{align}
\label{ali:diffform}
\d\varphi_t^1\leq (1-\lambda)S_t\d\varphi_t^{2,\downarrow}-(1+\lambda)S_t\d\varphi_t^{2,\uparrow},\quad \sigma\leq t\leq  T.
\end{align}
\end{enumerate}
\end{definition}

\begin{definition}
\label{def:strategies}
Let $0<\lambda<1$.
\begin{enumerate}
\item Let $X_\sigma\in L_+^1(\cF_\sigma,\cQ_\loc)$. Then a self-financing trading strategy $\varphi=(\varphi^1,\varphi^2)$ is called \emph{admissible in a num\'eraire-based} sense on $\lsem\sigma,T\rsem$ with $\varphi_\sigma^1=X_\sigma$ if there is  $M_\sigma\in L_+^1(\cF_\sigma,\cQ_\loc)$ such that the liquidation value $V_\tau^{liq}$ satisfies
\begin{align}
\label{ali:admnumbased}
V_\tau^{liq}(\varphi^1,\varphi^2):=\varphi_\tau^1+\left(\varphi_\tau^2\right)^+(1-\lambda)S_\tau-\left(\varphi_\tau^2\right)^-(1+\lambda)S_\tau\geq -M_\sigma,
\end{align}
for all $\lsem\sigma,T\rsem$-valued stopping times $\tau$.
\item Let $X_\sigma\in L_+^1(\cF_\sigma,\cQ)$. Then a self-financing trading strategy $\varphi=(\varphi^1,\varphi^2)$ is called \emph{admissible in a num\'eraire-free} sense on $\lsem\sigma,T\rsem$ with $\varphi_\sigma^1=X_\sigma$ if there is $(M_\sigma^1,M_\sigma^2)\in L_+^1(\cF_\sigma,\cQ)\times L_+^\infty(\cF_\sigma,\cQ)$ such that
\begin{align}
\label{ali:admnumfree}
V_\tau^{liq}(\varphi^1,\varphi^2):=\varphi_\tau^1+\left(\varphi_\tau^2\right)^+(1-\lambda)S_\tau-\left(\varphi_\tau^2\right)^-(1+\lambda)S_\tau\geq -M_\sigma^1-M_\sigma^2S_\tau,
\end{align}
for all $\lsem \sigma,T\rsem$-valued stopping times $\tau$.
\end{enumerate}
We denote by $\cV_M(X_\sigma,\sigma,T,\lambda)$ (resp. $\cV_M^\loc(X_\sigma,\sigma,T,\lambda)$) the set of all such trading strategies in the num\'eraire-free sense (resp. num\'eraire-based sense) $\varphi$ on the interval $\lsem \sigma,T\rsem$. We also use the notation $\cV_{\sigma,T}(X_\sigma,\lambda)=\bigcup_M \cV_M(X_\sigma,\sigma,T,\lambda)$ (resp. $\cV_{X_\sigma,\sigma,T}^{\loc}(\lambda)=\bigcup_M \cV_M^\loc(X_\sigma,\sigma,T,\lambda)$).
\end{definition} \noindent
For more details on the differential form of \eqref{ali:diffform} we refer the interested reader to \cite{schachermayeradmissible}.\\
Note that both accounts, the holdings in the bond $\varphi^1$ as well as the holdings in the asset $\varphi^2$ are separately given in the definition of a trading strategy $\varphi$. Having an inequality in \eqref{ali:diffform} allows for ``throwing money away'', see \cite{schachermayeradmissible}. As it is explained in \cite{schachermayeradmissible} we could require equality in \eqref{ali:diffform} in order to express $\varphi^1$ in terms of $\varphi^2$. However, for our approach it is more convenient to specify both accounts separately.

\begin{remark}
\label{rem:initialcond}
We now discuss the definition of admissible strategies. Since we are interested in considering strategies on a random interval with non-zero initial endowment, we need to extend Definitions 3 and 5 of \cite{schachermayeradmissible}, as we now explain for the num\'eraire-based case. The argument for the num\'eraire-free setting is analogous. In a first step we consider the case of zero initial endowments. Assume that $\varphi_\sigma=(0,0)$ and $V_\tau^{liq}(\varphi)\geq - M$ for all $\lsem \sigma,T\rsem$-valued stopping times $\tau$ and a constant $M>0$. Then $\varphi$ corresponds to an admissible strategy $\psi$ on $[0,T]$ according to Definition 3 of \cite{schachermayeradmissible}, where $(\psi^1,\psi^2)\equiv (0,0)$ on $\lsem 0,\sigma\rsem$ and $\psi_t=\varphi_t$ for all $t\geq \sigma$. Conversely, any strategy $\psi$ on $[0,T]$ with $(\psi^1,\psi^2)\equiv (0,0)$ on $\lsem 0,\sigma\rsem$, which is admissible in a num\'eraire-based sense in the sense of \cite{schachermayeradmissible}, also satisfies Definition \ref{def:strategies}. Suppose now to have a non-zero initial endowment. By translation, any admissible strategy on $[0,T]$ with initial endowments corresponds to an admissible strategy on $[0,T]$ without initial endowment. This correspondence is more delicate for strategies on $\lsem \sigma,T\rsem$. Let $\varphi_\sigma=(X_\sigma,0)$ for some $X_\sigma\in L_+^1(\cF_\sigma,\cQ_\loc)$ with $V_\tau^{liq}(\varphi)\geq -M$ and define $\tilde{\varphi_t}=(\tilde{\varphi}_t^1,\tilde{\varphi}_t^2):=(\varphi_t-X_\sigma,\varphi_t)$ for all $\sigma\leq t\leq T$. Then $V_\tau^{liq}(\tilde{\varphi)}=V_\tau^{liq}(\varphi)-X_\sigma\geq -M-X_\sigma=:-M_\sigma$. Hence, it is not enough to bound the liquidation value of a strategy by a constant in order to have a one-to-one correspondence of admissible strategies with and without endowments on $\lsem \sigma,T\rsem$. Definition \ref{def:strategies} allows to obtain from any admissible strategy $\psi$ on $[0,T]$ an admissible strategy $\varphi:=\psi|_{\lsem \sigma,T\rsem}$ on $\lsem \sigma,T\rsem$. Note that in the case of $\sigma=0$ Definition \ref{def:strategies} and Definition 3 of \cite{schachermayeradmissible} coincide.\\
When we consider the definition of admissibility in a num\'eraire-based sense on $[0,T]$ from an economical perspective, the role of $M>0$ is to hedge the portfolio by $M$ units of the bond, see \cite{schachermayeradmissible}. In particular, when we superhedge a portfolio on $\lsem \sigma,T\rsem$, it seems reasonable to use the information which are available up to time $\sigma$, namely, to superhedge the portfolio by $M_\sigma$ units of the bond, where $M_\sigma\in L_+^1(\cF_\sigma,\cQ_\loc)$. 
\end{remark}\noindent
We now comment on the integrability conditions of the lower bound $M_\sigma$.

\begin{remark}
\label{rem:integrability}
We discuss the local and the non-local case separately.
In Definition 3 of \cite{schachermayeradmissible} the liquidation value of an admissible strategy in the num\'eraire-based sense $\varphi=(\varphi_t^1,\varphi_t^2)_{t\in[0,T]}$ is required to be lower bounded by a constant. This guarantees that $(\varphi_t^1+\varphi_t^2\SQ_t)_{t\in[0,T]}$ is an optional strong $\bQ$-supermartingale for all $(\bQ,\SQ)\in\CPS_\loc(0,T)$, see Proposition 2 of \cite{schachermayeradmissible}.\\
As explained in Remark \ref{rem:initialcond} we wish to extend the definitions of \cite{schachermayeradmissible} to include admissible strategies on an arbitrary interval with arbitrary initial endowment. To this propose we need to impose condition \eqref{ali:admnumbased}. However, we still obtain an arbitrage-free market model.\\
In the proof of Proposition 2 of \cite{schachermayeradmissible} the lower bound is used to apply Proposition 3.3 of \cite{anselstricker}, respectively Theorem 1 of \cite{strasser}. The conditions of these results are still fulfilled on $\lsem \sigma,T\rsem$ if \eqref{ali:admnumbased} holds, and thus $(\varphi_t^1+\varphi_t^2\SQ_t)_{\sigma\leq t\leq T}$ is an optional strong $\bQ$-supermartingale for all $(\bQ,\SQ)\in\CPS_\loc(\sigma,T)$. \\
In the non-local case, Definition 5 of \cite{schachermayeradmissible} requires that the liquidation value of an admissible strategy in the num\'eraire-free sense $\varphi=(\varphi_t^1,\varphi_t^2)_{t\in [0,T]}$ satisfies
\begin{align}
	V_\tau(\varphi^1,\varphi^2)\geq -M(1+S_\tau),
\end{align}
for all $[0,T]$-valued stopping times $\tau$. This guarantees that $(\varphi_t^1+\varphi_t^2\SQ_t)_{t\in[0,T]}$ is an optional strong $\bQ$-supermartingale for all $(\bQ,\SQ)\in\CPS(0,T)$, see Proposition 3 of \cite{schachermayeradmissible}. Following the proof of Proposition 3 of \cite{schachermayeradmissible}, we apply the following conditional version of Fatou's lemma. Let $(X_n)_{n\in\bN}$ be a sequence of real-valued random variables on $(\Omega,\cF,\bQ)$ converging almost surely to $X$ and such that the negative parts $(X_n^-)_{n\in\bN}$ are uniformly $\bQ$-integrable. Then
\begin{align*}
\bE_\bQ\left[\liminf_{n\to\infty} X_n\mid\cG\right]\leq \liminf_{n\to\infty} \bE_\bQ\left[X_n\mid\cG\right].
\end{align*}
In our case, the family $\{(\varphi_\tau^1+\varphi_\tau^2\SQ_\tau)^-:\sigma\leq \tau\leq T\}$ is uniformly $\bQ$-integrable with respect to $\bQ$ for all $(\bQ,\SQ)\in\CPS(\sigma,T)$, as we have for $\sigma\leq \tau\leq T$
\begin{align*}
\varphi_\tau^1+\varphi_\tau^2\SQ_\tau\geq V_\tau^{liq}(\varphi^1,\varphi^2)\geq -M_\sigma^1-M_\sigma^2S_\tau,
\end{align*}
because $S_\tau\leq \frac{1}{1-\lambda}\SQ_\tau$ for any $(\bQ,\SQ)\in\CPS(0,T)$ and $\SQ$ is a $\bQ$-martingale, and $(M_\sigma^1,M_\sigma^2)\in L_+^1(\cF_\sigma,\cQ)\times L_+^\infty(\cF_\sigma,\cQ)$ by assumption.
%We get that
%\begin{align*}
%\bE_\bQ\left[\left(\varphi_\tau^1+\varphi_\tau^2\SQ_\tau\right)^-\mid \cF_\sigma\right]&\leq \bE_\bQ\left[V_\tau^{liq}(\varphi^1,\varphi^2)^-\mid\cF_\sigma\right]\leq \bE_\bQ\left[M_\sigma^1+M_\sigma^2S_\tau\mid\cF_\sigma\right]\\
%&\leq M_\sigma^1+M_\sigma^2\bE_\bQ\left[\frac{1}{1-\lambda}\SQ_\tau\mid\cF_\sigma\right]
%=M_\sigma^1+M_\sigma^2\frac{1}{1-\lambda}\SQ_\sigma.
%\end{align*}
Therefore, $(\varphi_t^1+\varphi_t^2\SQ_t)_{\sigma\leq t\leq T}$ is an optional strong $\bQ$-supermartingale on $\lsem \sigma,T\rsem$ for all $(\bQ,\SQ)\in\CPS(\sigma,T)$ and all trading strategies $\varphi=(\varphi_t^1,\varphi_t^2)_{\sigma\leq t\leq T}$ are admissible in a num\'eraire-free sense.
\end{remark}

\section{Asset price bubbles under proportional transaction costs}
\label{sec:bubbles}

The notion of an \emph{asset price bubble} consists of two components, namely, the market price of an asset and its \emph{fundamental value}. We assume that the market price is given by the price process $S$. For the fundamental value of an asset, we here follow the approach of \cite{herdegenschweizer} and define the fundamental value by means of the \emph{super-replication price} of the asset.\\
 In frictionless market models, it is equivalent to hold the asset or to have the (market) value of the asset in the money market account. This symmetry fails in the presence of transaction costs. A trader who wants to buy a share of the asset at time $t\in[0,T]$ has to pay $(1+\lambda)S_t$. A trader who wants to liquidate her position in the asset at time $t\in[0,T]$ only receives $(1-\lambda)S_t$ per share of the asset. Therefore, a natural question arises. 
Which position should we super-replicate in order to obtain a reasonable definition of the fundamental value in the presence of transaction costs? 

\begin{definition}
\label{def:bubble}
\label{def:fundamental}
The \emph{fundamental value} $F=(F_t)_{t\in[0,T]}$ of an asset $S$ at time $t\in[0,T]$ in a market model with proportional transaction costs $0<\lambda<1$ is defined by
\begin{align*}
F_t:=\essinf\left\{X_t\in L_+^1(\cF_t,\bQ_\loc):\ \exists \varphi \in\cV_{t,T}^\loc(X_t,\lambda) \text{ with }\varphi_t=(X_t,0)\text{ and }\varphi_T=(0,1)\right\}.
\end{align*}
We say there is an asset price bubble in the market model with transaction costs if $\bP(F_\sigma<(1+\lambda)S_\sigma)>0$ for some stopping time $\sigma$ with values in $[0,T]$. We define the \emph{asset price bubble} as the process $\beta=(\beta_t)_{0\leq t\leq T}$ given by 
\begin{align}
\label{ali:bubbledef}
\beta_{t}:=(1+\lambda)S_t-F_t,\quad t\in[0,T].
\end{align}
\end{definition} \noindent

\begin{remark}
\label{rem:guasoni}
	In Definition 4.2 of \cite{guasonifragility}, the authors provide a robust definition of an asset price bubble, which can also be interpreted as a bubble under proportional transaction costs. A difference with respect to Definition \ref{def:bubble} lies in the chosen specification of trading strategies. In \cite{guasonifragility}, in the worst case scenario the strategy begins in cash, but the initial capital is all in stock, or analogously, the strategy ends in cash, but the trader has to deliver one share of the asset.\\
		Specifying both components of the trading strategies in
		our model allows to consider strategies starting in cash and ending in a position in the stock only.
\end{remark}

\begin{proposition}
\label{prop:fundamentalbound}
Under Assumption \ref{assumption}, we have that the fundamental value $F=(F_t)_{t\in[0,T]}$ is such that 
\begin{align*}
F_t\leq (1+\lambda)S_t,\quad t\in [0,T],
\end{align*}
and $F_t\in L^1(\cF_t,\cQ_\loc)$, $t\in[0,T]$. Moreover, the bubble $\beta=(\beta_t)_{t\in[0,T]}$ has almost surely non-negative paths.
\end{proposition}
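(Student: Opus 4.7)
The plan is to establish all three claims simultaneously by exhibiting a single explicit super-replicating strategy that witnesses $(1+\lambda)S_t$ as an admissible choice of initial capital in the essential infimum defining $F_t$. Once $F_t\leq (1+\lambda)S_t$ holds a.s., the integrability statement $F_t\in L^1(\cF_t,\cQ_\loc)$ follows from Remark \ref{rem:l1} (which gives $(1+\lambda)S_t\in L^1_+(\cF_t,\cQ_\loc)$) combined with $F_t\geq 0$ (immediate from the essinf being taken over $L^1_+$-valued candidates), and the non-negativity $\beta_t=(1+\lambda)S_t-F_t\geq 0$ is then a tautological restatement of the first inequality.

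The strategy I would propose is the \emph{buy-and-hold-one-share} strategy initiated at time $t$: starting from $\varphi_t=((1+\lambda)S_t,0)$, execute a single jump at the predictable (deterministic) time $t$ purchasing exactly one share, and hold afterwards. That is,
\begin{align*}
\varphi^1_s=(1+\lambda)S_t\mathbbm{1}_{\{s=t\}},\qquad \varphi^2_s=\mathbbm{1}_{\{s>t\}}+0\cdot\mathbbm{1}_{\{s=t\}},\qquad s\in\lsem t,T\rsem.
\end{align*}
The self-financing condition \eqref{ali:diffform} is satisfied with equality at $t$, since the only activity is $\d\varphi^{2,\uparrow}=\delta_t$ of mass $1$ and $\d\varphi^1=-(1+\lambda)S_t\delta_t$; there is no activity after $t$. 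For admissibility in the num\'eraire-based sense on $\lsem t,T\rsem$, note that for any stopping time $\tau\in\lsem t,T\rsem$,
\begin{align*}
V^{liq}_\tau(\varphi)=\varphi^1_\tau+(\varphi^2_\tau)^+(1-\lambda)S_\tau\geq 0,
\end{align*}
because on $\{\tau=t\}$ the value is $(1+\lambda)S_t\geq 0$, and on $\{\tau>t\}$ the value is $(1-\lambda)S_\tau\geq 0$. Hence \eqref{ali:admnumbased} holds with the trivial bound $M_t\equiv 0\in L^1_+(\cF_t,\cQ_\loc)$, so $\varphi\in\cV^\loc_{t,T}((1+\lambda)S_t,\lambda)$.

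The only point requiring a little care is checking that this strategy fits into the framework of Definition \ref{def:strategies}: namely, that $\varphi^1,\varphi^2$ are $\bF$-predictable finite variation processes on $\lsem t,T\rsem$ with the required initial and terminal values. Predictability of the jump at the deterministic time $t$ is standard, and finite variation is clear as there is a single jump. The required integrability of the initial capital $(1+\lambda)S_t\in L^1_+(\cF_t,\cQ_\loc)$ is exactly the content of Remark \ref{rem:l1} under Assumption \ref{assumption}. With this candidate available, $(1+\lambda)S_t$ belongs to the set over which the essinf in Definition \ref{def:fundamental} is taken, yielding $F_t\leq (1+\lambda)S_t$ a.s., and the remaining two assertions follow as outlined above. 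I do not anticipate a genuine obstacle; the argument is essentially a check that the obvious "pay the ask and take delivery" strategy lies in the admissible class.
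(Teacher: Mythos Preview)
Your proposal is correct and follows essentially the same approach as the paper: exhibit the buy-and-hold strategy with initial endowment $(1+\lambda)S_t$ to place $(1+\lambda)S_t$ in the admissible set defining $F_t$, then invoke Remark \ref{rem:l1} for the integrability claim and read off the non-negativity of $\beta$. The paper's proof is a terse two-line version of what you wrote; you simply supply the explicit verification of self-financing, predictability, and num\'eraire-based admissibility that the paper leaves implicit.
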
 \noindent

\begin{proof}
Consider the buy and hold strategy starting at time $t\in[0,T]$. With an initial endowment $\varphi_t=((1+\lambda)S_t,0)$ it is possible to buy one share of the asset at time $t$ and keep it until time $T$. Then $F_t\leq (1+\lambda)S_t$ for all $t\in[0,T]$. Therefore, the bubble has almost surely non-negative paths. The fact that $F_\sigma \in L_+^1(\cF_\sigma,\cQ_\loc)$ follows by Remark \ref{rem:l1}.
\end{proof} \noindent
We now comment on Definition \ref{def:fundamental}, which could be interpreted as the fundamental value for the \emph{ask price}. Alternatively, we could consider to super-replicate the position $\varphi_T=((1-\lambda)S_T,0)$ which is the liquidation value of the asset $S$ at time $T$, instead, or $\varphi_T=((1+\lambda)S_T,0)$ which is the price one has to pay to buy the asset at time $T$. A trader who wants to super-replicate $((1-\lambda)S_T,0)$ is only interested in cash, namely, in the liquidation value of the asset. However, it is not possible to re-buy at $T$ a share of the asset at price $(1-\lambda)S_T$. On the other hand, a trader who wants to super-replicate $((1+\lambda)S_T,0)$ is actually interested in having the asset at $T$ in the portfolio. So, super-replicating $((1+\lambda)S_T,0)$ might be too expensive. Therefore, we consider the position $\varphi_T=(0,1)$ and its corresponding super-replication price as fundamental value. This corresponds to the price a trader is willing to pay if she had to hold the asset in her portfolio until the terminal time $T$, see \cite{jarrowasset}. Furthermore, this definition allows to model bubble birth, as in \cite{biagini2014shifting} and \cite{herdegenschweizer}, as shown in Example \ref{ex:bubblebirth}.\\

\subsection{Super-replication theorems and dual representation}

We now provide a dual representation for the fundamental value $F$, which allows to study further properties of $F$ and of the asset price bubble under transaction costs. To this purpose we extend some super-replication theorems.\\
In a frictionless market model there are well-known super-replication theorems which establish a dual representation, see e.g. \cite{quenez}, \cite{kramkovduality}. Analogously there are super-replication theorems for market models with proportional transaction costs to obtain a dual representation, see e.g. \cite{cvitanic}, \cite{kabanov99}, \cite{kabanovstricker}, \cite{kabanovlast}. We refer to the super-replication theorems of \cite{campischachermayer} and \cite{schachermayersuperreplication}. The formulations of Theorem 1.4 and Theorem 1.5 of \cite{schachermayersuperreplication} can be found in Appendix \ref{app:superreplication}.

\begin{proposition}
\label{cor:superrepcond}
Let Assumption \ref{assumption} hold. We consider an $\cF_T$-measurable contingent claim $X_T=(X_T^1,X_T^2)$ which pays $X_T^1$ many units of the bond and $X_T^2$ many units of the risky asset at time $T$. Let $X_\sigma\in L_+^1(\cF_\sigma,\cQ_\loc)$. If
\begin{align}
\label{eq:condsuperrepcond}
X_T^1-X_\sigma+\left(X_T^2\right)^+(1-\lambda)S_T-\left(X_T^2\right)^-(1+\lambda)S_T\geq -M_\sigma,
\end{align}
for some $\cF_\sigma$-measurable random variable $M_\sigma$ satisfying $\sup_{\bQ\in\cQ_\loc}\bE_\bQ[M_\sigma]<\infty$, then the following assertions are equivalent
\begin{enumerate}
\item \label{cor:item1}There is a self-financing trading strategy $\varphi$ on $\lsem \sigma,T\rsem$ with $\varphi_\sigma=(X_\sigma,0)$ and $\varphi_T=(X_T^1,X_T^2)$ which is admissible in a num\'eraire-based sense.
\item \label{cor:item2}For every $(\bQ,\SQ)\in\CPS_\loc(\sigma,T)$ we have
\begin{align}
\label{eq:superrepupperboundcond}
\bE_\bQ\left[X_T^1-X_\sigma+X_T^2\SQ_T\mid\cF_\sigma\right]\leq 0.
\end{align}
\end{enumerate}
\end{proposition}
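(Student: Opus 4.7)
The argument naturally breaks into two directions, with \eqref{cor:item1} $\Rightarrow$ \eqref{cor:item2} following immediately from Remark \ref{rem:integrability} and the converse reducing to the super-replication theorem of Appendix \ref{app:superreplication}.

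For \eqref{cor:item1} $\Rightarrow$ \eqref{cor:item2}, let $\varphi$ be an admissible self-financing strategy on $\lsem \sigma, T\rsem$ with $\varphi_\sigma = (X_\sigma, 0)$ and $\varphi_T = (X_T^1, X_T^2)$. Admissibility in the num\'eraire-based sense guarantees, via Remark \ref{rem:integrability}, that $(\varphi_t^1 + \varphi_t^2 \SQ_t)_{\sigma \leq t \leq T}$ is an optional strong $\bQ$-supermartingale for every $(\bQ, \SQ)\in \CPS_\loc(\sigma,T)$. Conditioning on $\cF_\sigma$ and substituting the initial and terminal values yields \eqref{eq:superrepupperboundcond}.

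For \eqref{cor:item2} $\Rightarrow$ \eqref{cor:item1}, I plan to reduce to a zero-initial-endowment problem and then invoke the appendix. Setting $\tilde{X}_T^1 := X_T^1 - X_\sigma$ and $\tilde{X}_T^2 := X_T^2$, condition \eqref{eq:condsuperrepcond} reads
\begin{align*}
\tilde{X}_T^1 + (\tilde{X}_T^2)^+(1-\lambda)S_T - (\tilde{X}_T^2)^-(1+\lambda)S_T \geq -M_\sigma,
\end{align*}
while the hypothesis becomes $\bE_\bQ[\tilde{X}_T^1 + \tilde{X}_T^2 \SQ_T \mid \cF_\sigma] \leq 0$ for every $(\bQ, \SQ) \in \CPS_\loc(\sigma, T)$. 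The num\'eraire-based super-replication theorem from Appendix \ref{app:superreplication}, an adaptation of Theorem 1.4 of \cite{schachermayersuperreplication} to $\lsem \sigma, T\rsem$ with an $\cF_\sigma$-measurable lower bound $M_\sigma$ satisfying $\sup_{\bQ\in\cQ_\loc} \bE_\bQ[M_\sigma] < \infty$, then supplies a self-financing admissible strategy $\tilde{\varphi}$ on $\lsem \sigma, T\rsem$ with $\tilde{\varphi}_\sigma = (0, 0)$, $\tilde{\varphi}_T = (\tilde{X}_T^1, \tilde{X}_T^2)$ and $V_\tau^{liq}(\tilde{\varphi}) \geq -M_\sigma$ for all $\lsem \sigma, T\rsem$-valued stopping times $\tau$.

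Finally, I would translate back by setting $\varphi_t := \tilde{\varphi}_t + (X_\sigma, 0)$ on $\lsem \sigma, T\rsem$. Shifting the bond account by the $\cF_\sigma$-measurable random variable $X_\sigma$ leaves the differential inequality \eqref{ali:diffform} unchanged, so $\varphi$ remains self-financing; it has the required endpoints and satisfies $V_\tau^{liq}(\varphi) = V_\tau^{liq}(\tilde{\varphi}) + X_\sigma \geq -M_\sigma$, hence is admissible in the num\'eraire-based sense in the sense of Definition \ref{def:strategies}. The main obstacle — establishing the closedness of the set of super-replicable claims in the presence of a random lower bound over a stochastic interval — is absorbed into the appendix statement; the work left in the present proof is only to set up the translation correctly and verify that the hypotheses (integrability of $M_\sigma$, admissibility of the shift) survive it.
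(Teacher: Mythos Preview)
Your direction \ref{cor:item1} $\Rightarrow$ \ref{cor:item2} matches the paper exactly.

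For \ref{cor:item2} $\Rightarrow$ \ref{cor:item1}, however, there is a genuine gap. You write that the appendix contains ``an adaptation of Theorem 1.4 of \cite{schachermayersuperreplication} to $\lsem \sigma,T\rsem$ with an $\cF_\sigma$-measurable lower bound $M_\sigma$,'' and you explicitly say the main obstacle is ``absorbed into the appendix statement.'' But look at Theorems \ref{thm:superrep} and \ref{thm:superreploc}: both are stated with a \emph{constant} lower bound $M>0$ and an \emph{unconditional} inequality $\bE_\bQ[X_T^1+X_T^2\SQ_T]\leq 0$. There is no random-bound version available to invoke, so your reduction outsources exactly the step that needs to be proved.

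The paper's way around this is a translation that you did not hit upon: rather than shifting only by $X_\sigma$, it sets
\[
\tilde X_T^1 := X_T^1 - X_\sigma + M_\sigma - \sup_{\bQ\in\cQ_\loc}\bE_\bQ[M_\sigma],\qquad \tilde X_T^2 := X_T^2.
\]
Adding $M_\sigma$ converts the random lower bound in \eqref{eq:condsuperrepcond} into the constant $-\sup_{\bQ}\bE_\bQ[M_\sigma]$, so the hypothesis of Theorem \ref{thm:superreploc} is now met; and since $\bE_\bQ[M_\sigma]-\sup_{\bQ'}\bE_{\bQ'}[M_\sigma]\leq 0$, taking $\bQ$-expectation of the conditional inequality \eqref{eq:superrepupperboundcond} still gives $\bE_\bQ[\tilde X_T^1+\tilde X_T^2\SQ_T]\leq 0$. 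The appendix theorem then yields a strategy $\tilde\varphi$, and one translates back via $\varphi_t^1:=\tilde\varphi_t^1+X_\sigma-M_\sigma+\sup_{\bQ}\bE_\bQ[M_\sigma]$. This extra $M_\sigma$-shift is precisely where the integrability hypothesis $\sup_{\bQ\in\cQ_\loc}\bE_\bQ[M_\sigma]<\infty$ is used; in your argument that hypothesis never actually enters.
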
\noindent

\begin{proof}
$\ref{cor:item1}\Rightarrow\ref{cor:item2}$ It is possible to apply Proposition 2 of \cite{schachermayeradmissible} although we consider the interval $\lsem \sigma,T\rsem$ and initial endowment $\varphi_\sigma=(X_\sigma,0)$, see Remark \ref{rem:integrability}. Then $(\varphi_t^1+\varphi_t^2\SQ_t)_{\sigma\leq t\leq T}$ is an optional strong supermartingale and thus
\begin{align*}
\bE_\bQ\left[X_T^1-X_\sigma+X_T^2\SQ_T\mid\cF_\sigma\right]=\bE_\bQ\left[\varphi_T^1-X_\sigma+\varphi_T^2\SQ_T\mid\cF_\sigma\right]\leq \varphi_\sigma^1-X_\sigma+\varphi_\sigma^2\SQ_\sigma=0.
\end{align*}
$\ref{cor:item2}\Rightarrow\ref{cor:item1}$ For
\begin{align*}
\tilde{X}_T:=\left(X_T^1-X_\sigma+M_\sigma-\sup_{\bQ\in\cQ}\bE_\bQ\left[M_\sigma\right],X_T^2\right),
\end{align*}
we have
\begin{align}
\label{ali:conditionthm14}
\tilde{X}_T^1+\left(\tilde{X}_T^2\right)^+(1-\lambda)S_T-\left(\tilde{X}_T^2\right)^-(1+\lambda)S_T\geq -\sup_{\bQ\in\cQ}\bE_\bQ\left[M_\sigma\right],
\end{align}
by equation \eqref{eq:condsuperrepcond}, and for $(\bQ,\SQ)\in\CPS(\sigma,T)$ we get
\begin{align}
\label{ali:conditionthm142}
\bE_\bQ\left[\tilde{X_T}+\tilde{X}_T^2\SQ_T\right]= \bE_\bQ\left[X_T^1-X_\sigma+X_T^2\SQ_T\right]+\bE_\bQ\left[M_\sigma\right]-\sup_{\bQ\in\cQ}\bE_\bQ\left[M_\sigma\right]\leq 0,
\end{align}
by equation \eqref{eq:superrepupperboundcond}. Thus we can apply Theorem \ref{thm:superrep} (Theorem 1.4 of \cite{schachermayersuperreplication}) which yields a strategy $\tilde{\varphi}$ with $\tilde{\varphi}\equiv 0$ on $\lsem 0,\sigma\rsem$ and $\tilde{\varphi}_T=\tilde{X}_T$ which is admissible in a num\'eraire-based sense on $[0,T]$. In particular, $\varphi=(\varphi^1,\varphi^2)$ defined by $\varphi_t^1:=\tilde{\varphi}_t^1+X_\sigma-M_\sigma+\sup_{\bQ\in\cQ}\bE_\bQ\left[M_\sigma\right]$ and $\varphi_t^2:=\tilde{\varphi_t}^2$ for $\sigma\leq t\leq T$, is an admissible strategy in the num\'eraire-based sense on $\lsem \sigma,T\rsem$ according to Definition \ref{def:strategies}.
\end{proof} \noindent
From Proposition \ref{cor:superrepcond} we obtain a duality representation for the fundamental value.

\begin{proposition}
\label{prop:duality}
Let Assumption \ref{assumption} hold. Then the fundamental value $F=(F_t)_{t\in[0,T]}$ of the asset $S$ at time $t\in[0,T]$ is given by
\begin{align}
\label{ali:propduality}
F_t=\esssup_{(\bQ,\SQ)\in\CPS_\loc(t,T,\lambda)}\bE_\bQ\left[\SQ_T\mid\cF_t\right],
\end{align}
for $t\in [0,T]$.
\end{proposition}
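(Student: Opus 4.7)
Set $V_t:=\esssup_{(\bQ,\SQ)\in\CPS_\loc(t,T)}\bE_\bQ[\SQ_T\mid\cF_t]$. The plan is to establish the two inequalities $F_t\geq V_t$ and $F_t\leq V_t$ separately, both via Proposition \ref{cor:superrepcond} applied to the terminal claim $X_T^1=0$, $X_T^2=1$ on the stochastic interval $\lsem t,T\rsem$.

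For $F_t\geq V_t$, I would pick any $X_t\in L_+^1(\cF_t,\cQ_\loc)$ for which there exists a strategy $\varphi\in\cV_{t,T}^\loc(X_t,\lambda)$ with $\varphi_t=(X_t,0)$ and $\varphi_T=(0,1)$. The forward direction (i)$\Rightarrow$(ii) of Proposition \ref{cor:superrepcond} then yields
\begin{equation*}
\bE_\bQ[\SQ_T\mid\cF_t]\leq X_t\quad\text{for every }(\bQ,\SQ)\in\CPS_\loc(t,T),
\end{equation*}
so taking the essential supremum on the left and then the essential infimum over admissible initial endowments on the right gives $V_t\leq F_t$.

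For the reverse direction $F_t\leq V_t$, I would show that $V_t$ itself qualifies as an admissible initial endowment by invoking the direction (ii)$\Rightarrow$(i) of Proposition \ref{cor:superrepcond} with $X_\sigma:=V_t$ and $M_\sigma:=V_t$. The dual inequality \eqref{eq:superrepupperboundcond} is immediate from the definition of the essential supremum, and the pathwise constraint \eqref{eq:condsuperrepcond} reduces, with $M_t=V_t$, to the trivial $(1-\lambda)S_T\geq 0$. What actually requires work are the two integrability requirements $V_t\in L_+^1(\cF_t,\cQ_\loc)$ and $\sup_{\bQ\in\cQ_\loc}\bE_\bQ[V_t]<\infty$. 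Both are reached from the pointwise bound $V_t\leq(1+\lambda)S_t$, which follows from the supermartingale property of the non-negative local martingale $\SQ$ under $\bQ$ (yielding $\bE_\bQ[\SQ_T\mid\cF_t]\leq\SQ_t$) combined with $\SQ_t\leq(1+\lambda)S_t$ from \eqref{eq:cps}. The first requirement then follows from Remark \ref{rem:l1}; for the second, I would combine $S_t\leq\SQ_t/(1-\lambda)$ with the supermartingale bound $\bE_\bQ[\SQ_t]\leq\SQ_0\leq(1+\lambda)S_0$ (using that $\cF_0$ is trivial) to get $\sup_{\bQ}\bE_\bQ[V_t]\leq(1+\lambda)^2 S_0/(1-\lambda)<\infty$. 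Proposition \ref{cor:superrepcond} then delivers the required strategy on $\lsem t,T\rsem$ from $(V_t,0)$ to $(0,1)$, placing $V_t$ in the set defining $F_t$ and giving $F_t\leq V_t$.

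The only delicate step is the uniform bound $\sup_\bQ\bE_\bQ[V_t]<\infty$, since $V_t$ is an essential supremum across a potentially large family of measures; the key observation that unlocks it is that the bid-ask constraint \eqref{eq:cps} together with the supermartingale property already forces the strong pointwise domination $V_t\leq (1+\lambda)S_t$, reducing the required control on $V_t$ to the already-available control on $S_t$ via Remark \ref{rem:l1}.
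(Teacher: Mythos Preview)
Your proposal is correct and follows essentially the same route as the paper: both invoke Proposition \ref{cor:superrepcond} with the claim $X_T=(0,1)$, and both reduce the key integrability check to the pointwise bound $V_t\leq(1+\lambda)S_t$ obtained from the supermartingale property and \eqref{eq:cps}. The paper packages the argument via the set identity \eqref{ali:proofduality1} and then shows $\essinf D_t=\esssup$, whereas you argue the two inequalities $F_t\gtrless V_t$ directly; in fact your version is slightly more careful, since you explicitly verify the uniform bound $\sup_{\bQ\in\cQ_\loc}\bE_\bQ[V_t]<\infty$ needed for the direction (ii)$\Rightarrow$(i) of Proposition \ref{cor:superrepcond}, a point the paper's set-equality formulation leaves implicit.
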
 \noindent

\begin{proof}
For $X_T=(0,1)$ and $X_t\in L_+^1(\cF_t,\cQ_\loc)$, condition \eqref{eq:condsuperrepcond} is satisfied and we get by Proposition \ref{cor:superrepcond} that
\begin{align}
\label{ali:proofduality1}
\begin{split}
&\left\{X_t\in L_+^1(\cF_t,\cQ_\loc):\ \exists \varphi \in\cV_{t,T}^\loc(X_t,\lambda) \text{ with }\varphi_t=(X_t,0)\text{ and }\varphi_T=(0,1)\right\}\\
=&\left\{X_t\in L_+^1(\cF_t,\cQ_\loc):\ \bE_\bQ\left[\SQ_T\mid\cF_t\right]\leq X_t\text{, for all }(\bQ,\SQ)\in \CPS_\loc(t,T,\lambda)\right\}=:D_t.
\end{split}
\end{align}
By Definition \ref{def:fundamental} and \eqref{ali:proofduality1} we have that
\begin{align*}
F_t=\essinf D_t,\quad t\in[0,T].
\end{align*}
It is left to show that
\begin{align}
\essinf D_t=\esssup_{(\bQ,\SQ)\in\CPS_\loc(t,T,\lambda)}\bE_\bQ\left[\SQ_T\mid\cF_t\right].
\end{align}
For the first direction ``$\leq$'' we note that $\esssup_{(\bQ,\SQ)\in\CPS_\loc(t,T,\lambda)}\bE_\bQ[\SQ_T\mid\cF_t]\in D_t$, where we used that $\esssup_{(\bQ,\SQ)\in\CPS_\loc(t,T,\lambda)}\bE_\bQ[\SQ_T\mid\cF_t]\leq (1+\lambda)S_t\in L^1(\cF_t,\cQ_\loc)$.\\
For the reverse direction ``$\geq$'' we have that $\essinf D_t\geq \bE_\bQ[\SQ_T\mid\cF_t]$ for all $(\bQ,\SQ)\in\CPS_\loc(t,T,\lambda)$ which implies by the definition of the essential supremum that $$\essinf D_t\geq \esssup_{(\bQ,\SQ)\in\CPS_\loc(t,T,\lambda)}\bE_\bQ[\SQ_T\mid\cF_t].$$
\end{proof} \noindent
Note that in the above proof $t\in[0,T]$ can be replaced by a stopping time $0\leq \sigma \leq T$.\\
In Proposition \ref{prop:duality} the essential supremum is taken over the set $\CPS_\loc(t,T,\lambda)$ which depends on the initial time $t$. In contrast, if we consider the frictionless case of \cite{herdegenschweizer} and assume that Theorem 3.2 from \cite{kramkovduality} applies, the fundamental value $S_\sigma^*$ of an asset $S$ at time $\sigma$ is given by
\begin{align*}
S_\sigma^*=\esssup_{\bQ \in\cM_\loc(S)}\bE_\bQ\left[S_T\mid\cF_\sigma\right],
\end{align*}
where $\cM_\loc(S)$ denotes the set of equivalent local martingale measures for $S$. The essential supremum is taken over all equivalent local martingale measure of $S$, independently of the initial time $\sigma$. We now show that a similar independence property also holds for the fundamental value under transaction costs, see Theorem \ref{thm:timeindependent}. In order to prove it, we need some preliminary results. We start with a local version of Lemma 6 and Corollary 3 of \cite{guasoniftap}. 

\begin{lemma}
\label{lem:local}
Let Assumption \ref{assumption} hold. For each stopping time $0\leq \sigma\leq T$ and each random variable $f\in L^1(\cF_\sigma,\bP)$ such that
\begin{align}
\label{eq:lemma61}
(1-\lambda)S_\sigma<f<(1+\lambda)S_\sigma,
\end{align}
and for each $\bar{\lambda}>\lambda$ there is an $\bar{\lambda}$-consistent local price system $(\check{\bQ},\check{S})\in\CPS_\loc(0,T,\bar{\lambda})$ with $\check{S}_\sigma=f$.
\end{lemma}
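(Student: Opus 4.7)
The strategy is to take an existing consistent local price system at a slightly smaller cost parameter $\lambda'\in (0,\lambda)$ (which exists by Assumption~\ref{assumption}) and apply a bounded density tilt that forces the shadow price at $\sigma$ to coincide with $f$. The slack $\bar\lambda-\lambda>0$ is exactly what allows the deformation to fit inside the widened bid--ask band.

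Fix $\lambda'\in(0,\lambda)$ to be chosen later and a pair $(\bQ^{(1)},\tilde S^{(1)})\in\CPS_\loc(0,T,\lambda')$ with $\bP$-density process $Z^{(1)}$. Since $(1-\lambda)S_\sigma<f<(1+\lambda)S_\sigma$ and $(1-\lambda')S_\sigma\leq\tilde S^{(1)}_\sigma\leq(1+\lambda')S_\sigma$, the random variable $Y:=\tilde S^{(1)}_\sigma/f$ is strictly positive and lies deterministically in $\left[\tfrac{1-\lambda'}{1+\lambda},\tfrac{1+\lambda'}{1-\lambda}\right]$. Set $g_t:=\bE_{\bQ^{(1)}}[Y\mid\cF_t]$, a strictly positive bounded $\bQ^{(1)}$-martingale with $g_t=Y$ for $t\geq\sigma$. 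Define
\[
\frac{\d\check\bQ}{\d\bP}:=\frac{g_T\,Z^{(1)}_T}{g_0},\qquad \check S_t:=\begin{cases}\tilde S^{(1)}_t/g_t,&0\leq t\leq\sigma,\\[2pt](f/\tilde S^{(1)}_\sigma)\,\tilde S^{(1)}_t,&\sigma<t\leq T.\end{cases}
\]
Then $\check\bQ\sim\bP$, and the two branches match at $\sigma$, giving a c\`adl\`ag adapted process $\check S$ with $\check S_\sigma=f$.

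The key identity $\check S_t\,g_t=\tilde S^{(1)}_t$ holds on all of $[0,T]$: by construction for $t\leq\sigma$, and for $t>\sigma$ since $g_t=Y$ and $Y\cdot f/\tilde S^{(1)}_\sigma=1$. Consequently $\check S\cdot gZ^{(1)}=\tilde S^{(1)} Z^{(1)}$ is a $\bP$-local martingale, which, the constant $g_0$ being irrelevant, is equivalent via the Bayes rule to $\check S$ being a $\check\bQ$-local martingale. For the $\bar\lambda$-band, plugging the bounds on $g$ and $\tilde S^{(1)}$ into both branches of $\check S$ yields
\[
\tfrac{(1-\lambda)(1-\lambda')}{1+\lambda'}\,S_t\;\leq\; \check S_t\;\leq\; \tfrac{(1+\lambda)(1+\lambda')}{1-\lambda'}\,S_t,\qquad t\in[0,T].
\]
As $\lambda'\downarrow 0$ these bounds converge to $(1\mp\lambda)S_t$, which sit strictly inside $[(1-\bar\lambda)S_t,(1+\bar\lambda)S_t]$ because $\bar\lambda>\lambda$. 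Choosing $\lambda'$ small enough (an explicit sufficient condition is $\lambda'\leq(\bar\lambda-\lambda)/(2+\bar\lambda+\lambda)$) then yields $(\check\bQ,\check S)\in\CPS_\loc(0,T,\bar\lambda)$ with $\check S_\sigma=f$.

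I expect the main obstacle to be producing a (local) martingale value $f$ at $\sigma$ while retaining two-sided band control: a direct candidate like $\bE_{\bQ^{(1)}}[f\mid\cF_t]$ on $[0,\sigma]$ fails to deliver a uniform lower bound when $\tilde S^{(1)}$ is a strict $\bQ^{(1)}$-local martingale, since conditional expectations of the dominating $S_\sigma$ can collapse far below $S_t$. The density tilt $g$ circumvents this exactly because the deformation of $\check S$ and the change of measure cancel in the product $\check S g=\tilde S^{(1)}$, so the local martingale structure is preserved automatically, while the enlargement $\bar\lambda>\lambda$ creates just enough room to absorb the multiplicative distortion in the band.
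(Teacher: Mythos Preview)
Your argument is correct and is genuinely different from the paper's. The paper follows the pattern of Lemma~6 in \cite{guasoniftap}: it works separately on $\lsem 0,\sigma\rsem$ and $\lsem\sigma,T\rsem$, and on $\lsem 0,\sigma\rsem$ obtains an \emph{additive} correction
\[
\bar S^{\bQ(\delta)}_\rho=\bE_{\bQ(\delta)}[f\mid\cF_\rho]+\tilde S_\rho(\delta)-\bE_{\bQ(\delta)}[\tilde S_\sigma(\delta)\mid\cF_\rho]
\]
as the limit of stopped approximations $\bE_{\bQ(\delta)}[f_n\mid\cF_\rho]$ along the localizing sequence $\tau_n=\inf\{t:S_t\ge n\}\wedge T$; the second piece is then built and glued via a density formula. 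Your construction is instead a single \emph{multiplicative} tilt: with $Y=\tilde S^{(1)}_\sigma/f$ bounded and $g=\bE_{\bQ^{(1)}}[Y\mid\cF_\cdot]$, the identity $\check S\,g=\tilde S^{(1)}$ holds globally, so that tilting the measure by $g$ and the shadow price by $1/g$ cancel exactly in the $\bP$-local martingale $\tilde S^{(1)}Z^{(1)}$, and no localization or limit is needed. This is shorter and sidesteps precisely the difficulty you flag (the naive candidate $\bE_{\bQ^{(1)}}[f\mid\cF_\cdot]$ losing the lower band when $\tilde S^{(1)}$ is a strict local martingale); the paper handles that same difficulty by adding back the supermartingale defect $\tilde S_\rho(\delta)-\bE_{\bQ(\delta)}[\tilde S_\sigma(\delta)\mid\cF_\rho]$. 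Both routes spend the slack $\bar\lambda>\lambda$ in the same way, namely to absorb the distortion coming from the auxiliary $\lambda'$- (resp.\ $\delta$-) system; your explicit choice $\lambda'\le(\bar\lambda-\lambda)/(2+\bar\lambda+\lambda)$ plays the role of the paper's condition $\delta+(1+\delta)(\lambda+\delta)/(1-\delta)<\bar\lambda$.
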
 \noindent

\begin{proof}
The proof is partially based\footnote{The main difference with respect to the proof of Lemma 6 of \cite{guasoniftap} is that we cannot use the martingale property of consistent price systems as in \cite{guasoniftap}, because we are now in the local setting. Hence we need some further technicalities.} on the proof of Lemma 6 of \cite{guasoniftap}. Consider the sequence of stopping time $(\tau_n)_{n\in\bN}$, where
\begin{align*}
\tau_n(\omega):=\inf\{t\geq 0\mid S_t(\omega)\geq n\}\wedge T.
\end{align*}
Note that $(\tau_n)_{n\in\bN}$ defines a localizing sequence for all $\lambda$-consistent local price systems as for $(\bQ,\SQ)\in \CPS(0,T,\lambda)$ we have 
\begin{align}
\label{ali:l1}
\SQ_t\leq (1+\lambda)S_t\leq(1+\lambda) n,
\end{align}
for all $0\leq t<\tau_n$ and hence Proposition 6.1 of \cite{schachermayersuperreplication} can be applied. Further, it holds that $\tau_n\uparrow T$ a.s. Fix $\bar{\lambda}>\lambda$. First we consider the interval $\lsem 0,\sigma\rsem$. Choose $\delta\leq\lambda$ such that $\delta+(1+\delta)(\lambda+\delta)/(1-\delta)<\bar{\lambda}$. By Assumption \ref{assumption} there exists $(\bQ(\delta),\tilde{S}(\delta))\in\CPS_\loc(0,\sigma,\delta)$ a $\delta$-consistent local price system on the interval $\lsem 0,\sigma\rsem$. We have
\begin{align}
\label{ali:prooflocal}
1-\delta\leq \frac{\tilde{S}_{\tau_n\wedge\sigma}(\delta)}{S_{\tau_n\wedge\sigma}}\leq 1+\delta.
\end{align}
For $n\in\bN$ define
\begin{align*}
f_n:=\begin{cases}
f\quad&\text{on }\{\tau_n\geq \sigma\},\\
\tilde{S}(\delta)_{\tau_n}\quad&\text{on }\{\tau_n<\sigma\}.
\end{cases}
\end{align*}
Hence by \eqref{ali:l1} we get $f_n\in L^1(\cF_{\tau_n\wedge\sigma},\bP)$, and for $h_n:=f_n/S_{\tau_n\wedge\sigma}$ we have
\begin{align*}
1-\lambda<h_n<1+\lambda,
\end{align*}
and
\begin{align*}
\left|\tilde{S}_{\tau_n\wedge\sigma}(\delta)-f_n\right|<(\lambda+\delta)S_{\tau_n\wedge\sigma}\leq \frac{\lambda+\delta}{1-\delta}\tilde{S}_{\tau_n\wedge\sigma}(\delta).
\end{align*}
This implies that $f_n\in L^1(\cF_{\tau_n\wedge\sigma},\bQ(\delta))$ as well as $f\in L^1(\cF_\sigma,\bQ(\delta))$ by \eqref{eq:lemma61} and the fact that 
\begin{align*}
f\leq (1+\lambda)S_\sigma\leq \frac{1+\lambda}{1-\lambda}\tilde{S}_\sigma(\delta).
\end{align*}
Consequently, for $\bSn_\rho:=\bE_{\bQ(\delta)}[f_n\mid\cF_\rho]$ and a stopping time $\rho$ with $0\leq \rho\leq (\tau_n\wedge\sigma)$,
\begin{align*}
\left|\bE_{\bQ(\delta)}\left[f_n\mid\cF_\rho\right]-\bE_{\bQ(\delta)}\left[\tilde{S}_{\tau_n\wedge\sigma}(\delta)\mid\cF_\rho\right]\right|
<\tilde{S}_{\rho}(\delta)\frac{\lambda+\delta}{1-\delta}\leq S_\rho\frac{(\lambda+\delta)(1+\delta)}{1-\delta},
\end{align*}
thus using \eqref{ali:prooflocal} we get
\begin{align}
\label{ali:prooflocal1}
(1-\bar{\lambda})S_\rho<\bSn_\rho<(1+\bar{\lambda})S_\rho.
\end{align}
We show that $\bSn_{\rho}$ converges almost surely to a random variable $\bS_\rho^{\bQ(\delta)}$ for all $0\leq \rho\leq \sigma$. We have that
\begin{align*}
\bE_{\bQ(\delta)}\left[f_n\mid\cF_\rho\right]=\bE_{\bQ(\delta)}\left[\mathds{1}_{\{\tau_n\geq \sigma\}}f\mid\cF_\rho\right]+\bE_{\bQ(\delta)}\left[\mathds{1}_{\{\tau_n<\sigma\}}\tilde{S}_{\tau_n}(\delta)\mid\cF_\rho\right].
\end{align*}
By the Theorem of Monotone Convergence it follows that
\begin{align}
\label{ali:conv3}
\bE_{\bQ(\delta)}\left[\mathds{1}_{\{\tau_n\geq \sigma\}}f\mid\cF_\rho\right]\overset{a.s.}{\longrightarrow}\bE_{\bQ(\delta)}\left[f\mid\cF_\rho\right].
\end{align}
For the second term we have
\begin{align*}
\bE_{\bQ(\delta)}\left[\mathds{1}_{\{\tau_n<\sigma\}}\tilde{S}_{\tau_n}(\delta)\mid\cF_\rho\right]&=\bE_{\bQ(\delta)}\left[\tilde{S}_{\tau_n\wedge \sigma}(\delta)\mid\cF_\rho\right]-\bE_{\bQ(\delta)}\left[\mathds{1}_{\{\tau_n\geq\sigma\}}\tilde{S}_{\tau_n\wedge\sigma}(\delta)\mid\cF_\rho\right]\\
&=\tilde{S}_{\tau_n\wedge \rho}(\delta)-\bE_{\bQ(\delta)}\left[\mathds{1}_{\{\tau_n\geq\sigma\}}\tilde{S}_{\sigma}(\delta)\mid\cF_\rho\right].
\end{align*}
Since $\mathds{1}_{\{\tau_n\geq \sigma\}}\leq \mathds{1}_{\{\tau_{n+1}\geq \sigma\}}$ for all $n\in\bN$, we can apply the Theorem of Monotone Convergence to conclude
\begin{align}
\label{ali:limit}
\bSn_\rho\overset{a.s.}{\longrightarrow} \bE_{\bQ(\delta)}\left[f\mid\cF_\rho\right]+\tilde{S}_\rho(\delta)-\bE_{\bQ(\delta)}\left[\tilde{S}_\sigma(\delta)\mid\cF_\rho\right]=:\bar{S}_\rho^{\bQ(\delta)}.
\end{align}
We define the process $\bar{S}^{\bQ(\delta)}=(\bar{S}_t^{\bQ(\delta)})_{0\leq t\leq \sigma}$ by \eqref{ali:limit}. Since $$\left(\bE_{\bQ(\delta)}\left[f\mid\cF_t\right]-\bE_{\bQ(\delta)}\left[\tilde{S}_\sigma(\delta)\mid\cF_t\right]\right)_{0\leq t\leq \sigma}$$ is a $\bQ(\delta)$-martingale, it admits a unique c\`adl\`ag modification. Further, $\tilde{S}(\delta)$ has a unique c\`adl\`ag modification. Therefore, $\bar{S}^{\bQ(\delta)}$ admits a unique c\`adl\`ag modification as a local $\bQ(\delta)$-martingale. By \eqref{ali:prooflocal1} $\bar{S}^{\bQ(\delta)}$ lies in the bid-ask spread for $\bar{\lambda}$ by construction. Thus $(\bQ(\delta),\bar{S}^{\bQ(\delta)})$ is a $\bar{\lambda}$-consistent local price system on $\lsem 0,\sigma\rsem$ satisfying $\bar{S}_\sigma^{\bQ(\delta)}=f$. With the same construction as in the proof of Lemma 6 of \cite{guasoniftap} we can now show the existence of a consistent local price system $(\widehat{\bQ},\widehat{S})\in\CPS_\loc(\sigma,T,\bar{\lambda})$ such that $\widehat{S}_\sigma=f$. We refer to \cite{thesis} for further details. We use this result to extend $(\bQ(\delta),\bar{S}^{\bQ(\delta)})$ to a consistent local price system $(\check{\bQ},\check{S})\in\CPS_\loc(0,T,\bar{\lambda})$ on the entire interval $[0,T]$.\\
%Let $(\bQ(\epsilon),\bar{S}(\epsilon))$ be a $\min\{\epsilon,\delta\}$-consistent local price system on $[0,T]$, where $\epsilon$ will vary later. Then
%
%
%\begin{align*}
%1-\epsilon\leq g(\epsilon):=\frac{\bar{S}_\sigma(\epsilon)}{S_\sigma}\leq 1+\epsilon,
%\end{align*}
%and for $h:=f/S_\sigma$ we have
%\begin{align*}
%1-\lambda<h<1+\lambda.
%\end{align*}
%We define, for $k\geq 1$, the $\cF_\sigma$-measurable events
%\begin{align}
%A_k^+&:=\{1+k\lambda/(k+1)>h\geq 1+(k-1)\lambda/k\},\\
%A_k^-&:=\{1-(k-1)\lambda/k>h\geq 1-k\lambda/(k+1)\}.
%\end{align}
%Now set $\widehat{\bQ}:=\sum_{k=1}^\infty\mathds{1}_{A_k^+\cup A_k^-}\bQ(\lambda/(9k+3))$. Further,
%\begin{align}
%\widehat{S}_u:=\sum_{k=1}^\infty\mathds{1}_{A_k^+\cup A_k^-}\frac{h}{g\left(\frac{\lambda}{9k+3}\right)}\bar{S}_u\left(\frac{\lambda}{9k+3}\right),\quad \sigma\leq u\leq T.
%\end{align}
%For $u\in\lsem\sigma,T\rsem$, $\widehat{S}_u$ is a.s. finite as $(A_k^+\cup A_k^-)_{k\in\bN}$ defines a partition of $\Omega\backslash N$ where $N\in\cF$ is some null-set. Moreover, $\widehat{S}_u$ is in $L^1(\cF_u,\bP)$ as it is bounded by $(1+\lambda)S_u$. The local martingale property of $(\widehat{S}_u)_{\sigma\leq u\leq T}$ on the interval $\lsem\sigma,T\rsem$ follows immediately since $A_k^{+},A_k^-$ are $\cF_\sigma$-measurable events and also $h/g(\epsilon)$ is $\cF_\sigma$-measurable. The fact that $\widehat{S}$ is a $\bar{\lambda}$-consistent local price system follows as in the proof of Lemma 6 of \cite{guasoniftap}. 
We now define $(\check{\bQ},\check{S})\in\CPS_\loc(0,T,\bar{\lambda}))$ which satisfies $\check{S}_\sigma=f$. Set $\check{\bQ}$ by 
\begin{align*}
\frac{\d\check{\bQ}}{\d\bP}:=\frac{\frac{\d\bQ(\delta)}{\d\bP}}{\bE_\bP\left[\frac{\d\widehat{\bQ}}{\d\bP}\mid\cF_\sigma\right]}\frac{\d\widehat{\bQ}}{\d\bP},
\end{align*}
and
\begin{align*}
\check{S}_t:=\begin{cases}
\bar{S}^{\bQ(\delta)}_t,\quad&\text{ for }0\leq t\leq \sigma\\ \widehat{S}_t,\quad&\text{ for }\sigma\leq t\leq T.
\end{cases}
\end{align*}
Then $(\check{\bQ},\check{S})\in\CPS_\loc(0,T,\bar{\lambda})$ and $\check{S}_\sigma=\widehat{S}_\sigma=\bar{S}^{\bQ(\delta)}_\sigma=f$.
\end{proof} \noindent

\begin{remark}
Note that in the case of a consistent price system in the non-local sense, Lemma \ref{lem:local} coincides with Lemma 6 of \cite{guasoniftap}.
\end{remark}\noindent
The following Corollary \ref{cor:localcps} can be proved in the same way as Corollary 3 in \cite{guasoniftap} because the construction does not use the martingale property of the consistent price systems.

\begin{corollary}
\label{cor:localcps}
Let Assumption \ref{assumption} hold. For any stopping time $0\leq \sigma\leq T$, probability measure $\bbQ\sim\bP|_{\cF_\sigma}$ on $\cF_\sigma$ and random variable $f\in L^1(\cF_\sigma,\bP)$ with
\begin{align*}
(1-\lambda)S_\sigma\leq f\leq (1+\lambda)S_\sigma,
\end{align*}
there exists a $\lambda$-consistent local price system $(\bQ,\SQ)\in\CPS_\loc(\sigma,T,\lambda)$ such that $\SQ_\sigma=f$ and $\bQ|_{\cF_\sigma}=\bbQ$.
\end{corollary}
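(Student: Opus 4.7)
My plan is to mimic the proof of Corollary~3 in \cite{guasoniftap}, exploiting the observation emphasized by the authors that the construction there relies only on the bid-ask-spread structure and not on the true-martingale property, and therefore adapts to the local case as soon as Lemma~\ref{lem:local} replaces its non-local analogue. First, I would reduce to the case of strict inequalities on $f$ by regularizing: for each $n\in\bN$, set $f_n := (1-\tfrac{1}{n}) f + \tfrac{1}{n} S_\sigma \in L^1(\cF_\sigma,\bP)$ and $\lambda'_n := \lambda(1-\tfrac{1}{2n}) < \lambda$. Using $(1-\lambda)S_\sigma\le f\le(1+\lambda)S_\sigma$ and $S_\sigma>0$, one checks directly that $(1-\lambda'_n) S_\sigma < f_n < (1+\lambda'_n) S_\sigma$ a.s., and clearly $f_n\to f$ a.s. Applying Lemma~\ref{lem:local} at base level $\lambda'_n$ (valid by Assumption~\ref{assumption} since $\lambda'_n\le\lambda$) with $\bar{\lambda}=\lambda$ then yields, for every $n$, a $\lambda$-consistent local price system $(\bQ^n, \SQn) \in \CPS_\loc(0, T, \lambda)$ with $\SQn_\sigma = f_n$.

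Next, since all $\SQn$ are trapped in the uniform tube $(1-\lambda)S \le \SQn \le (1+\lambda)S$, I would extract a subsequential limit $(\tilde{\bQ}, \tilde{S}) \in \CPS_\loc(\sigma, T, \lambda)$ satisfying $\tilde{S}_\sigma = f$---for example via a Komlos-type argument on the densities $\d\bQ^n/\d\bP$ combined with Fatou's lemma for the local supermartingale property, or by a direct pointwise construction mirroring the one used in the proof of Lemma~\ref{lem:local} to define $\bar{S}^{\bQ(\delta)}$. Finally, to match the prescribed measure on $\cF_\sigma$, I would set
\begin{align*}
\frac{\d\bQ}{\d\tilde{\bQ}} := \left.\frac{\d\bbQ}{\d\tilde{\bQ}}\right|_{\cF_\sigma},
\end{align*}
which is $\cF_\sigma$-measurable and, by a standard Bayes argument, yields $\bQ|_{\cF_\sigma}=\bbQ$ and $\bQ\sim\tilde{\bQ}$. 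Because the corresponding density process is constant on $[\sigma,T]$, the local $\tilde{\bQ}$-martingale property of $\tilde{S}$ transfers to $\bQ$ on $[\sigma,T]$, while the bid-ask-spread condition is measure-independent; hence $(\bQ,\tilde{S}) \in \CPS_\loc(\sigma, T, \lambda)$ is the consistent local price system we are after.

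The hard part will be the limiting argument: a priori the $\SQn$ are local martingales under distinct measures $\bQ^n$, so preserving the local-martingale property as $n\to\infty$ requires care. I expect this can be handled along the same lines as Lemma~\ref{lem:local}, where $\bar{S}^{\bQ(\delta)}$ is constructed explicitly via conditional expectations so that a.s. convergence is compatible with the martingale structure, combined with weak $L^1$-compactness of the density sequence $\d\bQ^n/\d\bP$.
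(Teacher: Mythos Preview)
Your proposal is correct and follows essentially the same approach as the paper, which simply refers to Corollary~3 of \cite{guasoniftap} and observes that its construction carries over verbatim to the local setting once Lemma~\ref{lem:local} replaces Lemma~6 there. Your outline---regularize $f$ to strict inequalities at a slightly smaller level $\lambda'_n<\lambda$, invoke Lemma~\ref{lem:local} with $\bar{\lambda}=\lambda$, pass to the limit, and finally adjust the measure on $\cF_\sigma$ via the $\cF_\sigma$-measurable density $\d\bbQ/\d\tilde{\bQ}|_{\cF_\sigma}$---is exactly that construction, and your identification of the limiting step as the point requiring care is accurate; as the paper notes, that step does not rely on the true-martingale property and hence transfers unchanged.
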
\noindent
We can now show time independence for the essential supremum in the definition of the fundamental value.

\begin{theorem}
\label{thm:timeindependent}
Under Assumption \ref{assumption} the following identity holds:
\begin{align*}
\esssup_{(\bQ,\SQ)\in \CPS_{\loc}(\sigma,T)}\bE_\bQ\left[\SQ_T\mid\cF_\sigma\right]=\esssup_{(\bQ,\SQ)\in \CPS_{\loc}(0,T)}\bE_\bQ\left[\SQ_T\mid\cF_\sigma\right].
\end{align*}
\end{theorem}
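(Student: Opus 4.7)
The direction $\esssup_{(\bQ,\SQ)\in\CPS_\loc(\sigma,T)}\bE_\bQ[\SQ_T\mid\cF_\sigma]\geq \esssup_{(\bQ,\SQ)\in\CPS_\loc(0,T)}\bE_\bQ[\SQ_T\mid\cF_\sigma]$ is immediate by restriction: any $(\bQ,\SQ)\in\CPS_\loc(0,T)$ restricts to an element of $\CPS_\loc(\sigma,T)$ with the same value of $\bE_\bQ[\SQ_T\mid\cF_\sigma]$, since the conditional expectation only depends on the behaviour after $\sigma$.

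For the reverse inequality, the plan is to show that every $(\bQ,\SQ)\in\CPS_\loc(\sigma,T)$ can be extended to a consistent local price system $(\bQ',\tilde S')\in\CPS_\loc(0,T)$ that agrees with $(\bQ,\SQ)$ on $\lsem \sigma,T\rsem$; then $\bE_{\bQ'}[\tilde S'_T\mid\cF_\sigma]=\bE_\bQ[\SQ_T\mid\cF_\sigma]$, so the right-hand essential supremum dominates. Setting $f:=\SQ_\sigma$, the construction on $\lsem 0,\sigma\rsem$ is exactly the backward extension produced in the proof of Lemma \ref{lem:local}: starting from a $\lambda'$-CPS furnished by Assumption \ref{assumption} for sufficiently small $\lambda'<\lambda$, one builds a $\lambda$-consistent local price system $(\bar\bQ,\bar S)$ on $\lsem 0,\sigma\rsem$ ending at $\bar S_\sigma=f$.

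Once $(\bar\bQ,\bar S)$ is in hand, one glues the two systems in the same manner as at the end of the proof of Lemma \ref{lem:local}. Define the process $\tilde S'_t:=\bar S_t$ for $t<\sigma$ and $\tilde S'_t:=\SQ_t$ for $t\geq \sigma$, which is continuous at $\sigma$ because $\bar S_\sigma=\SQ_\sigma$. Set
\[
\frac{\d\bQ'}{\d\bP}:=\frac{\d\bar\bQ}{\d\bP}\bigg|_{\cF_\sigma}\cdot\frac{\d\bQ/\d\bP}{\bE_\bP[\d\bQ/\d\bP\mid\cF_\sigma]},
\]
so that $\bQ'|_{\cF_\sigma}=\bar\bQ|_{\cF_\sigma}$ and the conditional law under $\bQ'$ given $\cF_\sigma$ coincides with that of $\bQ$. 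Standard checks then give $\bQ'\sim\bP$, the bid--ask condition $(1-\lambda)S\leq \tilde S'\leq (1+\lambda)S$ on $[0,T]$ (both pieces satisfy it by construction), and that $\tilde S'$ is a local $\bQ'$-martingale on $[0,T]$ (on $\lsem 0,\sigma\rsem$ by invariance of laws under $\bQ'\vert_{\cF_\sigma}=\bar\bQ\vert_{\cF_\sigma}$, on $\lsem \sigma,T\rsem$ by the matching conditional law, and globally by the continuity at $\sigma$).

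The main obstacle is the boundary case: Lemma \ref{lem:local} requires the strict inequality $(1-\lambda)S_\sigma<f<(1+\lambda)S_\sigma$, whereas $\SQ_\sigma$ may touch the endpoints of the bid--ask spread. I would handle this by an approximation argument, replacing $f$ by $f_n:=(1-\tfrac{1}{n})\SQ_\sigma+\tfrac{1}{n}S_\sigma$, which lies strictly inside the spread, applying Lemma \ref{lem:local} to $f_n$ to produce CPSs on $[0,T]$, and passing to the limit in the essential supremum via Fatou's lemma together with the uniform upper bound $\tilde S'_T\leq (1+\lambda)S_T$ and integrability of $S_T$ from Remark \ref{rem:l1}.
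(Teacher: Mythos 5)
Your overall strategy --- one direction by restriction, the other by a backward extension of the given $(\bQ,\SQ)\in\CPS_\loc(\sigma,T)$ to $[0,\sigma]$ followed by a measure-gluing --- is the paper's strategy, and the gluing formula for $\bQ'$ matches the one used there. The gap is in how you propose to handle the boundary case $\SQ_\sigma\in\{(1\pm\lambda)S_\sigma\}$, and this gap is genuine.

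You suggest replacing $f=\SQ_\sigma$ by $f_n=(1-\tfrac{1}{n})\SQ_\sigma+\tfrac{1}{n}S_\sigma$ and then ``applying Lemma~\ref{lem:local} to $f_n$ to produce CPSs on $[0,T]$'' and passing to the limit by Fatou. But Lemma~\ref{lem:local} only controls the value of the constructed system at time $\sigma$, not its behaviour on $\lsem\sigma,T\rsem$. The full-interval CPS $(\check\bQ^n,\check S^n)$ it produces has $\check S^n_\sigma=f_n$ but is otherwise unrelated to $(\bQ,\SQ)$, so $\bE_{\check\bQ^n}[\check S^n_T\mid\cF_\sigma]$ has no reason to approximate $\bE_\bQ[\SQ_T\mid\cF_\sigma]$; Fatou cannot rescue an approximation that does not converge to the right target. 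If instead you keep your gluing and take $\bar S^n$ ending at $f_n$ on $\lsem 0,\sigma\rsem$ and $\SQ$ on $\lsem\sigma,T\rsem$, the glued process acquires an $\cF_\sigma$-measurable jump of size $\SQ_\sigma-f_n=\tfrac{1}{n}(\SQ_\sigma-S_\sigma)$ at $\sigma$; the conditional expectation of this jump given $\cF_{\sigma-}$ is generically nonzero, so the glued process fails to be a $\bQ'$-local martingale, and you no longer have a consistent local price system.

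The paper resolves this differently, and the difference is essential: instead of perturbing the single random variable $f$, it perturbs the CPS $(\bQ,\SQ)$ itself on the entire interval $\lsem\sigma,T\rsem$. On the $\cF_\sigma$-measurable ``safe'' set $C_\sigma^n=\{(1-\lambda_n)S_\sigma\leq\SQ_\sigma\leq(1+\lambda_n)S_\sigma\}$ (with $\lambda_n<\lambda$) it keeps $(\bQ,\SQ)$, and on the complement it splices in, at the level of the associated $\bP$-(local) martingales $Z^i$, a fresh local price system $(\bQ^n,\tilde S^n)\in\CPS_\loc(\sigma,T,\lambda_n)$ furnished by Corollary~\ref{cor:localcps} with $\tilde S^n_\sigma=(1+\lambda_n)S_\sigma$. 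The resulting $(\widehat\bQ^n,\widehat S^n)\in\CPS_\loc(\sigma,T,\lambda)$ then has $\widehat S^n_\sigma$ strictly inside the $\lambda$-spread, the backward extension via Lemma~\ref{lem:local} produces a matching value at $\sigma$ so the gluing is seamless (no spurious jump), and $\bE_{\widehat\bQ^n}[\widehat S^n_T\mid\cF_\sigma]=\bE_\bQ[\SQ_T\mid\cF_\sigma]$ on $C_\sigma^n$, which suffices after taking the essential supremum over $n$. To repair your argument you would need to carry out this surgery on the pair $(\bQ,\SQ)$ --- not on $f$ alone --- using Corollary~\ref{cor:localcps} as the source of the replacement system.
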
 \noindent

\begin{proof}
$1)$ If $(\bQ,\SQ)\in\CPS_\loc(0,T)$, then $(\bQ,\SQ|_{\lsem \sigma,T\rsem})\in\CPS_\loc(\sigma,T)$. Thus, $\CPS_{\loc}(0,T)\subseteq \CPS_{\loc}(\sigma,T)$ we immediately get that
\begin{align*}
\esssup_{(\bQ,\SQ)\in \CPS_{\loc}(\sigma,T)}\bE_\bQ\left[\SQ_T\mid\cF_\sigma\right]\geq \esssup_{(\bQ,\SQ)\in \CPS_{\loc}(0,T)}\bE_\bQ\left[\SQ_T\mid\cF_\sigma\right].
\end{align*}
$2)$ Let $(\lambda_n)_{n\in\bN}\subset (0,1)$ be a sequence such that $\lambda_n\uparrow 1$ as $n$ tends to infinity. Fix an arbitrary $(\bQ,\SQ)\in \SCPS_{\loc}(\sigma,T,\lambda)$ with associated (local) $\bP$-martingales $Z^1$, $Z^2$ as in Proposition \ref{prop:pmartingale}. Then this $(\bQ,\SQ)\in \SCPS_{\loc}(\sigma,T,\lambda)$ can be approximated by a sequence $(\hQn,\hSn)_{n\in\bN}\subset \SCPS_\loc(\sigma,T,\lambda)$ satisfying 
\begin{align}
\label{ali:timeindcond}
(1-\lambda_n)S_\sigma\leq \hSn_\sigma\leq (1+\lambda_n)S_\sigma,	
\end{align}
for each $n\in\bN$ as follows. Define the set $C_\sigma^n\in\cF_\sigma$ by
\begin{align*}
	C_\sigma^n:=\left\{\omega\in\Omega:(1-\lambda_n)S_\sigma(\omega)\leq\SQ_\sigma(\omega)\leq (1+\lambda_n)S_\sigma(\omega)\right\},\quad n\in\bN.
\end{align*}
By Corollary \ref{cor:localcps} there exists for each $n\in\bN$ a consistent local price system $(\bQ^n,\tSn)\in\CPS_\loc(\sigma,T,\lambda_n)$ with associated (local) $\bP$-martingales $Z^{1,n}$, $Z^{2,n}$ such that $\tSn_\sigma=(1+\lambda_n)S_\sigma$. We define $(\hQn,\hSn)\in\SCPS_\loc(\sigma,T,\lambda)$ by its associated (local) $\bP$-martingales by
\begin{align*}
	\widehat{Z}_t^{i,n}:=\mathds{1}_{C_\sigma^n}Z_t^i+\mathds{1}_{(C_\sigma^n)^c}Z_t^{i,n},\quad t\in\lsem\sigma,T\rsem,\ i=1,2,\ n\in\bN.
\end{align*}
Then $(\hQn,\hSn)\in\SCPS_\loc(\sigma,T,\lambda)$ satisfies \eqref{ali:timeindcond} for all $n\in\bN$ and $\widehat{Z}^{i,n}$ converges to $Z^i$, $i=1,2$ for $n$ going to infinity. We now show that for each $n\in\bN$ we can construct a pair $(\bbQ^n,\bSn)\in\CPS_\loc(0,T,\lambda)$ with
\begin{align}
\label{ali:cpscompleteintervalapprox}
	\bE_{\bbQ^n}\left[\bSn_T\mid\cF_\sigma\right]=\bE_{\widehat\bQ^n}\left[\hSn_T\mid\cF_\sigma\right],\quad n\in\bN.
\end{align}
Fix $n\in\bN$. By Lemma \ref{lem:local} and Corollary \ref{cor:localcps} there exists $(\widehat\bQ,\hSQ)\in\CPS_\loc(0,T,\lambda)$ such that  $\hSQ_\sigma=\hSn_\sigma$. Let $\widehat{Z}^1,\widehat{Z}^2$ be the associated (local) $\bP$-martingales as in Proposition \ref{prop:pmartingale}. We define another $\lambda$-consistent local price system $(\bar{\bQ},\bSQ)\in\CPS_\loc(0,T,\lambda)$ by
\begin{align*}
\bar{Z}_t^i:=\begin{cases}
\widehat{Z}_t^i,&\quad 0\leq t\leq \sigma,\\
\widehat{Z}_t^{i,n}\frac{\widehat{Z}_\sigma^i}{\widehat{Z}_\sigma^{i,n}},&\quad \sigma\leq t\leq T,
\end{cases}
\end{align*}
for $i=1,2$. By construction it then holds that $\bSQ_t=\hSn_t$ for all $t\geq \sigma$ and 
\begin{align*}
\bE_{\widehat{\bQ}^n}\left[\hSn_T\mid\cF_\sigma\right]&=\bE_\bP\left[\widehat{Z}_T^{2,n}\mid\cF_\sigma\right]\left(\widehat{Z}_\sigma^{1,n}\right)^{-1}=\bE_\bP\left[\widehat{Z}_T^{2,n}\frac{\widehat{Z}_\sigma^2}{\widehat{Z}_\sigma^{2,n}}\mid\cF_\sigma\right]\left(\widehat{Z}_\sigma^{1,n}\right)^{-1} \frac{\widehat{Z}_\sigma^1}{\widehat{Z}_\sigma^1}\frac{\widehat{Z}_\sigma^{2,n}}{\widehat{Z}_\sigma^2}\\
&=\bE_\bP\left[\bar{Z}_T^2\mid\cF_\sigma\right]\frac{1}{\widehat{Z}_\sigma^{1,n}}\frac{\widehat{Z}_\sigma^1}{\widehat{Z}_\sigma^1}\frac{\widehat{Z}_\sigma^{2,n}}{\widehat{Z}_\sigma^2}=\bE_\bP\left[\bar{Z}_T^2\mid\cF_\sigma\right](\bar{Z}_\sigma^1)^{-1}\frac{\hSn_\sigma}{\hSQ_\sigma}\\
&=\bE_\bP\left[\bar{Z}_T^2\mid\cF_\sigma\right]\left(\bar{Z}_\sigma^1\right)^{-1}
=\bE_{\bar{\bQ}}\left[\bSQ_T\mid\cF_\sigma\right].
\end{align*}
By \eqref{ali:cpscompleteintervalapprox} and because $\widehat Z^{i,n}$ converges to $Z^i$, $i=1,2$, as $n$ tends to infinity, we get that
\begin{align}
\label{ali:timeindapproxcond}
	\esssup_{(\bQ_0,\tilde{S}^{\bQ_0})\in\CPS_\loc(0,T,\lambda)}\bE_{\bQ_0}[\tilde{S}^{\bQ_0}_T\mid\cF_\sigma]\geq\esssup_{n\in \bN}\bE_{\bar\bQ^n}\left[\bSn_T\mid\cF_\sigma\right]\geq \bE_\bQ\left[\SQ_T\mid\cF_\sigma\right].
\end{align}
Since $(\bQ,\SQ)\in\SCPS_\loc(\sigma,T,\lambda)$ was arbitrary, we can take the essential supremum over all $(\bQ,\SQ)\in\SCPS_\loc(\sigma,T,\lambda)$ (resp. $(\bQ,\SQ)\in\CPS_\loc(\sigma,T,\lambda)$) on the right-hand side and conclude that
\begin{align*}
	\esssup_{(\bQ,\SQ)\in\CPS_\loc(0,T,\lambda)}\bE_\bQ\left[\SQ_T\mid\cF_\sigma\right]\geq\esssup_{(\bQ,\SQ)\in\CPS_\loc(\sigma,T,\lambda)}\bE_\bQ\left[\SQ_T\mid\cF_\sigma\right].
\end{align*}
Note that the essential supremum over $\CPS_\loc(\sigma,T,\lambda)$ is equal to the essential supremum over $\SCPS_\loc(\sigma,T,\lambda)$.
\end{proof} \noindent

\subsection{Properties of the fundamental value and asset price bubbles}
\label{subsec:properties}

In this section we study some basic properties of the fundamental value and of asset price bubbles in our setting.

\begin{lemma}
\label{lem:unique}
The fundamental value $F=(F_t)_{t\in [0,T]}$ is an adapted stochastic process, which is unique to within evanescent processes.
\end{lemma}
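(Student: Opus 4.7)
The plan is a short two-step argument that rests almost entirely on the dual representation already established.

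The first step is adaptedness. By Proposition~\ref{prop:duality}, for every $t \in [0,T]$ we may write
\[
F_t = \esssup_{(\bQ,\SQ) \in \CPS_\loc(t,T,\lambda)} \bE_\bQ\!\left[\SQ_T \mid \cF_t\right].
\]
Each candidate $\bE_\bQ[\SQ_T \mid \cF_t]$ is $\cF_t$-measurable by construction of conditional expectation, and the essential supremum of an arbitrary family of $\cF_t$-measurable random variables is itself $\cF_t$-measurable (modulo choice of a representative). Hence $F_t \in L^0(\cF_t,\bP)$ for every $t$, so that the process $F = (F_t)_{t \in [0,T]}$ is adapted.

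The second step is uniqueness. Since $F_t$ is defined as an essential infimum in Definition~\ref{def:fundamental}, it is unique up to $\bP$-null sets at every fixed $t$; equivalently, the essential supremum in the dual representation is. Hence any two processes $F, F'$ arising from Definition~\ref{def:fundamental} agree $\bP$-almost surely at each fixed $t$, which is uniqueness up to modifications. To upgrade this to uniqueness within evanescent processes I would invoke Theorem~\ref{thm:timeindependent}: because the defining index set $\CPS_\loc(0,T,\lambda)$ can be chosen independently of $t$, and because $\CPS_\loc(0,T,\lambda)$ is stable under pasting of consistent local price systems (so the induced family of conditional expectations is upward directed), there exists a countable subfamily $(\bQ_n,\SQ_n)_{n\in\bN}$ that realises the essential supremum simultaneously for every $t$. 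Since each $\bE_{\bQ_n}[\SQ_{n,T}\mid \cF_t]$ is determined uniquely up to a single $\bP$-null set, we can bundle the countably many exceptional sets into one $\bP$-null set $N$ outside of which any two versions of $F$ coincide for all $t$, which is precisely uniqueness up to evanescence.

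I expect the principal subtlety to be the passage from pointwise-in-$t$ uniqueness to evanescence; adaptedness is essentially immediate from the dual representation, whereas the evanescence statement genuinely uses the time-independent reference family from Theorem~\ref{thm:timeindependent} together with the lattice property of the candidate consistent local price systems.
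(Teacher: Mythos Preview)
Your adaptedness argument is fine and matches the paper's. The uniqueness argument, however, contains a genuine gap. Upward directedness of the family $\{\bE_\bQ[\SQ_T\mid\cF_t]:(\bQ,\SQ)\in\CPS_\loc(0,T,\lambda)\}$ does give, for each fixed $t$, a countable realising sequence; but that sequence depends on $t$, and there is no reason why a \emph{single} countable subfamily should realise the essential supremum simultaneously for the uncountably many $t\in[0,T]$. The time-independence of the index set from Theorem~\ref{thm:timeindependent} does not help here: the index set being the same for all $t$ does not force the countable approximating subset to be the same. So the step ``bundle the countably many exceptional sets into one $\bP$-null set'' is not justified as written.

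The paper closes this gap differently. It observes that Proposition~\ref{cor:superrepcond} and Theorem~\ref{thm:timeindependent} hold not only for deterministic times but for \emph{all} stopping times $\sigma$, so that the essential supremum characterisation pins down $F_\sigma$ uniquely up to a $\bP$-null set for every stopping time $\sigma$. Any two optional versions $F$, $F'$ then satisfy $F_\sigma=F'_\sigma$ a.s.\ for every stopping time, and the Optional Cross-Section Theorem (Theorem~86, Chapter~IV of Dellacherie--Meyer) forces $\{F\neq F'\}$ to be evanescent. In short, the route to evanescence is via section theorems over stopping times rather than via a countable family chosen uniformly in $t$.
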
 \noindent

\begin{proof}
By Proposition \ref{prop:duality} we obtain that the fundamental value $F$ is given by 
\begin{align*}
	F_t=\esssup_{(\bQ,\SQ)\in\CPS_\loc(0,T,\lambda)}\bE_\bQ\left[\SQ_T\mid\cF_t\right],\quad t\in[0,T].
\end{align*}
By Theorem A.33 of \cite{follmerschied} we obtain that $F$ is an adapted process since all the probability measures $\bQ\in\cQ_\loc$ are equivalent to $\bP$. As Proposition \ref{cor:superrepcond} and Theorem \ref{thm:timeindependent} hold for all stopping times $0\leq \sigma \leq T$, $F$ is unique to within an evanescent set because of the Optional Cross-Section Theorem, see Theorem 86 in Chapter IV of \cite{dellacheriea}.
\end{proof} \noindent

\begin{proposition}
\label{lem:characterization}
Let Assumption \ref{assumption2} hold. Then we have for all stopping times $0\leq \sigma \leq T$ that
\begin{align*}
\esssup_{(\bQ,\SQ)\in\CPS(0,T,\lambda)}\bE_\bQ\left[\SQ_T\mid\cF_\sigma\right]=(1+\lambda)S_\sigma.
\end{align*}
In particular, there is no asset price bubble in the market model.
\end{proposition}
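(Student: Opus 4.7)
The inequality ``$\leq$'' is essentially free: for any $(\bQ,\SQ)\in\CPS(0,T,\lambda)$, $\SQ$ is a true $\bQ$-martingale on $[0,T]$, so that
\begin{align*}
\bE_\bQ[\SQ_T\mid\cF_\sigma]=\SQ_\sigma\leq (1+\lambda)S_\sigma
\end{align*}
by \eqref{eq:cps}, and this bound survives the essential supremum.

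For the reverse inequality, the plan is to exhibit a family of non-local consistent price systems whose values at time $\sigma$ approximate $(1+\lambda)S_\sigma$ from below. Fix any sequence $\lambda_n\uparrow\lambda$ and, for each $n$, an intermediate $\lambda_n''\in(\lambda_n,\lambda)$. The random variable $f_n:=(1+\lambda_n)S_\sigma$ belongs to $L^1(\cF_\sigma,\bP)$ and satisfies strictly $(1-\lambda_n'')S_\sigma<f_n<(1+\lambda_n'')S_\sigma$. Under Assumption \ref{assumption2} a $\lambda_n''$-consistent price system exists, and the non-local analogue of Lemma \ref{lem:local}, namely Lemma 6 of \cite{guasoniftap} (cf.\ the remark following Lemma \ref{lem:local}), applied with $\bar\lambda=\lambda$ yields $(\bQn,\tSn)\in\CPS(0,T,\lambda)$ with $\tSn_\sigma=f_n$. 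The true martingale property then gives $\bE_{\bQn}[\tSn_T\mid\cF_\sigma]=\tSn_\sigma=(1+\lambda_n)S_\sigma$, and taking the essential supremum (in fact the monotone limit) over $n$ produces $(1+\lambda)S_\sigma$ from below.

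For the ``no bubble'' conclusion I would combine this identity with the duality of Proposition \ref{prop:duality} and the monotonicity of the essential supremum along the chain $\CPS(0,T,\lambda)\subseteq\CPS_\loc(0,T,\lambda)\subseteq\CPS_\loc(\sigma,T,\lambda)$ (the last inclusion by restriction to $\lsem\sigma,T\rsem$). This yields
\begin{align*}
F_\sigma=\esssup_{(\bQ,\SQ)\in\CPS_\loc(\sigma,T,\lambda)}\bE_\bQ[\SQ_T\mid\cF_\sigma]\geq \esssup_{(\bQ,\SQ)\in\CPS(0,T,\lambda)}\bE_\bQ[\SQ_T\mid\cF_\sigma]=(1+\lambda)S_\sigma,
\end{align*}
which together with Proposition \ref{prop:fundamentalbound} forces $F_\sigma=(1+\lambda)S_\sigma$ a.s., so $\beta_\sigma\equiv 0$. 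Since this holds for every stopping time $\sigma\in[0,T]$, Lemma \ref{lem:unique} rules out bubbles up to evanescence.

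The main obstacle is the construction of a matching CPS at time $\sigma$ with \emph{exactly} the target transaction cost parameter $\lambda$; the trick is to insert the cushion $\lambda_n<\lambda_n''<\lambda$ so that Assumption \ref{assumption2} provides a $\lambda_n''$-CPS and Lemma 6 of \cite{guasoniftap} can be invoked with $\bar\lambda=\lambda$. Once this matching is available, everything else (martingale identity, Proposition \ref{prop:duality}, Proposition \ref{prop:fundamentalbound}) is mechanical.
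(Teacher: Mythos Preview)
Your proof is correct, but it takes a different route from the paper's argument. The paper avoids Lemma 6 of \cite{guasoniftap} entirely and instead uses a simple \emph{scaling} trick: for each $n$ with $1/n\leq\lambda$, Assumption \ref{assumption2} provides a $1/n$-consistent price system $(\bQ^n,\tSn)\in\CPS(0,T,1/n)$; multiplying $\tSn$ by the constant $\mu_n:=(1+\lambda)/(1+1/n)\geq 1$ produces $(\bQ^n,\mu_n\tSn)\in\CPS(0,T,\lambda)$, since $(1-\lambda)S_t\leq(1-1/n)S_t\leq\tSn_t\leq\mu_n\tSn_t\leq\mu_n(1+1/n)S_t=(1+\lambda)S_t$. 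The martingale property then gives $\bE_{\bQ^n}[\mu_n\tSn_T\mid\cF_\sigma]=\mu_n\tSn_\sigma\geq\mu_n(1-1/n)S_\sigma$, and the right-hand side increases to $(1+\lambda)S_\sigma$.

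The trade-off is clear: the paper's argument is entirely elementary and self-contained---no matching construction, no cushion parameters---at the price of not hitting an exact target at $\sigma$ (it only lands above $\mu_n(1-1/n)S_\sigma$, which suffices). Your approach imports a heavier external tool (the non-local Lemma 6 of \cite{guasoniftap}), but in exchange it pins down $\tSn_\sigma$ exactly; this is more than is needed here, though the technique is precisely what drives the harder results (Lemma \ref{lem:local}, Theorem \ref{thm:timeindependent}) elsewhere in the paper. Your derivation of the ``no bubble'' conclusion via Proposition \ref{prop:duality}, the inclusion $\CPS(0,T,\lambda)\subseteq\CPS_\loc(\sigma,T,\lambda)$, and Proposition \ref{prop:fundamentalbound} is also correct and spells out a step the paper leaves implicit.
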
 \noindent

\begin{proof}
Let $n_0\in\bN$ such that $\frac{1}{n_0}\leq \lambda$. By Assumption \ref{assumption2} there exists a $(\bQ^n,\tSn)\in\CPS(0,T,\frac{1}{n})$ for all $n\in\bN\backslash\{0\}$. Define $\mu_n:=\frac{1+\lambda}{1+\frac{1}{n}}$ for $n\geq n_0$. Then $(\bQ^n,\mu_n\tSn)\in\CPS(0,T,\lambda)$ for all $n\geq n_0$, since for $t\in[0,T]$ we have
\begin{align*}
(1-\lambda)S_t\leq \left(1-\frac{1}{n}\right)S_t\leq \tSn_t\leq \mu_n\tSn_t\leq \mu_n\left(1+\frac{1}{n}\right)S_t=(1+\lambda)S_t.
\end{align*}
Further, it holds
\begin{align*}
(1+\lambda)S_\sigma\geq \esssup_{(\bQ,\SQ)\in\CPS(0,T,\lambda)}\bE_\bQ\left[\SQ_T\mid\cF_\sigma\right]\geq \esssup_{n\geq n_0}\bE_{\bQ^n}\left[\mu_n\tSn_T\mid\cF_\sigma\right]=\esssup_{n\geq n_0}\mu_n\tSn_\sigma,
\end{align*}
where we have used the martingale property of $(\bQ^n,\mu_n\tSn)\in\CPS(0,T,\lambda)$. For the essential supremum we get
\begin{align*}
&\left|(1+\lambda)S_\sigma-\esssup_{n\geq n_0}\mu_n\tSn_\sigma\right|\leq \left|(1+\lambda)S_\sigma- \esssup_{n\geq n_0}\mu_n \left(1-\frac{1}{n}\right)S_\sigma\right|\\=&\left|(1+\lambda)S_\sigma\left(1-\esssup_{n\geq n_0}\frac{1-\frac{1}{n}}{1+\frac{1}{n}}\right)\right|=0.
\end{align*}
Hence we can conclude that
\begin{align*}
\esssup_{(\bQ,\SQ)\in\CPS(0,T,\lambda)}\bE_\bQ\left[\SQ_T\mid\cF_\sigma\right]=(1+\lambda)S_\sigma.
\end{align*}
\end{proof} \noindent

\begin{proposition}
\label{prop:risetransactioncost}
	Let Assumption \ref{assumption} hold. If there exists $\lambda_0\in (0,1)$ such that there is no bubble in the market model with transaction costs $\lambda_0$, then there is no bubble in the market model with transaction costs $\lambda> \lambda_0$.
\end{proposition}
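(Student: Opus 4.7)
The idea is to exploit the dual representation of the fundamental value from Proposition \ref{prop:duality} together with a simple scaling of consistent local price systems. Write $F^\lambda$ for the fundamental value in the market with transaction costs $\lambda$, so that for any stopping time $\sigma\in[0,T]$
\[
F^\lambda_\sigma = \esssup_{(\bQ,\SQ)\in\CPS_\loc(\sigma,T,\lambda)} \bE_\bQ\!\left[\SQ_T \mid \cF_\sigma\right],
\]
as noted right after the proof of Proposition \ref{prop:duality}. The hypothesis ``no bubble at level $\lambda_0$'' then reads $F^{\lambda_0}_\sigma = (1+\lambda_0)S_\sigma$ a.s. for every stopping time $\sigma$ (by the definition of a bubble).

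\textbf{Key scaling step.} Set $\mu := \frac{1+\lambda}{1+\lambda_0} > 1$. I claim that every element of $\CPS_\loc(\sigma,T,\lambda_0)$ can be rescaled into an element of $\CPS_\loc(\sigma,T,\lambda)$. Indeed, for $(\bQ,\SQ)\in\CPS_\loc(\sigma,T,\lambda_0)$ the process $\mu\SQ$ is still a local $\bQ$-martingale, and from $(1-\lambda_0)S_t \leq \SQ_t \leq (1+\lambda_0)S_t$ one gets $\mu\SQ_t \leq (1+\lambda)S_t$ by definition of $\mu$, while $\mu(1-\lambda_0) \geq 1-\lambda$ (this is equivalent to $\lambda \geq \lambda_0$) yields $\mu\SQ_t \geq (1-\lambda)S_t$. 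Hence $(\bQ,\mu\SQ)\in\CPS_\loc(\sigma,T,\lambda)$.

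\textbf{Conclusion.} Using this rescaling in the dual representation,
\[
F^\lambda_\sigma \;\geq\; \esssup_{(\bQ,\SQ)\in\CPS_\loc(\sigma,T,\lambda_0)} \bE_\bQ\!\left[\mu\SQ_T \mid \cF_\sigma\right] \;=\; \mu\, F^{\lambda_0}_\sigma \;=\; \mu(1+\lambda_0)S_\sigma \;=\; (1+\lambda)S_\sigma,
\]
where the factor $\mu$ can be pulled out of the essential supremum because it is a deterministic constant. On the other hand, Proposition \ref{prop:fundamentalbound} gives $F^\lambda_\sigma \leq (1+\lambda)S_\sigma$. Combining the two inequalities yields $F^\lambda_\sigma = (1+\lambda)S_\sigma$ a.s. for every stopping time $\sigma$, i.e.\ no bubble at level $\lambda$.

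\textbf{Main obstacle.} The argument is essentially routine once the dual formula of Proposition \ref{prop:duality} is in hand; the only minor point to check is the bid-ask inequality $\mu(1-\lambda_0)\geq 1-\lambda$, which reduces after multiplication to $\lambda\geq\lambda_0$. In particular no extension of Lemma \ref{lem:local} or of the super-replication theorem is needed, since the rescaled system is produced directly from a given element of $\CPS_\loc(\sigma,T,\lambda_0)$.
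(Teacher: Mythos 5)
Your proof is correct and follows essentially the same route as the paper's: both rely on the dual representation of Proposition \ref{prop:duality} and the same scaling observation, taking $\mu = \frac{1+\lambda}{1+\lambda_0}$ (the paper calls it $c$) to map any $(\bQ,\SQ)\in\CPS_\loc(\cdot,T,\lambda_0)$ to $(\bQ,\mu\SQ)\in\CPS_\loc(\cdot,T,\lambda)$, with the same verification that $\mu(1-\lambda_0)\geq 1-\lambda$ is equivalent to $\lambda\geq\lambda_0$. The only cosmetic difference is that you explicitly close the argument by citing the upper bound $F^\lambda_\sigma\leq(1+\lambda)S_\sigma$ from Proposition \ref{prop:fundamentalbound}, which the paper leaves implicit.
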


\begin{proof}
	Let $\lambda_0$ be such that 
	\begin{align*}
		F_\sigma^{\lambda_0}=\esssup_{(\bQ,\SQ)\in\CPS_\loc(0,T,\lambda_0)}\bE_\bQ[\SQ_T\mid\cF_\sigma]=(1+\lambda_0)S_\sigma,
	\end{align*}
	for all $[0,T]$-valued stopping times $\sigma$, i.e., there is no bubble in the market model with transaction costs $\lambda_0$. Fix any $\lambda>\lambda_0$. Note that $\CPS_\loc(0,T,\lambda_0)\subseteq\CPS_\loc(0,T,\lambda)$. Then there exists $c>1$ such that $c(1+\lambda_0)=(1+\lambda)$. In particular, if $(\bQ,\SQ)\in\CPS_\loc(0,T,\lambda_0)$, then $(\bQ,c\SQ)\in\CPS_\loc(0,T,\lambda)$. This yields
	\begin{align*}
		F_\sigma^\lambda&=\esssup_{(\bQ,\SQ)\in\CPS_\loc(0,T,\lambda)}\bE_\bQ[\SQ_T\mid\cF_\sigma]\geq \esssup_{(\bQ,\SQ)\in\CPS_\loc(0,T,\lambda_0)}\bE_\bQ[c\SQ_T\mid\cF_\sigma]\\
		&=c(1+\lambda_0)S_\sigma=(1+\lambda)S_\sigma.
	\end{align*}
\end{proof}\noindent
Proposition \ref{prop:risetransactioncost} guarantees that a rise of transaction costs does not yield bubbles' formation.

\begin{lemma}
\label{prop:inclusionmloc}
If the asset price $S=(S_t)_{t\in[0,T]}$ is a semimartingale and the set $\cM_\loc(S)$ of equivalent local martingale measures for $S$ is not empty, then $(\bQ,\mu S)\in\CPS_\loc(0,T)$ for $\bQ\in\cM_\loc(S)\text{ and }\mu\in[1-\lambda,1+\lambda]$, and
\begin{align}
\label{ali:mlocsubset}
F_\sigma=\esssup_{(\bQ,\SQ)\in\CPS_\loc}\bE_\bQ\left[\SQ_T\mid\cF_\sigma\right]\geq \esssup_{\bQ\in\cM_\loc(S)}\bE_\bQ\left[\mu S_T\mid\cF_\sigma\right].
\end{align}
\end{lemma}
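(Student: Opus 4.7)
The statement is short and essentially a consequence of the dual representation together with a direct check that the pair $(\bQ,\mu S)$ belongs to $\CPS_\loc(0,T)$. My plan is therefore to split the proof into two very short steps.

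First I would verify the inclusion claim. Fix any $\bQ\in\cM_\loc(S)$ and any $\mu\in[1-\lambda,1+\lambda]$. By definition of $\cM_\loc(S)$ we have $\bQ\sim\bP$ on $\cF_T$ and $S$ is a local $\bQ$-martingale on $[0,T]$; multiplying by the deterministic constant $\mu>0$ preserves the local martingale property, so $\mu S$ is a local $\bQ$-martingale. Since the paths of $S$ are strictly positive, the bid–ask condition \eqref{eq:cps} amounts to $1-\lambda\leq \mu\leq 1+\lambda$, which is our assumption on $\mu$. Hence $(\bQ,\mu S)\in\CPS_\loc(0,T,\lambda)$.

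Second I would invoke the dual representation. By Proposition \ref{prop:duality} combined with Theorem \ref{thm:timeindependent} we have
\begin{align*}
F_\sigma=\esssup_{(\bQ,\SQ)\in\CPS_\loc(0,T,\lambda)}\bE_\bQ\left[\SQ_T\mid\cF_\sigma\right],
\end{align*}
for every $[0,T]$-valued stopping time $\sigma$ (the extension from deterministic $t$ to a stopping time $\sigma$ was already noted directly after Proposition \ref{prop:duality}). By the first step, the family
\begin{align*}
\left\{(\bQ,\mu S):\bQ\in\cM_\loc(S),\ \mu\in[1-\lambda,1+\lambda]\right\}
\end{align*}
is contained in $\CPS_\loc(0,T,\lambda)$, so the essential supremum over the smaller family is dominated by $F_\sigma$, which yields \eqref{ali:mlocsubset}.

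There is no real obstacle here; the only thing one needs to be slightly careful about is that the essential supremum on the right-hand side is over pairs $(\bQ,\mu)$ (or equivalently, one can fix $\mu=1+\lambda$ to obtain the same bound since $\bE_\bQ[\mu S_T\mid\cF_\sigma]$ is increasing in $\mu$), but since each such pair individually produces an element of $\CPS_\loc(0,T,\lambda)$ this is automatic. The positivity of $S$ (assumed in Section \ref{sec:setting}) is what makes $\mu S$ sit in the bid–ask spread for every admissible $\mu$, and this is the only place the positivity is used.
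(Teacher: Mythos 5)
Your proof is correct and takes essentially the same route as the paper: the paper's proof is simply the one-line observation that $\{(\bQ,\mu S):\bQ\in\cM_\loc(S),\ \mu\in[1-\lambda,1+\lambda]\}\subset\CPS_\loc(0,T)$, from which \eqref{ali:mlocsubset} follows by the dual representation of $F$. You have merely spelled out the same argument in more detail, including the verification of the bid--ask condition from positivity of $S$ and the appeal to Proposition \ref{prop:duality} together with Theorem \ref{thm:timeindependent}.
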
\noindent

\begin{proof}
	Equation \eqref{ali:mlocsubset} immediately follows by the observation that 
	\begin{align}
		\left\{(\bQ,\mu S):\bQ\in\cM_\loc(S),\mu\in [1-\lambda,1+\lambda]\right\}\subset \CPS_\loc(0,T).
	\end{align}
\end{proof}

\begin{definition}
	Let $D\subseteq \bR$ be an open set in $\bR$. A function $f:D\to\bR$ is said to be upper \textit{semi-continuous} at $x\in D$ if
	\begin{align}
	\label{ali:defuppersemicont}
		\limsup_{y\to x} f(y)\leq f(x).
	\end{align}
	We say that $f$ is upper semi-continuous from the right at $x\in D$, if \eqref{ali:defuppersemicont} holds for $y\downarrow x$. Further, $f$ is called upper semi-continuous (from the right) if $f$ is upper semi-continuous (from the right) for all $x\in D$.
\end{definition}

\begin{theorem}
\label{thm:cadlag}
Suppose that Assumption \ref{assumption} holds and assume that  the function
\begin{align}
\label{ali:uppersemicont}
\varphi(t):= \sup_{(\bQ,\SQ)\in\CPS_\loc}\bE_\bP	\left[\bE_\bQ\left[\SQ_T\mid\cF_t\right]\right],\quad t\in[0,T],
\end{align}
is upper semi-continuous from the right. Then $F=(F_t)_{t\in[0,T]}$ admits a right-continuous modification with respect to $\bP$.
\end{theorem}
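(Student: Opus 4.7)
The plan is to exploit the essential-supremum representation of $F$ together with a pasting argument to establish the $\bQ'$-supermartingale property of $F$ for every $\bQ'\in\cQ_\loc$, and then combine this with the hypothesis and Doob's regularization to produce the desired modification.

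First, I would show that the family $\{\bE_\bQ[\tilde S_T\mid\cF_t]:(\bQ,\tilde S)\in\CPS_\loc(0,T)\}$ is upward directed. Given two elements $(\bQ^1,\tilde S^1),(\bQ^2,\tilde S^2)$, I would paste them at time $t$ on the $\cF_t$-measurable set $A=\{\bE_{\bQ^1}[\tilde S^1_T\mid\cF_t]\ge\bE_{\bQ^2}[\tilde S^2_T\mid\cF_t]\}$, via a construction analogous to the one in Corollary~\ref{cor:localcps} (concatenation of density processes along $A$ and $A^c$). The pasted system lies in $\CPS_\loc(0,T)$ and its conditional expectation at $t$ realises the pointwise maximum. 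Theorem~A.33 of \cite{follmerschied} then extracts $(\bQ^n,\tilde S^n)_{n\in\bN}\subset\CPS_\loc(0,T)$ with $\bE_{\bQ^n}[\tilde S^n_T\mid\cF_t]\uparrow F_t$ $\bP$-a.s., and monotone convergence (using $F_t\le(1+\lambda)S_t\in L^1(\bP)$) gives the identity $\bE_\bP[F_t]=\varphi(t)$.

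Second, I would show that $F$ is a $\bQ'$-supermartingale for every $\bQ'\in\cQ_\loc$. Fix $s\le t$ and a reference pair $(\bQ',\tilde S')\in\CPS_\loc(0,T)$. Paste $(\bQ',\tilde S')$ with $(\bQ^n,\tilde S^n)$ at time $t$, obtaining $(\bar\bQ^n,\bar S^n)\in\CPS_\loc(0,T)$ such that $\bar\bQ^n=\bQ'$ on $\cF_t$ and $\bE_{\bar\bQ^n}[\bar S^n_T\mid\cF_t]=\bE_{\bQ^n}[\tilde S^n_T\mid\cF_t]$. Since $\bQ'$ and $\bar\bQ^n$ coincide on $\cF_t$, for any $\cF_t$-measurable random variable the $\bQ'$- and $\bar\bQ^n$-conditional expectations given $\cF_s$ agree, and so
\[
\bE_{\bQ'}\bigl[\bE_{\bQ^n}[\tilde S^n_T\mid\cF_t]\bigm|\cF_s\bigr]=\bE_{\bar\bQ^n}\bigl[\bar S^n_T\bigm|\cF_s\bigr]\le F_s.
\]
Monotone convergence then yields $\bE_{\bQ'}[F_t\mid\cF_s]\le F_s$.

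Third, I would use the hypothesis to extract the right-continuous modification. By Step~2 and Doob's regularization for the $\bQ'$-supermartingale $F$ on the right-continuous filtration, the pathwise right-limit $F^+_t:=\lim_{s\downarrow t,\,s\in\mathbb{Q}\cap[0,T]}F_s$ exists $\bP$-a.s.\ and satisfies $F^+_t\le F_t$ $\bP$-a.s. Each conditional expectation $\bE_\bQ[\tilde S_T\mid\cF_\cdot]$ admits a $\bP$-a.s.\ càdlàg modification, so the pointwise inequality $F_s\ge\bE_\bQ[\tilde S_T\mid\cF_s]$ along rationals forces $F^+_t\ge\bE_\bQ[\tilde S_T\mid\cF_t]$ $\bP$-a.s., and by the essential-supremum property applied to the $\cF_t$-measurable random variable $F^+_t$, one gets $F^+_t\ge F_t$ $\bP$-a.s.; hence $F^+_t=F_t$ $\bP$-a.s.\ for each $t$. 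The upper semi-continuity hypothesis on $\varphi$, combined with the identity $\bE_\bP[F_t]=\varphi(t)$, implies via dominated convergence on the integrable envelope $(1+\lambda)S$ (with a localisation along $\tau_n:=\inf\{s:S_s\ge n\}\wedge T$ as in the proof of Lemma~\ref{lem:local}) that $t\mapsto\bE_{\bQ'}[F_t]$ is right-continuous; the classical regularity criterion for supermartingales (e.g.\ Theorem~A.52 of \cite{follmerschied}) then yields that $F^+$ has right-continuous sample paths $\bP$-almost surely, and is therefore the desired right-continuous modification. The main obstacle is this last step, where bridging the $\bP$-level upper semi-continuity hypothesis with the $\bQ'$-supermartingale structure requires care, precisely because $F$ is generally not itself a $\bP$-supermartingale.
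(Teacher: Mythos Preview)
Your Step~2 contains a genuine gap that undermines the whole approach. The pasting you describe --- obtaining $(\bar\bQ^n,\bar S^n)\in\CPS_\loc(0,T)$ with simultaneously $\bar\bQ^n=\bQ'$ on $\cF_t$ and $\bE_{\bar\bQ^n}[\bar S^n_T\mid\cF_t]=\bE_{\bQ^n}[\tilde S^n_T\mid\cF_t]$ --- does not go through. If you concatenate the density processes as $\bar Z^i_s=Z'^i_s$ for $s\le t$ and $\bar Z^i_s=Z^{i,n}_s\,Z'^i_t/Z^{i,n}_t$ for $s\ge t$, then indeed $\bar\bQ^n|_{\cF_t}=\bQ'|_{\cF_t}$, but a direct computation gives $\bar S^n_s=\tilde S^n_s\cdot(\tilde S'_t/\tilde S^n_t)$ for $s\ge t$ and $\bE_{\bar\bQ^n}[\bar S^n_T\mid\cF_t]=(\tilde S'_t/\tilde S^n_t)\,\bE_{\bQ^n}[\tilde S^n_T\mid\cF_t]$. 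Unless $\tilde S'_t=\tilde S^n_t$, the pasted price process leaves the bid--ask spread and the conditional expectations differ by a nontrivial factor. This is exactly the obstruction that distinguishes the transaction-cost setting from the frictionless one: in $\cM_\loc(S)$ one only pastes measures and the price process is fixed, whereas in $\CPS_\loc$ one must simultaneously match the price component at the pasting time. Lemma~\ref{lem:local} lets you prescribe the terminal value of a CPS on $\lsem 0,t\rsem$, but not the measure; Corollary~\ref{cor:localcps} lets you prescribe the measure and the initial value of a CPS on $\lsem t,T\rsem$, but that is the wrong direction for what you need. The paper is explicit on this point (see the remark following Corollary~\ref{cor:cadlag}): the $\bQ'$-supermartingale property of $F$ may fail in the presence of transaction costs, which is precisely why the upper semi-continuity hypothesis is imposed.

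The paper's proof avoids the supermartingale route entirely. It invokes the criterion of Dellacherie (Theorem~48 in \cite{dellacherieb}): $F$ admits a right-continuous modification iff $\bE_\bP[F_{\sigma_n}]\to\bE_\bP[F_\sigma]$ for every decreasing sequence of bounded stopping times $\sigma_n\downarrow\sigma$. The identity $\bE_\bP[F_\sigma]=\varphi(\sigma)$ is established via the upward-directedness argument (your Step~1, carried out in $\CPS_\loc(\sigma,T)$ and transferred to $\CPS_\loc(0,T)$ via Theorem~\ref{thm:timeindependent}). The inequality $\liminf_n\bE_\bP[F_{\sigma_n}]\ge\bE_\bP[F_\sigma]$ follows from Fatou and the right-continuity of each $t\mapsto\bE_\bQ[\SQ_T\mid\cF_t]$; the reverse inequality $\limsup_n\bE_\bP[F_{\sigma_n}]\le\bE_\bP[F_\sigma]$ is exactly the upper semi-continuity hypothesis. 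Your Step~3 also has a secondary gap --- you never explain how the $\bP$-level hypothesis on $\varphi$ yields right-continuity of $t\mapsto\bE_{\bQ'}[F_t]$ --- but this is moot once Step~2 fails.
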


\begin{proof}
%We first show that $F$ admits a right-continuous modification with respect to some $\bQ_0\in\cQ_\loc(0,T)$. Since $\bP$ and $\bQ_0$ are equivalent we can conclude that $F$ has also a c\`adl\`ag modification with respect to $\bP$. \\
By Theorem 48 in \cite{dellacherieb}, $F$ admits a right-continuous modification with respect to $\bP$ if and only if for every decreasing sequence $(\beta_n)_{n\in\bN}$ of bounded stopping times $\lim_{n\to\infty}\bE_{\bP}\left[F_{\beta_n}\right]=\bE_{\bP}\left[F_{\lim_{n\to\infty}\beta_n}\right]$. For convenience, we write $\CPS_\loc=\CPS_{\loc}(0,T,\lambda)$ and $\cZ_\loc=\cZ_\loc(0,T,\lambda)$ in the sequel. Note that we use the representation of $(\bQ,\SQ)\in\CPS_\loc$ from Proposition \ref{prop:pmartingale}. As in Proposition 4.3 of \cite{kramkovduality} we first show the identity
\begin{align}
\label{ali:cadlagid}
\bE_{\bP}\left[\esssup_{(\bQ,\SQ)\in\CPS_\loc}\bE_\bQ\left[\SQ_T\mid\cF_\sigma\right]\right]=\sup_{(\bQ,\SQ)\in\CPS}\bE_{\bP}\left[\bE_\bQ\left[\SQ_T\mid\cF_\sigma\right]\right],
\end{align}
for all stopping times $\sigma$ with values in $[0,T]$. For the first direction we use monotonicity to obtain
%\begin{align}
%\label{ali:cadlagmonotone}
%\bE_{\bP}\left[\esssup_{(\bQ,\SQ)\in\CPS_\loc}\bE_\bQ\left[\SQ_T\mid\cF_\sigma\right]\right]\geq \bE_{\bP}\left[\bE_\bQ\left[\SQ_T\mid\cF_\sigma\right]\right].
%\end{align}
%Since \eqref{ali:cadlagmonotone} holds for all $(\bQ,\SQ)\in\CPS_\loc$ we obtain
\begin{align}
\bE_{\bP}\left[\esssup_{(\bQ,\SQ)\in\CPS_\loc}\bE_\bQ\left[\SQ_T\mid\cF_\sigma\right]\right]\geq \sup_{(\bQ,\SQ)\in\CPS_\loc}\bE_{\bP}\left[\bE_\bQ\left[\SQ_T\mid\cF_\sigma\right]\right].
\end{align}
For the reverse direction we use Theorem \ref{thm:timeindependent} to show that $$\Phi:=\left\{\bE_\bQ\left[\SQ_T\mid\cF_\sigma\right]:(\bQ,\SQ)\in\CPS_\loc(\sigma,T)\right\}$$ is directed upwards, i.e. for $\bE_\bQ[\SQ_T\mid\cF_\sigma],\ \bE_{\bar{\bQ}}[\bSQ_T\mid\cF_\sigma]\in\Phi$ there exists $\bE_{\widehat{\bQ}}[\hSQ\mid\cF_\sigma]\in\Phi$ such that $\bE_{\widehat{\bQ}}[\hSQ_T\mid\cF_\sigma]\geq \bE_\bQ[\SQ_T\mid\cF_\sigma]\vee\bE_{\bar{\bQ}}[\bSQ_T\mid\cF_\sigma]$. We define 
\begin{align*}
A_\sigma:=\left\{\bE_\bQ\left[\SQ_T\mid\cF_\sigma\right]\geq \bE_{\bar{\bQ}}\left[\bSQ_T\mid\cF_\sigma\right]\right\}\in\cF_\sigma.
\end{align*}
Let $Z=(Z^1,Z^2)$ and $\bar{Z}=(\bar{Z}^1,\bar{Z}^2)$ be the processes associated to $(\bQ,\SQ)$ and $(\bbQ,\bSQ)$ respectively, as in Proposition \ref{prop:pmartingale}. Then we define
\begin{align}
\frac{d\widehat{\bQ}}{d\bP}=\frac{\widehat{Z}_T^1}{\bE_\bP\left[\widehat{Z}_T^1\right]}:=\frac{\mathds{1}_{A_\sigma} Z_T^1+\mathds{1}_{A_\sigma^c}\bar{Z}_T^1}{\bE_\bP\left[\mathds{1}_{A_\sigma} Z_T^1+\mathds{1}_{A_\sigma^c}\bar{Z}_T^1\right]},
\end{align}
and for $\sigma\leq t\leq T$,
\begin{align}
\widehat{Z}_t^2:=\mathds{1}_{A_\sigma} Z_t^2+\mathds{1}_{A_\sigma^c}\bar{Z}_t^2
\end{align}
with corresponding
\begin{align}
\hSQ_t=\frac{\widehat{Z}_t^2}{\widehat{Z}_t^1}.
\end{align}
Obviously, $\widehat{Z}$ satisfies all requirements from Definition \ref{def:representation}, i.e., $\widehat{Z}\in\mathcal{Z}_\loc$. Clearly, $(1-\lambda)S_t\leq \hSQ_t\leq (1+\lambda)S_t$ for all $t\in[\sigma,T]$. For the local martingale property let $(\tau_n)_{n\in\bN}$ be a localizing sequence for $\SQ$ and $\bSQ$. For $\sigma\leq s\leq t\leq T$ we get
\begin{align*}
&\bE_{\widehat{\bQ}}\left[\left(\hSQ_t\right)^{\tau_n}\mid\cF_s\right]=\bE_\bP\left[\left(\frac{\widehat{Z}_t^2}{\widehat{Z}_t^1}\right)^{\tau_n}\frac{\widehat{Z}_T^1}{\bE_\bP\left[\widehat{Z}_T^1\right]}\mid\cF_s\right]\frac{\bE_\bP\left[\widehat{Z}_T^1\right]}{\widehat{Z}_{s\wedge\tau_n}^1}\\
=&\bE_\bP\left[\left(\mathds{1}_{A_\sigma}Z_t^1+\mathds{1}_{A_\sigma^c}\bar{Z}_t^2\right)^{\tau_n}\mid\cF_s\right]\frac{1}{\widehat{Z}_{s\wedge\tau_n}^1}=\left(\mathds{1}_{A_\sigma}\bE_\bP\left[(Z_t^2)^{\tau_n}\mid\cF_s\right]+\mathds{1}_{A_\sigma^c}\bE_\bP\left[(\bar{Z}_t^2)^{\tau_n}\mid\cF_s\right]\right)\frac{1}{\widehat{Z}_{s\wedge\tau_n}^1}\\
=&\left(\mathds{1}_{A_\sigma}Z_{s\wedge\tau_n}^2+\mathds{1}_{A_\sigma^c}\bar{Z}_{s\wedge\tau_n}^2\right)\frac{1}{\widehat{Z}_{s\wedge\tau_n}^1}=\left(\widehat{S}_s\right)^{\tau_n},
\end{align*}
where we used that $\mathds{1}_{A_\sigma},\mathds{1}_{A_\sigma^c}$ are $\cF_\sigma\subset \cF_s$ measurable. In particular, by Theorem A.33 of \cite{follmerschied}, there exists an increasing sequence $(\bE_{\bQ^n}[\tSn_T\mid\cF_\sigma])_{n\in\bN}\subset\Phi$
\begin{align}
\esssup_{(\bQ,\SQ)\in\CPS_\loc(0,T)}\bE_\bQ\left[\SQ_T\mid\cF_\sigma\right]=\esssup_{(\bQ,\SQ)\in\CPS_\loc(\sigma,T)}\bE_\bQ\left[\SQ_T\mid\cF_\sigma\right]=\lim_{n\to \infty}\bE_{\bQ^n}\left[\tSn_T\mid\cF_\sigma\right].
\end{align}
Thus, we obtain by the Theorem of monotone convergence
\begin{align}
\nonumber
&\bE_{\bP}\left[\esssup_{(\bQ,\SQ)\in\CPS_\loc(0,T,\lambda)}\bE_\bQ\left[\SQ_T\mid\cF_\sigma\right]\right]=\lim_{n\to\infty}\bE_{\bP}\left[\bE_{\bQ^n}\left[\tSn_T\mid\cF_\sigma\right]\right]\\
\label{eq:kramkov1}
\leq &\sup_{(\bQ,\SQ)\in\CPS_\loc(\sigma,T,\lambda)}\bE_{\bP}\left[\bE_\bQ\left[\SQ_T\mid\cF_\sigma\right]\right]=\sup_{(\bQ,\SQ)\in\CPS_\loc(0,T,\lambda)}\bE_{\bP}\left[\bE_\bQ\left[\SQ_T\mid\cF_\sigma\right]\right].
\end{align}
The last equality in \eqref{eq:kramkov1} holds due to similar arguments as in the proof of Theorem \ref{thm:timeindependent}. This concludes the proof of \eqref{ali:cadlagid}.\\
Let now $(\sigma_n)_{n\in\bN}$ be a sequence of stopping times with values in $[0,T]$ such that $\sigma_n\downarrow \sigma$ as $n$ tends to infinity. We now prove that
\begin{align*}
\lim_{n\to\infty} \bE_{\bP}\left[F_{\sigma_n}\right]=\bE_{\bP}\left[\lim_{n\to \infty}F_{\sigma_n}\right]=\bE_{\bP}\left[F_\sigma\right].
\end{align*}
Using \eqref{ali:cadlagid}, the Fatou's lemma, and the fact that $\left(\bE_\bQ[\SQ_T\mid\cF_t]\right)_{\sigma\leq t\leq T}$ is right-continuous we obtain
\begin{align}
\label{ali:cadlagdirection}
\begin{split}
\liminf_{n\to\infty}\sup_{(\bQ,\SQ)\in\CPS_\loc}\bE_{\bP}\left[\bE_\bQ\left[\SQ_T\mid\cF_{\sigma_n}\right]\right]&\geq \liminf_{n\to\infty}\bE_{\bP}\left[\bE_\bQ\left[\SQ_T\mid\cF_{\sigma_n}\right]\right]\\
\geq \bE_{\bP}\left[\liminf_{n\to\infty}\bE_\bQ\left[\SQ_T\mid\cF_{\sigma_n}\right]\right]&=\bE_{\bP}\left[\bE_\bQ\left[\SQ_T\mid\cF_{\sigma}\right]\right].
\end{split}
\end{align}
Since \eqref{ali:cadlagdirection} holds for all $(\bQ,\SQ)\in\CPS_\loc$ we get that
\begin{align}
\label{ali:cadlagdirection1}
\begin{split}
&\liminf_{n\to\infty}\bE_{\bP}\left[F_{\sigma_n}\right]=\liminf_{n\to\infty}\sup_{(\bQ,\SQ)\in\CPS_\loc}\bE_{\bP}\left[\bE_\bQ\left[\SQ_T\mid\cF_{\sigma_n}\right]\right]\\
\geq &\sup_{(\bQ,\SQ)\CPS_\loc}\bE_{\bP}\left[\bE_\bQ\left[\SQ_T\mid\cF_\sigma\right]\right]=\bE_{\bP}\left[F_\sigma\right],
\end{split}
\end{align}
where the last equality follows by \eqref{ali:cadlagid}. By the assumption of upper semi-continuity from the right we directly obtain
\begin{align}
\label{ali:proofuppersemicont}
\limsup_{n\to\infty}\sup_{(\bQ,\SQ)\in\CPS_\loc}\bE_\bP\left[\bE_\bQ\left[\SQ_T\mid\cF_{\sigma_n}\right]\right]\leq\sup_{(\bQ,\SQ)\in\CPS_\loc}\bE_\bP\left[\bE_\bQ\left[\SQ_T\mid\cF_\sigma\right]\right].
\end{align}
Note that \eqref{ali:proofuppersemicont} also implies that the limit is finite, because 
\begin{align*}
	\sup_{(\bQ,\SQ)\in\CPS_\loc}\bE_\bP\left[\bE_\bQ\left[\SQ_T\mid\cF_\sigma\right]\right]\leq \bE_\bP[(1+\lambda)S_\sigma]<\infty.
\end{align*}
Putting \eqref{ali:cadlagdirection1} and \eqref{ali:proofuppersemicont} together yields by \eqref{ali:cadlagid} that
\begin{align*}
	\limsup_{n\to\infty}\bE_\bP[F_{\sigma_n}]=\bE_\bP[F_\sigma]\leq\liminf_{n\to\infty}\bE_\bP[F_n].
\end{align*}
This concludes the proof.

\end{proof}\noindent

\begin{corollary}
\label{cor:cadlag}
Suppose that Assumption \ref{assumption} holds and assume that there exists $\bQ_0\in\cQ_\loc$ such that the function
\begin{align}
\label{ali:uppersemicont}
\varphi(t):= \sup_{(\bQ,\SQ)\in\CPS_\loc}\bE_{\bQ_0}\left[\bE_\bQ\left[\SQ_T\mid\cF_t\right]\right],\quad t\in[0,T],
\end{align}
is upper semi-continuous from the right. Then $F=(F_t)_{t\in[0,T]}$ admits a right-continuous modification with respect to $\bP$.
\end{corollary}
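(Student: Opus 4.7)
The key observation is that in the proof of Theorem \ref{thm:cadlag} the measure $\bP$ plays no role beyond being a reference probability measure on $\cF_T$ equivalent to every $\bQ\in\cQ_\loc$; no specific property of $\bP$ is used. Since $\bQ_0\in\cQ_\loc$ satisfies $\bQ_0\sim\bP$ on $\cF_T$ (and therefore is equivalent to every $\bQ\in\cQ_\loc$), the plan is to repeat the proof of Theorem \ref{thm:cadlag} verbatim with $\bP$ replaced by $\bQ_0$ throughout.

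Because $\bQ_0\sim\bP$, a process is a right-continuous modification of $F$ with respect to $\bP$ if and only if it is a right-continuous modification with respect to $\bQ_0$, so it suffices to construct the latter. Applying Theorem 48 of \cite{dellacherieb} on $(\Omega,\cF,\bQ_0)$, this reduces to proving $\lim_{n\to\infty}\bE_{\bQ_0}[F_{\sigma_n}]=\bE_{\bQ_0}[F_\sigma]$ for every decreasing sequence $(\sigma_n)_{n\in\bN}$ of bounded stopping times with $\sigma_n\downarrow\sigma$. The integrability of $F_{\sigma_n}$ under $\bQ_0$ is ensured by Proposition \ref{prop:fundamentalbound} together with Remark \ref{rem:l1} applied to $\bQ_0$.

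Next I would rerun the derivation of identity \eqref{ali:cadlagid} with $\bE_{\bQ_0}$ in place of $\bE_\bP$. All ingredients transfer: the directed-upwards construction of $(\widehat{\bQ},\widehat{S})$ by pasting two consistent local price systems at $\sigma$ uses only Definition \ref{def:representation} and Proposition \ref{prop:pmartingale}; Theorem A.33 of \cite{follmerschied} remains applicable since $\bQ_0$ has the same null sets as $\bP$; Theorem \ref{thm:timeindependent} provides the required time-independence of the essential supremum; and monotone convergence holds under any probability measure. This yields
\begin{align*}
\bE_{\bQ_0}[F_\sigma]\;=\;\sup_{(\bQ,\SQ)\in\CPS_\loc}\bE_{\bQ_0}\!\left[\bE_\bQ[\SQ_T\mid\cF_\sigma]\right]\;=\;\varphi(\sigma),
\end{align*}
and the analogous identity at each $\sigma_n$, so the problem reduces to showing $\lim_{n\to\infty}\varphi(\sigma_n)=\varphi(\sigma)$.

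Finally, the liminf bound $\liminf_n\bE_{\bQ_0}[F_{\sigma_n}]\ge\bE_{\bQ_0}[F_\sigma]$ follows, for each fixed $(\bQ,\SQ)\in\CPS_\loc$, from Fatou's lemma under $\bQ_0$ applied to the nonnegative right-continuous $\bQ$-martingale $(\bE_\bQ[\SQ_T\mid\cF_t])_{t\in[0,T]}$, then taking the supremum over $(\bQ,\SQ)$, exactly as in \eqref{ali:cadlagdirection}--\eqref{ali:cadlagdirection1}. The matching limsup bound $\limsup_n\bE_{\bQ_0}[F_{\sigma_n}]=\limsup_n\varphi(\sigma_n)\le\varphi(\sigma)=\bE_{\bQ_0}[F_\sigma]$ is precisely the upper semi-continuity from the right of $\varphi$ assumed in the corollary. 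The only subtlety — and the only potential obstacle — is checking that each probabilistic tool invoked in Theorem \ref{thm:cadlag} is invariant under replacing the reference measure by an equivalent one on $\cF_T$; once this invariance is verified, the corollary is an immediate consequence of Theorem \ref{thm:cadlag}, with no genuinely new argument required.
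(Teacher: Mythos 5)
Your proposal is correct and follows essentially the same approach as the paper: since $\bQ_0\sim\bP$, it suffices to construct a right-continuous modification under $\bQ_0$, and the proof of Theorem \ref{thm:cadlag} goes through verbatim with $\bP$ replaced by $\bQ_0$ because every step (Theorem 48 of \cite{dellacherieb}, the identity \eqref{ali:cadlagid}, Fatou, and the upper semi-continuity hypothesis now stated for $\bE_{\bQ_0}$) is invariant under passing to an equivalent reference measure. The paper's own proof is exactly this observation, stated more tersely.
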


\begin{proof}
	Since $\bP$ and $\bQ_0$ are equivalent we can conclude that if $F$ has a right-continuous modification with respect to $\bQ_0$, $F$ also has a right-continuous modification with respect to $\bP$. 
\end{proof}\noindent
By Proposition \ref{prop:duality}, Lemma \ref{lem:unique} and Theorem \ref{thm:cadlag} we obtain that the fundamental price process is well-defined and admits a right-continuous modification.

\begin{remark}
	The assumption of upper semi-continuity from the right in Theorem \ref{thm:cadlag}  seems rather restrictive. In the frictionless case inequality \eqref{ali:proofuppersemicont} is automatically fulfilled because
	\begin{align*}
		F_t=\esssup_{\bQ\in\cM_\loc(S)}\bE_\bQ[S_T\mid\cF_t],
	\end{align*} 
	for all $t\in[0,T]$ is a $\bQ$-supermartingale for all $\bQ\in\cM_\loc(S)$. Then for any $\bQ_0\in\cM_\loc(S)$ it holds that
	\begin{align*}
		\bE_{\bQ_0}[F_{\sigma_n}]=\sup_{\bQ\in\cM_\loc(S)}\bE_{\bQ_0}\left[\bE_\bQ[S_T\mid\cF_{\sigma_n}]\right]\leq \bE_{\bQ_0}[F_\sigma].
	\end{align*}
	See Proposition 4.3 of \cite{kramkovduality} for further details.\\
	In the presence of transaction costs the supermartingale property of the fundamental value $F$ may fail. Additional regularity conditions on the family of consistent price systems must be required in order to guarantee the existence of a right-continuous modification for $F$.
\end{remark}

\section{Impact of transaction costs on bubbles' formation}
\label{sec:comparison}

In this section we study whether transaction costs can prevent bubbles' formation and their impact on bubbles' size. These issues, if market frictions may prevent bubbles' formation or reduce their impact, has been thoroughly discussed in the economic literature, see the discussion in the Introduction. Here we study these problems from a mathematical point of view in our setting. In particular, we investigate the relation between asset price bubbles in market models with and without transaction costs.\\
To this purpose we use \cite{herdegenschweizer} as reference for the frictionless market case. We now briefly recall and re-adapt the framework of \cite{herdegenschweizer} to be coherent with our setting outlined in Section \ref{sec:setting}. In particular, we assume that the asset price $S$ is given by a c\`{a}dl\`{a}g non-negative semimartingale such that $\cM_\loc(S)\neq \emptyset$. Under these assumptions, NFLVR holds, see \cite{delbaenschachermayer94}. Put $\textbf{S}:=(B,S)$. We denote by $^\sigma L(\textbf{S})$ the set of all $\bR^2$-valued processes $\nu=(\nu_t^1,\nu_t^2)_{\sigma\leq t\leq T}$ which are predictable on $\lsem\sigma,T\rsem$ and for which the stochastic integral process $\int_\sigma^t\nu_sd\textbf{S}_s$, $\sigma\leq t\leq T$, is defined in the sense of $2$-dimensional stochastic integration, see \cite[Section III.6]{protterstochastic}.

\begin{definition}[Definition 2.5, \cite{herdegenschweizer}]
Fix a stopping time $0\leq \sigma\leq T$. The space $^\sigma L^{\text{sf}}(\textbf{S})$ of \emph{self-financing strategies} (for $\textbf{S}$) on $\lsem \sigma,T\rsem$ consists of all $2$-dimensional processes $\nu$ which are predictable on $\lsem \sigma, T\rsem$, belong to $^\sigma L(S)$, and such that the \emph{value process} $V(\nu)(\textbf{S})$ of $\nu$ satisfies the self-financing condition
\begin{align*}
V(\nu)(\textbf{S}):=\nu\cdot \textbf{S}=\nu_\sigma\cdot \textbf{S}_\sigma+\int_\sigma\nu_ud\textbf{S}_u\quad \text{on }\lsem\sigma,T\rsem.
\end{align*}
\end{definition}\noindent

\begin{definition}[Definition 3.1, \cite{herdegenschweizer}]
\label{def:bubbleherdegen}
The \emph{fundamental value} of the asset $S$ at time $t\in[0,T]$ is defined by
\begin{align}
\label{ali:herdegenbubble}
S_t^*:=\essinf\left\{v\in L_+^1(\cF_t,\bP):\exists\nu\in {^*L_+^{\text{sf}}}(\textbf{S}) \text{ with }V_T(\nu)(\textbf{S})\geq S_T\text{ and }V_t(\nu)(\textbf{S})\leq v, \bP\text{-a.s.}\right\}.
\end{align}
We say that the market model has a \emph{strong bubble} if $S^*$ and $S$ are not indistinguishable, i.e., if $\bP(S_\sigma^*<S_\sigma)>0$ for some stopping $0\leq\sigma\leq T$ and define the process $\beta^\NoTC=(\beta_t^\NoTC)_{0\leq t\leq T}$ by $\beta_t^{\NoTC}:=S_t-S_t^*$, $t\in[0,T]$.
\end{definition} \noindent
Note that Definition \ref{def:bubbleherdegen} differs from Definition 3.1 of \cite{herdegenschweizer} since we require $v\in L_+^1(\cF_t,\bP)$ in \eqref{ali:herdegenbubble} to be consistent with Definition \ref{def:bubble}. In this setting the duality from Theorem 3.2 of \cite{kramkovduality} holds and we get
\begin{align}
\label{app:kramkovduality}
S_\sigma^*=\esssup_{\bQ\in\cM_\loc(S)}\bE_\bQ\left[S_T\mid\cF_\sigma\right].
\end{align}
In the more general framework of \cite{herdegenschweizer} it is possible that the duality does not hold, see Remark 3.11 of \cite{herdegenschweizer} and the comment before for more information.\\
Recall that, in the setting of \cite{herdegenschweizer} we assume that the asset price $S$ is a semimartingale and that $\cM_\loc(S)\neq \emptyset$. At time $t\in[0,T]$ we obtain by Lemma \ref{prop:inclusionmloc} that
\begin{align*}
(1+\lambda)S_t\geq F_t=\esssup_{(\bQ,\tilde{S}^Q)\in\CPS_\loc}\bE_\bQ\left[\tilde{S}_T^Q\mid\cF_t\right]\geq \esssup_{\bQ\in\cM_{loc}(S)}\bE_\bQ[(1+\lambda)S_T\mid\cF_t]=(1+\lambda)S_t^*.
\end{align*}
In particular, we have
\begin{align}
\label{ali:bubblerelation}
\beta_t=(1+\lambda)S_t-F_t\leq (1+\lambda)\left(S_t-\esssup_{\bQ\in\cM_{loc}(S)}\bE_\bQ[S_T\mid\cF_t]\right)=(1+\lambda)\beta_t^{\NoTC}.
\end{align}
From \eqref{ali:bubblerelation} we immediately obtain for $t\in[0,T]$ that if $\beta_t^\NoTC=0$, then $\beta_t=0$ and that $\beta_t^\NoTC>0$ if $\beta_t>0$ for $t\in[0,T]$. Let $\beta_t^\NoTC\neq 0$. Then
\begin{align}
\label{ali:bubblebound}
\frac{\beta_t}{\beta_t^{\NoTC}}\leq 1+\lambda,
\end{align}
which means that the quotient of the bubbles is bounded by the factor $(1+\lambda)$. Furthermore, we have
\begin{align}
\label{ali:differencebubble1}
-\lambda\beta^\NoTC\leq \beta_t^\NoTC-\beta_t\leq \beta_t^\NoTC.
\end{align}
It is easy to see that both bounds in \eqref{ali:differencebubble1} can be obtained. In Example \ref{ex:bubblebirth} we get

\begin{align}
\label{ali:bubblereductionexample}
\beta_t-\beta_t^\NoTC=(1+\lambda)S_t\mathds{1}_{\{\gamma\leq t\}}-S_t\mathds{1}_{\{\gamma\leq t\}}=\lambda S_t\mathds{1}_{\{\gamma\leq t\}}=\lambda\beta_t^\NoTC.
\end{align}
Furthermore, we have in Example \ref{examplebessel} that $\beta_t\equiv 0$ and thus we obtain the equality on the right hand side of \eqref{ali:differencebubble1}. Further, we note that

\begin{align}
\label{ali:bubblesize}
\nonumber
&(1+\lambda)\beta_t^{\NoTC}-\beta_t\\ \nonumber=&(1+\lambda
)\left(S_t-\esssup_{\bQ\in\cM_{loc}(S)}\bE_\bQ[S_T\mid\cF_t]\right)-(1+\lambda)S_t+\esssup_{(\bQ,\tilde{S}^Q)\in\CPS_\loc}\bE_\bQ\left[\tilde{S}_T^Q\mid\cF_t\right]\\
=&\esssup_{(\bQ,\tilde{S}^Q)\in\CPS_\loc}\bE_\bQ\left[\tilde{S}_T^Q\mid\cF_t\right]-(1+\lambda)\esssup_{\bQ\in\cM_{loc}(S)}\bE_\bQ[S_T\mid\cF_t]=:\Delta_{t,T}(\lambda).
\end{align}
By rearranging equation \eqref{ali:bubblesize} we obtain then
\begin{align}
\label{ali:bubblesize2}
\beta_t=(1+\lambda)\beta_t^{\NoTC}-\Delta_{t,T}(\lambda).
\end{align}
Clearly, it holds $\Delta_{t,T}(\lambda)\in \lsem 0,(1+\lambda)\beta_t^\NoTC\rsem$. Consider Example \ref{examplebessel}, where $S$ is a $3$-dimensional inverse Bessel process with respect to $\bP$ and set $t=0$. Then $\beta_0=0$ and
\begin{align}
\begin{split}
\label{ali:besseldifference}
\Delta_{t,T}(\lambda)&=\sup_{(\bQ,\SQ)\in\CPS_\loc}\bE_\bQ\left[\SQ_T\right]-(1+\lambda)\sup_{\bQ\in\cM_\loc(S)}\bE_\bQ\left[S_T\right]\\
&=(1+\lambda)S_0-(1+\lambda)\bE_\bP\left[S_T\right]\\
&=2(1+\lambda)\left(1-\Phi\left(\frac{1}{\sqrt{T}}\right)\right),
\end{split}
\end{align}
where $\Phi$ denotes the cumulative distribution function of the standard normal distribution, see \cite{follmerprotter}. For $T$ tending to infinity, then $\Delta_{t,T}(\lambda)$ tends to $(1+\lambda)$.

\begin{remark}
From equation \eqref{ali:bubblerelation} we can see that if a market model without transaction costs has no asset price bubble, then the analogue market model with transaction has no asset price bubble either. In other words, the introduction of transaction costs into a market cannot generate asset price bubbles. Conversely, by \eqref{ali:bubblerelation} it follows that, if a market model with transaction costs has an asset price bubble, the corresponding frictionless market model has an asset price bubble as well. \\
In our model the introduction of transaction costs can possibly prevent the occurrence of an asset price bubble. This can be seen in Example \ref{examplebessel} where we have an asset price bubble in the sense of Definition \ref{def:bubbleherdegen} but no bubble in presence of transaction costs with respect to Definition \ref{def:bubble}. However, Example \ref{ex:bubblebirth} shows that it is possible to have an asset price bubble in both market models, with and without transaction costs. In particular, the presence of transaction costs does not guarantee the absence of asset price bubbles.
\end{remark}

\section{Examples}
\label{sec:examples}

In this section we provide several examples to illustrate our setting and the impact of transaction costs on asset bubbles. In Example \ref{examplefrac} we start by showing a market model under transaction cost where the asset price, driven by a fractional Brownian motion, has a bubble in the sense of Definition \ref{def:bubble}. Due to well-known no-arbitrage arguments, a process driven by the fractional Brownian motion is not admissible as price process in a frictionless market model. \\
Then we study how the presence of an asset price bubble in a market model without transaction costs may be related to the appearance of a bubble in the analogous market model with transaction costs, and vice versa. To this purpose, we consider examples where the asset price is a semimartingale in the framework of \cite{herdegenschweizer} for frictionless market models. In Example \ref{examplemartingale} we illustrate a standard market model such that there is no bubble, neither with nor without transaction cost. This example suggests that the introduction of transaction costs cannot lead to bubble's formation. In Example \ref{examplebessel}, the market model has no bubble under transaction cost but there is a bubble in the frictionless market model in the sense of \cite{herdegenschweizer}. In particular, this shows how transaction costs can possibly prevent the appearance of an asset price bubble. Example \ref{examplemartingale} and \ref{examplebessel} illustrate the impact of transaction costs on bubbles' formation. In Example \ref{ex:bubblebirth} we see how bubble's birth is already included in our model.

\begin{example}
\label{examplefrac}
This example is based on Example 7.1 of \cite{guasonifragility}. Let $W^H$ be a fractional Brownian motion with Hurst index $0<H<1$. We define

\begin{align*}
X_t:=\exp(W_t^H+\mu t),\quad t\geq 0,
\end{align*}
for $\mu\geq 0$. Let $\bF^X:=(\cF_t^X)_{t\geq 0}$ be the (completed) natural filtration of the process $X$. Note that $X$ admits a consistent price system in the non-local sense on the interval $[0,T]$ for all $T>0$ by Proposition 4.2 of \cite{schachermayercps}. Define the stopping time

\begin{align*}
\tau:=\inf\left\{t\in\bR:X_t=\frac{1}{2}\right\},
\end{align*}
and set

\begin{align*}
S_t:=X_{\tau\wedge \tan t},\quad 0\leq t<\frac{\pi}{2},\quad S_t=\frac{1}{2},\quad t\geq \frac{\pi}{2}.
\end{align*}
Define $\cG_t:=\cF_{\tan t}$, $0\leq t<\pi/2$, and $\cG_{\pi/2}:=\cF_\infty$. Consider $T\geq \pi/2$. We now show that there exists no consistent price system in the non-local sense for any $\lambda\in (0,1)$. By contradiction assume that there exists a consistent price system $(\bQ,\SQ)$ for $S$ in the non-local sense for a $\lambda\in (0,1)$. Then we have

\begin{align}
\frac{1-\lambda}{2}\leq \SQ_t=\bE_\bQ\left[\SQ_T\mid\cF_t\right]\leq \frac{1+\lambda}{2}\ a.s.
\end{align}
for all $0\leq t\leq T$, and hence also

\begin{align}
\frac{1-\lambda}{2(1+\lambda)}\leq S_t\leq \frac{1+\lambda}{2(1-\lambda)},
\end{align}
which is not possible because $S$ is not bounded from above for $0<t<\pi/2$. Thus, we can conclude that there is no consistent price system in the non-local sense. However, $S$ satisfies Assumption \ref{assumption}, i.e., for every $\lambda>0$ there exists a consistent local price system for $S$. Since $X$ admits for all $\lambda>0$ a consistent local price system $(\bQ,\SQ)$ on $[0,T]$ for all $T>0$, $(\bQ,(\SQ)^\tau)$ is also a consistent local price system for $S$ on $[0,T]$. We now show that there is a bubble in this market model with transaction costs for $\lambda<1/3$. For any consistent local price system $(\bQ,\SQ)$ for $S$ we have
\begin{align*}
(1-\lambda)S_0\leq \SQ_0\leq(1+\lambda)S_0,
\end{align*}
and because $S_0=1$
\begin{align*}
\frac{1-\lambda}{2}\leq \SQ_T\leq \frac{1+\lambda}{2}.
\end{align*}
This implies for $\lambda< 1/3$
\begin{align}
\label{ali:examplebubble}
\SQ_0\geq 1-\lambda >\frac{1+\lambda}{2}\geq \SQ_T
\end{align}
for all consistent local price systems. Thus, we have
\begin{align}
\label{ali:examplebubble2}
 (1+\lambda)S_0\geq \esssup_{(\bQ,\SQ)\in\CPS_\loc}\SQ_0>\esssup_{(\bQ,\SQ)\in\CPS_\loc}\bE_\bQ[\SQ_T].
\end{align} 
Therefore, we can conclude by equation \eqref{ali:examplebubble2} that the the asset $S$ has a bubble under transaction cost at time $t=0$.
\end{example}

\begin{remark}
\label{rem:example}
Due to the well-known arbitrage arguments, see \cite{delbaenschachermayer94}, the process $X$ in Example \ref{examplefrac} cannot be considered to describe asset price dynamics in a market model without transaction costs. Hence in the case a comparison with an analogous frictionless market model makes no-sense. Note also that the process $X$ can be replaced by any c\`adl\`ag process which is not bounded and admits a consistent local price system on $[0,T]$ for all $T>0$ and for all $\lambda>0$.
\end{remark}\noindent

\begin{example}
\label{examplemartingale}
Let $S$ be a true $\bQ_0$-martingale for some probability measure $\bQ_0\sim \bP$. Then, $\tilde{S}^{\bQ_0}:=((1+\lambda)S_t)_{0\leq t\leq T}$ is a true $\bQ_0$-martingale and $(\bQ_0,\tilde{S}^{\bQ_0})$ is a consistent price system in the non-local sense for $S$. For any stopping time $0\leq\tau\leq T$ we obtain by Proposition \ref{prop:fundamentalbound} that

\begin{align}
(1+\lambda)S_\tau\geq \esssup_{(\bQ,\tilde{S}^Q)\in\CPS_\loc}\bE_\bQ\left[\tilde{S}_T^\bQ\mid\cF_\tau\right]\geq \bE_{\bQ_0}\left[(1+\lambda)S_T\mid\cF_\tau\right]=(1+\lambda)S_\tau=(1+\lambda)S_\tau.
\end{align}
Hence there is no bubble in the market model with transaction costs. Alternatively, we can observe that Assumption \ref{assumption2} is satisfied and thus we can apply Proposition \ref{lem:characterization}. \\
From Definition \ref{def:bubbleherdegen} we have for any stopping time $0\leq \tau\leq T$

\begin{align*}
S_\tau\geq \esssup_{\bQ\in\cM_{loc}(S)}\bE_\bQ\left[S_T\mid\cF_\tau\right]\geq \bE_{\bQ_0}\left[S_T\mid\cF_\tau\right]=S_\tau.
\end{align*}
So there is also no bubble in the market model without transaction cost in the sense of \cite{herdegenschweizer}.
\end{example}

\begin{example}
\label{examplebessel}
In this example we assume that $S$ is given by a three-dimensional inverse Bessel process, i.e.,
\begin{align}
S_t:=\|B_t\|^{-1},\quad t\in[0,T],
\end{align}
where $(B_t)_{t\in[0,T]}=(B_t^1,B_t^2,B_t^3)_{t\in[0,T]}$ is a three-dimensional Brownian motion with $B_0=(1,0,0)$ and consider the filtration $\bF^S$ defined by $\cF_t^S:=\sigma(S_s:s\leq t)$. Example 5.2 in \cite{herdegenschweizer}, shows that there is a bubble in the market model without transaction cost in the sense of Definition \ref{def:bubbleherdegen}. Note that there is also a $\bP$-bubble in the sense of \cite{protterbubbles} as in the case of a complete market model these definitions coincide. However, by Theorem 5.2 of \cite{guasonifragility} we have that for all $\lambda>0$ there exists $(\bQ,\SQ)\in\CPS$, where $\SQ$ is a true $\bQ$-martingale such that

\begin{align}
\label{ali:bessel1}
(1-\lambda)S_t\leq \SQ_t\leq (1+\lambda)S_t,\quad\text{for all } t\in [0,T].
\end{align}	
In the notation of \cite{guasonifragility}, we say that $\SQ$ is $\lambda$-close to $S$. 
In particular, Assumption \ref{assumption2} is satisfied and thus we obtain by Proposition \ref{lem:characterization} that there is no bubble in any market model with proportional transaction costs $\lambda>0$. This example shows that proportional transaction costs can prevent bubbles' formation.
\end{example}

\begin{example}
\label{ex:bubblebirth}
This example is based on Example 5.4 of \cite{herdegenschweizer}. It illustrates that bubble birth (see \cite{protterbubbles}, \cite{biagini2014shifting}) is naturally included in our model. \\
Let $\gamma$ be a random variable with values in $(0,1]$, $0<\bP(\gamma=1)<1$ and $\bP(\gamma\geq t_0>0)=1$ for some $t_0\in(0,1)$ and consider the filtration $\bF^\gamma$ generated by $H_t=\mathds{1}_{\{\gamma\leq t\}},\ t\in[0,1]$. Then $\gamma$ is a $\bF^\gamma$ stopping time, which represents the time when the bubble is born. Further, let $W$ be a Brownian motion independent of $\gamma$. Denote by $\bF^W$ the natural filtration generated by $W$ and define the filtration $\bF=(\cF_t)_{t\in[0,1]}$ by $\cF_t:=\cF_t^W\vee \cF_t^\gamma\vee \cN$, $t\in[0,1]$, where $\cN$ denotes the $\bP$-nullsets of $\cF_1^W\vee \cF_1^\gamma$. Then $\gamma$ is also an $\bF$-stopping time. Let $S=(S_t)_{0\leq t\leq 1}$ be the unique strong solution to the SDE
\begin{align}
\label{ali:example55S}
dS_t=S_t \left(\mu\d t+v(t,\gamma)\d W_t\right),\quad S_0=1,
\end{align}
with $\mu\in\bR$ and $v:[0,1]^2\to [v_0,\infty)$ given by
\begin{align}
v(t,u)=v_0\left(1+\frac{1}{1-t}\mathds{1}_{\{u\leq t<1\}}\right),
\end{align}
for $v_0>0$. Then $S$ is a geometric Brownian motion up to $\gamma$. At time $\gamma$ the term $1/(1-t)$ starts to influence the volatility which explodes until time $1$. This implies that $S$ converges to $0$ as $t$ tends to $1$. We determine the fundamental value $F_\sigma$ of $S$ at time $\sigma<1$. In particular, we see that there is no bubble before time $\gamma$ but the bubble starts at $\gamma$. The fundamental value $F_\sigma$ at time $\sigma<1$ is given by
\begin{align}
\label{ali:bubblebirthfunda}
F_\sigma=(1+\lambda)S_\sigma\mathds{1}_{\{\gamma>\sigma\}}.
\end{align}
Note that $S_1(\omega_0)=0$ for $\omega_0\in \{\omega\in\Omega: \gamma(\omega)\leq \sigma(\omega)\}$. We define the strategy $\varphi=(\varphi_t^1,\varphi_t^2)_{t\in\lsem\sigma,T\rsem}$ on $\lsem \sigma,T\rsem$ by
\begin{align*}
(\varphi_t^1,\varphi_t^2)=\begin{cases}
\left((1+\lambda)S_\sigma\mathds{1}_{\{\gamma>\sigma\}},0\right),&\quad\text{ for }t=\sigma,\\
\left(0,\mathds{1}_{\{\gamma>\sigma\}}\right),&\quad \text{ for }\sigma<t<1,\\
(0,1),&\quad\text{ for } t=1.
\end{cases}
\end{align*} 
That is, using the initial capital $(1+\lambda)S_\sigma\mathds{1}_{\{\gamma>\sigma\}}$ we trade in such a way that we hold the asset at time $1$. If at time $\sigma$, $\gamma$ has already happened, we know that the volatility blows up and we can buy the asset at time $1$ at price $0$. However, if $\gamma$ happens strictly after $\sigma$ we do not know if the volatility will blow up and thus we buy the asset at time $\sigma$ at price $(1+\lambda)S_\sigma$ in order to hold the asset at time $1$. As this strategy $\varphi$ super-replicates the position $(0,1)$, we conclude that $F_\sigma\leq (1+\lambda)S_\sigma\mathds{1}_{\{\gamma>\sigma\}}$.\\
For the reverse direction, ``$\geq$'' we use the duality from Proposition \ref{cor:superrepcond}. By Example 5.4 of \cite{herdegenschweizer} we get
\begin{align*}
\esssup_{\bQ\in\cM_\loc(S)}\bE_\bQ\left[S_1\mid\cF_\sigma\right]=S_\sigma\mathds{1}_{\{\gamma >\sigma\}}.
\end{align*}
From this we obtain
\begin{align*}
F_\sigma=\esssup_{(\bQ,\SQ)\in\CPS_\loc}\bE_\bQ\left[\SQ_1\mid\cF_\sigma\right]\geq \esssup_{\bQ\in\cM_\loc(S)}\bE_\bQ\left[(1+\lambda)S_1\mid\cF_\sigma\right]=(1+\lambda)S_\sigma\mathds{1}_{\{\gamma >\sigma\}}.
\end{align*}
Indeed, we have
\begin{align*}
F_\sigma=(1+\lambda)S_\sigma\mathds{1}_{\{\gamma>\sigma\}}.
\end{align*}
This implies that $F_\sigma=(1+\lambda)S_\sigma$ on $\{\sigma<\gamma\}$ and $F_\sigma=0$ on $\{\sigma\geq \gamma\}$. In particular, we can conclude that $\gamma$ is then the time at which the bubble is born.\\
We illustrate this example in Figure \ref{simulation} below. Before the stopping time $\gamma$ occurs the price process and the fundamental value are equal. The fundamental value drops immediately to $0$ when $\gamma$ happens.

\begin{figure}[h]
\centering
\includegraphics[width=1.0\linewidth,height=0.5\textheight]{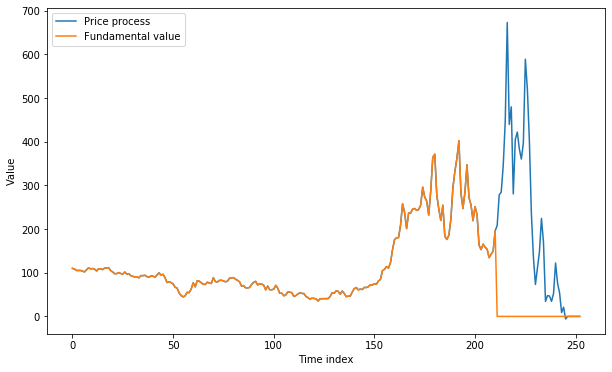}
\caption{Example \ref{ex:bubblebirth}: Simulation of $253$ days of $S$, defined in \eqref{ali:example55S} with $\mu=0.3$, $\sigma_0= 0.4$ and the starting time $\gamma$ of the bubble being uniformly distributed on $(0,1)$. Before $\gamma$ the fundamental value coincides with $(1+\lambda)S$. When $\gamma$ occurs, the fundamental value drops to $0$.}
\label{simulation}
\end{figure}

\end{example}

%\section{Conclusion}
%\label{sec:conclusion}
%
%The presence of proportional transaction allows a wider horizon of price processes. For example the geometric fractional Brownian motion is admissible as price process. But also in classical market models based on the (local) martingale approach it might be reasonable to consider transaction costs as they affect the behavior of bubbles. In our market model proportional transaction costs can possibly prevent the occurrence of bubbles compared to the corresponding frictionless market model. However, it seems not effective regarding the size of the bubble. To the contrary, it is possible to have a bigger bubble in the market model with proportional transaction costs, although in this case the size differs only by the factor $(1+\lambda)$. 

\appendix

\section{Super-replication Theorems}
\label{app:superreplication}
For sake of completeness we provide the super-replication theorems (Theorem 1.4, Theorem 1.5) of \cite{schachermayersuperreplication}. Note that Theorem 1.5 of \cite{schachermayersuperreplication} coincides with Theorem 4.1 of \cite{campischachermayer}. 
%We denote by $\cA_M(X_\sigma,\sigma,T,\lambda)$ (resp. $\cA_M^{\loc}(X_\sigma,\sigma,T,\lambda)$ the set of all terminal values of self-financing trading strategies $\varphi$, $\varphi\in\cV_M(X_\sigma,\sigma,T,\lambda)$ (resp. $\varphi\in\cV_M^\loc(X_\sigma,\sigma,T,\lambda)$). We use the notation $\cA_{\sigma,T}(X_\sigma,\lambda)=\bigcup_M\cA_M(X_\sigma,\sigma,T,\lambda)$ (resp. $\cA_{\sigma,T}^{\loc}(X_\sigma,\lambda)=\bigcup_M \cA_M^\loc(X_\sigma,\sigma,T,\lambda)$).

\begin{theorem}[Theorem 1.5, \cite{schachermayersuperreplication}]
\label{thm:superrep}
Let Assumption \ref{assumption2} hold. We consider a contingent claim $X_T=(X_T^1,X_T^2)$ which pays $X_T^1$ many units of the bond and $X_T^2$ many units of the risky asset at time $T$. The random variable $X_T$ is assumed to satisfy
\begin{align*}
X_T^1+\left(X_T^2\right)^+(1-\lambda)S_T-\left(X_T^2\right)^-(1+\lambda)S_T\geq -M(1+S_T),
\end{align*}
for some $M>0$. Then the following assertions are equivalent
\begin{enumerate}
\item There is a self-financing trading strategy $\varphi$ with $\varphi\equiv 0$ on $\lsem 0,\sigma\rsem$ and $\varphi_T=(X_T^1,X_T^2)$ which is admissible in a num\'eraire-free sense.
\item For every $(\bQ,\SQ)\in\CPS(\sigma,T)$ we have
\begin{align}
\label{eq:superrepupperbound}
\bE_\bQ\left[X_T^1+X_T^2\SQ_T\right]\leq 0.
\end{align}
\end{enumerate}
\end{theorem}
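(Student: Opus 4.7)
The plan is to prove the easy direction (i)$\Rightarrow$(ii) directly from the supermartingale property of the value process, and to obtain (ii)$\Rightarrow$(i) by a closedness and separation argument in $L^0$, following the approach of \cite{campischachermayer} and \cite{schachermayersuperreplication}.

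For (i)$\Rightarrow$(ii), I would apply Proposition 3 of \cite{schachermayeradmissible}: for any $(\bQ,\SQ)\in\CPS(\sigma,T)$ the value process $(\varphi_t^1+\varphi_t^2\SQ_t)_{\sigma\leq t\leq T}$ is an optional strong $\bQ$-supermartingale, provided the negative parts $(\varphi_\tau^1+\varphi_\tau^2\SQ_\tau)^-$ form a $\bQ$-uniformly integrable family as $\tau$ ranges over $\lsem \sigma,T\rsem$-valued stopping times. This uniform integrability follows from the admissibility bound $V_\tau^{liq}\geq -M(1+S_\tau)$ together with $S_\tau\leq \SQ_\tau/(1-\lambda)$ and the $\bQ$-martingale (not merely local) property of $\SQ$ afforded by Assumption \ref{assumption2}, exactly as spelled out in Remark \ref{rem:integrability}. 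Taking expectations at time $T$ and using $\varphi\equiv 0$ on $\lsem 0,\sigma\rsem$ yields $\bE_\bQ[X_T^1+X_T^2\SQ_T]\leq 0$.

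For (ii)$\Rightarrow$(i), I would argue by duality. Let $\cA\subset L^0(\cF_T,\bP;\bR^2)$ be the convex cone consisting of all random vectors of the form $\varphi_T-k$, where $\varphi$ ranges over num\'eraire-free admissible self-financing strategies with $\varphi\equiv 0$ on $\lsem 0,\sigma\rsem$ and $k$ is an $\cF_T$-measurable selector of the terminal solvency cone $K_T^\lambda$, representing free disposal. The crux is to establish that $\cA$ is Fatou-closed, which is the content of the Campi-Schachermayer closedness theorem under proportional transaction costs. Once this is in hand, the polar cone of $\cA$ is computed using the polarity relation \eqref{ali:polarcone} and identified, via Proposition \ref{prop:pmartingale}, with the set of processes $Z\in\cZ(\sigma,T)$ (up to normalization). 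The bipolar theorem then gives $X_T\in \cA$ iff $\bE_\bP[\langle X_T,Z_T\rangle]\leq 0$ for every $Z\in\cZ(\sigma,T)$, and rewriting this inequality through $\d\bQ(Z)/\d\bP\propto Z_T^1$ and $\SQ=Z^2/Z^1$ reproduces exactly assumption (ii). Consequently $X_T\in\cA$, producing an admissible $\varphi$ with $\varphi_T\geq X_T$ in the $K_T^\lambda$-order; a free-disposal adjustment at time $T$ via the inequality in \eqref{ali:diffform} converts this into the equality $\varphi_T=X_T$.

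The main obstacle is the Fatou closedness of $\cA$. Because admissible strategies are only bounded below by the $S$-dependent quantity $-M(1+S_T)$ rather than by a constant, the standard $L^1$-closedness machinery does not apply directly; one must perform a Koml\'os-type extraction component-wise while respecting the time-dependent pointwise solvency-cone constraints encoded in \eqref{ali:diffform}, using the domination of $S$ by a true $\bQ$-martingale provided by Assumption \ref{assumption2}. Once this closedness step is secured, the remaining duality argument is essentially formal and mirrors the frictionless super-replication theorem.
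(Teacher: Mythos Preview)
The paper does not actually prove this theorem: it is stated in Appendix~\ref{app:superreplication} purely as a citation of Theorem~1.5 of \cite{schachermayersuperreplication} (equivalently Theorem~4.1 of \cite{campischachermayer}), with no proof given. Your sketch is therefore not competing with any argument in the present paper; it is a reasonable outline of the original Campi--Schachermayer/Schachermayer proof, and in particular your identification of the two halves---the supermartingale inequality for (i)$\Rightarrow$(ii) via Proposition~3 of \cite{schachermayeradmissible}, and Fatou-closedness plus bipolar duality for (ii)$\Rightarrow$(i)---matches the structure of those references. Since there is nothing in the paper to compare against beyond the citation, there is no discrepancy to report.
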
 \noindent

\begin{theorem}[Theorem 1.4, \cite{schachermayersuperreplication}]
\label{thm:superreploc}
Let Assumption \ref{assumption} hold. We consider a contingent claim $X_T=(X_T^1,X_T^2)$ which pays $X_T^1$ many units of the bond and $X_T^2$ many units of the risky asset at time $T$. The random variable $X_T$ is assumed to satisfy
\begin{align}
\label{eq:condsuperreploc}
X_T^1+\left(X_T^2\right)^+(1-\lambda)S_T-\left(X_T^2\right)^-(1+\lambda)S_T\geq -M,
\end{align}
for some $M>0$. Then the following assertions are equivalent:
\begin{enumerate}
\item\label{item:superreploc1} There is a self-financing trading strategy $\varphi$ on $\lsem\sigma,T\rsem$ with $\varphi\equiv 0$ on $\lsem 0,\sigma\rsem$ and $\varphi_T=(X_T^1,X_T^2)$ which is admissible in a num\'eraire-based sense.
\item\label{item:superreploc2} For every $(\bQ,\SQ)\in\CPS_\loc(\sigma,T)$ we have
\begin{align}
\label{eq:superrepupperboundloc}
\bE_\bQ\left[X_T^1+X_T^2\SQ_T\right]\leq 0.
\end{align}
\end{enumerate}
\end{theorem}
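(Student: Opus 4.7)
The plan is to prove both implications separately, with the hard work concentrated on $\ref{item:superreploc2}\Rightarrow\ref{item:superreploc1}$. For the easy direction $\ref{item:superreploc1}\Rightarrow\ref{item:superreploc2}$, the strategy is to exploit the supermartingale property of the portfolio process under any consistent local price system. Given an admissible self-financing $\varphi$ on $\lsem\sigma,T\rsem$ with $\varphi\equiv 0$ on $\lsem 0,\sigma\rsem$ and $\varphi_T=X_T$, condition \eqref{eq:condsuperreploc} guarantees that the liquidation value is bounded below by the constant $-M$, which lets us invoke the (extended) version of Proposition 2 of \cite{schachermayeradmissible} discussed in Remark \ref{rem:integrability}. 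Consequently, for each $(\bQ,\SQ)\in\CPS_\loc(\sigma,T)$, the process $(\varphi_t^1+\varphi_t^2\SQ_t)_{\sigma\leq t\leq T}$ is an optional strong $\bQ$-supermartingale. Using $\varphi_\sigma=(0,0)$ this immediately gives
\begin{align*}
\bE_\bQ\left[X_T^1+X_T^2\SQ_T\right]=\bE_\bQ\left[\varphi_T^1+\varphi_T^2\SQ_T\right]\leq \varphi_\sigma^1+\varphi_\sigma^2\SQ_\sigma=0,
\end{align*}
which is \eqref{eq:superrepupperboundloc}.

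For the reverse direction, the plan is to follow the convex-duality scheme of \cite{campischachermayer}. I would introduce the convex cone $\cA$ of all terminal values $\varphi_T=(\varphi_T^1,\varphi_T^2)$ of self-financing strategies $\varphi$ on $\lsem\sigma,T\rsem$ vanishing on $\lsem 0,\sigma\rsem$ and admissible in the num\'eraire-based sense, and then consider $\cC:=(\cA-L_+^0(\cF_T,\bP;\bR^2))\cap L^0$ truncated at the admissibility level $-M$ by the solvency cone. The key technical step is to show that $\cC$ is closed in probability (in the sense of Fatou closure along the solvency cones $K_t^\lambda$). This is precisely where one uses the assumption of existence of consistent local price systems from Assumption \ref{assumption}, as these provide the strict supermartingale majorants needed to control the finite-variation components $\varphi^{1,\uparrow},\varphi^{1,\downarrow},\varphi^{2,\uparrow},\varphi^{2,\downarrow}$ of any minimizing sequence. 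One then extracts, via a Koml\'os-type or concatenation argument as in \cite{campischachermayer}, a limiting self-financing, admissible strategy with the desired terminal position.

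Once closedness is established, the contrapositive can be closed by a Hahn--Banach separation argument. If $X_T\notin\cC$, one finds a non-zero, non-negative pair of random variables $(Y^1,Y^2)\in L_+^\infty(\cF_T,\bP;\bR^2)$ with $\bE_\bP[Y^1 X_T^1+Y^2 X_T^2]>\sup_{\xi\in\cC}\bE_\bP[Y^1\xi^1+Y^2\xi^2]\geq 0$. One then sets $Z_t^1:=\bE_\bP[Y^1\mid\cF_t]$ and $Z_t^2:=\bE_\bP[Y^2\mid\cF_t]$ and verifies, using the buy/sell strategies generated by the solvency cones $K_t^\lambda$, that $Z=(Z^1,Z^2)$ takes values in $(-K_t^\lambda)^\circ\setminus\{0\}$; i.e., $Z\in\cZ_\loc(\sigma,T)$. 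By Proposition \ref{prop:pmartingale}, the pair $(\bQ(Z),Z^2/Z^1)$ is a consistent local price system on $\lsem\sigma,T\rsem$ which violates \eqref{eq:superrepupperboundloc}, contradicting $\ref{item:superreploc2}$.

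The main obstacle is the closedness step: in the num\'eraire-based setting one only has local martingale consistent price systems, and strategies are bounded from below only by a constant (not by an $S$-dependent expression as in the num\'eraire-free case of Theorem \ref{thm:superrep}), so the usual weak-compactness tricks need to be replaced by the concatenation/random stopping argument developed in \cite{campischachermayer}. Verifying that this argument still goes through on the random interval $\lsem\sigma,T\rsem$ with the extended admissibility notion of Definition \ref{def:strategies} (rather than on $[0,T]$ with constant initial endowment) is the delicate point; but since strategies in $\cV_M^\loc(0,\sigma,T,\lambda)$ extended by zero on $\lsem 0,\sigma\rsem$ coincide with a subclass of the strategies considered in \cite{schachermayersuperreplication}, as explained in Remark \ref{rem:initialcond}, the result carries over verbatim.
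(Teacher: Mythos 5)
The paper does not contain its own proof of Theorem~\ref{thm:superreploc}. The theorem is stated as a direct transcription of Theorem~1.4 of \cite{schachermayersuperreplication}, with the sole modification that the terminal claim is allowed to be $X_T=(X_T^1,X_T^2)$ rather than $X_T=(X_T^1,0)$; the authors explicitly remark that ``the proof is similar'' to the cited one and defer all details to \cite{thesis}. There is therefore no in-text argument against which to check your proposal line by line.

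That said, the structure you outline is the standard one used in \cite{schachermayersuperreplication} and \cite{campischachermayer}: the easy direction $(i)\Rightarrow(ii)$ via the optional-strong-supermartingale property (Proposition~2 of \cite{schachermayeradmissible}, extended to random start as in Remark~\ref{rem:integrability}), and the hard direction $(ii)\Rightarrow(i)$ via Fatou-closedness of the hedgeable cone followed by Hahn--Banach separation and construction of the dual element $Z\in\cZ_\loc(\sigma,T)$. This is plausibly what \cite{thesis} contains. Two small points worth tightening. First, in the easy direction you attribute the lower bound on the liquidation value to condition~\eqref{eq:condsuperreploc}; in fact the uniform bound $V^{liq}_\tau(\varphi)\geq -M_\sigma$ for \emph{all} intermediate stopping times $\tau$ is supplied by the \emph{admissibility} of $\varphi$ (Definition~\ref{def:strategies}), while \eqref{eq:condsuperreploc} only bounds the terminal position; the condition \eqref{eq:condsuperreploc} is what ensures the claim is compatible with admissibility at all and plays its real role in the hard direction. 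Second, after separation the bounded dual pair $(Y^1,Y^2)$ yields $Z^1,Z^2$ that are a priori only $\bP$-martingales and nonnegative; showing that $Z_t$ is a.s.\ valued in $(-K_t^\lambda)^\circ\setminus\{0\}$ (so that $\bQ(Z)\sim\bP$ and the quotient $Z^2/Z^1$ lies strictly inside the bid--ask spread) requires an extra regularization step that you gloss over but that is essential in the local setting — this is precisely where the concatenation/localization machinery of \cite{campischachermayer} is invoked, and you correctly flag it as the delicate part.
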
 \noindent
Note that in Theorem \ref{thm:superreploc} we consider the claim $X_T=(X_T^1,X_T^2)$ instead of $X_T=(X_T^1,0)$ as in Theorem 1.4 of \cite{schachermayersuperreplication}. However, the proof is similar. For details on the proof, we refer to \cite{thesis}.

\section*{Acknowledgement}
We would like to thank Martin Schweizer for helpful remarks and insightful discussions. 

\bibliography{Bubbles_TC_Paperversion}

\begin{thebibliography}{10}

\bibitem{abreu2003bubbles}
{\sc Abreu, D., and Brunnermeier, M.~K.}
\newblock Bubbles and crashes.
\newblock {\em Econometrica 71}, 1 (2003), 173--204.

\bibitem{allengorton93}
{\sc Allen, F., and Gorton, G.}
\newblock Churning bubbles.
\newblock {\em The Review of Economic Studies 60}, 4 (1993), 813--836.

\bibitem{allen1993finite}
{\sc Allen, F., Morris, S., and Postlewaite, A.}
\newblock Finite bubbles with short sale constraints and asymmetric
  information.
\newblock {\em Journal of Economic Theory 61}, 2 (1993), 206--229.

\bibitem{anselstricker}
{\sc Ansel, J.-P., and Stricker, C.}
\newblock Couverture des actifs contingents et prix maximum.
\newblock {\em Annales de l'IHP Probabilit{\'e}s et Statistiques 30}, 2 (1994),
  303--315.

\bibitem{anthony2012financial}
{\sc Anthony, J., Bijlsma, M., Elbourne, A., Lever, M., and Zwart, G.}
\newblock Financial transaction tax: review and assessment.
\newblock {\em CPB Netherlands Bureau For Economic Policy Analysis. CPB
  Discussion Paper\/} (2012).

\bibitem{bayraktar2018market}
{\sc Bayraktar, E., and Yu, X.}
\newblock On the market viability under proportional transaction costs.
\newblock {\em Mathematical Finance 28}, 3 (2018), 800--838.

\bibitem{biagini2014shifting}
{\sc Biagini, F., F{\"o}llmer, H., and Nedelcu, S.}
\newblock Shifting martingale measures and the birth of a bubble as a
  submartingale.
\newblock {\em Finance and Stochastics 18}, 2 (2014), 297--326.

\bibitem{biagini2017financial}
{\sc Biagini, F., and Mancin, J.}
\newblock Financial asset price bubbles under model uncertainty.
\newblock {\em Probability, Uncertainty and Quantitative Risk 2}, 14 (2017).

\bibitem{biagini2018liquidity}
{\sc Biagini, F., Mazzon, A., and Meyer-Brandis, T.}
\newblock Liquidity induced asset bubbles via flows of elmms.
\newblock {\em SIAM Journal on Financial Mathematics 9}, 2 (2018), 800--834.

\bibitem{biagini2015formation}
{\sc Biagini, F., and Nedelcu, S.}
\newblock The formation of financial bubbles in defaultable markets.
\newblock {\em SIAM Journal on Financial Mathematics 6}, 1 (2015), 530--558.

\bibitem{campischachermayer}
{\sc Campi, L., and Schachermayer, W.}
\newblock A super-replication theorem in kabanov's model of transaction costs.
\newblock {\em Finance and Stochastics 10}, 4 (2006), 579--596.

\bibitem{cox2005local}
{\sc Cox, A.~M., and Hobson, D.~G.}
\newblock Local martingales, bubbles and option prices.
\newblock {\em Finance and Stochastics 9}, 4 (2005), 477--492.

\bibitem{cvitanic}
{\sc Cvitani{\'c}, J., and Karatzas, I.}
\newblock Hedging and portfolio optimization under transaction costs: A
  martingale approach.
\newblock {\em Mathematical Finance 6}, 2 (1996), 133--165.

\bibitem{delong1990noise}
{\sc De~Long, J.~B., Shleifer, A., Summers, L.~H., and Waldmann, R.~J.}
\newblock Noise trader risk in financial markets.
\newblock {\em Journal of political Economy 98}, 4 (1990), 703--738.

\bibitem{delong1990positive}
{\sc De~Long, J.~B., Shleifer, A., Summers, L.~H., and Waldmann, R.~J.}
\newblock Positive feedback investment strategies and destabilizing rational
  speculation.
\newblock {\em the Journal of Finance 45}, 2 (1990), 379--395.

\bibitem{delbaenschachermayer94}
{\sc Delbaen, F., and Schachermayer, W.}
\newblock A general version of the fundamental theorem of asset pricing.
\newblock {\em Mathematische Annalen 300}, 1 (1994), 463--520.

\bibitem{dellacheriea}
{\sc Dellacherie, C., and Meyer, P.-A.}
\newblock {\em Probabilities and Potential}, vol.~29 of {\em North-Holland
  Mathematics Studies}.
\newblock North-Holland Publishing Co., Amsterdam, 1978.

\bibitem{dellacherieb}
{\sc Dellacherie, C., and Meyer, P.-A.}
\newblock {\em Probabilities and Potential B}, vol.~72 of {\em North-Holland
  Mathematics Studies}.
\newblock North-Holland Publishing Co., Amsterdam, 1982.

\bibitem{quenez}
{\sc El~Karoui, N., and Quenez, M.-C.}
\newblock Dynamic programming and pricing of contingent claims in an incomplete
  market.
\newblock {\em SIAM Journal on Control and Optimization 33}, 1 (1995), 29--66.

\bibitem{follmerprotter}
{\sc F{\"o}llmer, H., and Protter, P.}
\newblock Local martingales and filtration shrinkage.
\newblock {\em ESAIM: Probability and Statistics 15\/} (2011), 25--38.

\bibitem{follmerschied}
{\sc F{\"o}llmer, H., and Schied, A.}
\newblock {\em Stochastic finance: an introduction in discrete time}.
\newblock Walter de Gruyter, 2011.

\bibitem{gerding2007}
{\sc Gerding, E.~F.}
\newblock Laws against bubbles: An experimental-asset-market approach to
  analyzing financial regulation.
\newblock {\em Wisconsin Law Review 977\/} (2007).

\bibitem{gerdinglaw}
{\sc Gerding, E.~F.}
\newblock {\em Law, bubbles, and financial regulation}.
\newblock Routledge, 2013.

\bibitem{guasonifragility}
{\sc Guasoni, P., and R{\'a}sonyi, M.}
\newblock Fragility of arbitrage and bubbles in local martingale diffusion
  models.
\newblock {\em Finance and Stochastics 19}, 2 (2015), 215--231.

\bibitem{schachermayercps}
{\sc Guasoni, P., R{\'a}sonyi, M., and Schachermayer, W.}
\newblock Consistent price systems and face-lifting pricing under transaction
  costs.
\newblock {\em The Annals of Applied Probability 18}, 2 (2008), 491--520.

\bibitem{guasoniftap}
{\sc Guasoni, P., R{\'a}sonyi, M., and Schachermayer, W.}
\newblock The fundamental theorem of asset pricing for continuous processes
  under small transaction costs.
\newblock {\em Annals of Finance 6}, 2 (2010), 157--191.

\bibitem{harrisonkreps1978}
{\sc Harrison, J.~M., and Kreps, D.~M.}
\newblock Speculative investor behavior in a stock market with heterogeneous
  expectations.
\newblock {\em The Quarterly Journal of Economics 92}, 2 (1978), 323--336.

\bibitem{herdegenschweizer}
{\sc Herdegen, M., and Schweizer, M.}
\newblock Strong bubbles and strict local martingales.
\newblock {\em International Journal of Theoretical and Applied Finance 19}, 04
  (2016), 1650022--1--44.

\bibitem{hestonbubble}
{\sc Heston, S.~L., Loewenstein, M., and Willard, G.~A.}
\newblock Options and bubbles.
\newblock {\em The Review of Financial Studies 20}, 2 (2006), 359--390.

\bibitem{jarrow2011detect}
{\sc Jarrow, R., Kchia, Y., and Protter, P.}
\newblock How to detect an asset bubble.
\newblock {\em SIAM Journal on Financial Mathematics 2}, 1 (2011), 839--865.

\bibitem{jarrowprottercomplete}
{\sc Jarrow, R., Protter, P., and Shimbo, K.}
\newblock Asset price bubbles in a complete market.
\newblock {\em Advances in Mathematical Finance 28\/} (2006), 105--130.

\bibitem{jarrowasset}
{\sc Jarrow, R.~A.}
\newblock Asset price bubbles.
\newblock {\em Annual Review of Financial Economics 7\/} (2015), 201--218.

\bibitem{jarrow2009forward}
{\sc Jarrow, R.~A., and Protter, P.}
\newblock Forward and futures prices with bubbles.
\newblock {\em International Journal of Theoretical and Applied Finance 12}, 07
  (2009), 901--924.

\bibitem{jarrow2011foreign}
{\sc Jarrow, R.~A., and Protter, P.}
\newblock Foreign currency bubbles.
\newblock {\em Review of Derivatives Research 14}, 1 (2011), 67--83.

\bibitem{jarrow2012liquidity}
{\sc Jarrow, R.~A., Protter, P., and Roch, A.~F.}
\newblock A liquidity-based model for asset price bubbles.
\newblock {\em Quantitative Finance 12}, 9 (2012), 1339--1349.

\bibitem{jarrowprotterincomplete}
{\sc Jarrow, R.~A., Protter, P., and Shimbo, K.}
\newblock Asset price bubbles in incomplete markets.
\newblock {\em Mathematical Finance: An International Journal of Mathematics,
  Statistics and Financial Economics 20}, 2 (2010), 145--185.

\bibitem{kabanovbook}
{\sc Kabanov, Y., and Safarian, M.}
\newblock {\em Markets with transaction costs: Mathematical Theory}.
\newblock Springer Science \& Business Media, 2009.

\bibitem{kabanov99}
{\sc Kabanov, Y.~M.}
\newblock Hedging and liquidation under transaction costs in currency markets.
\newblock {\em Finance and Stochastics 3}, 2 (1999), 237--248.

\bibitem{kabanovlast}
{\sc Kabanov, Y.~M., and Last, G.}
\newblock Hedging under transaction costs in currency markets: a
  continuous-time model.
\newblock {\em Mathematical Finance 12}, 1 (2002), 63--70.

\bibitem{kabanovstricker}
{\sc Kabanov, Y.~M., and Stricker, C.}
\newblock Hedging of contingent claims under transaction costs.
\newblock In {\em Advances in finance and stochastics}. Springer, 2002.

\bibitem{contbubbles}
{\sc Kaizoji, T., and Sornette, D.}
\newblock Bubbles and crashes.
\newblock {\em Encyclopedia of quantitative finance\/} (2010).

\bibitem{kramkovduality}
{\sc Kramkov, D.~O.}
\newblock Optional decomposition of supermartingales and hedging contingent
  claims in incomplete security markets.
\newblock {\em Probability Theory and Related Fields 105}, 4 (1996), 459--479.

\bibitem{levitin2011explaining}
{\sc Levitin, A.~J., and Wachter, S.~M.}
\newblock Explaining the housing bubble.
\newblock {\em Geo. LJ 100\/} (2011), 1177.

\bibitem{loewenstein2000rational}
{\sc Loewenstein, M., and Willard, G.~A.}
\newblock Rational equilibrium asset-pricing bubbles in continuous trading
  models.
\newblock {\em Journal of Economic Theory 91}, 1 (2000), 17--58.

\bibitem{protterstochastic}
{\sc Protter, P.}
\newblock {\em Stochastic differential equations}.
\newblock Springer, Berlin, Heidelberg, 2005.

\bibitem{protterbubbles}
{\sc Protter, P.}
\newblock A mathematical theory of financial bubbles.
\newblock {\em Paris-Princeton Lectures on Mathematical Finance 2081\/} (2013),
  1--108.

\bibitem{thesis}
{\sc Reitsam, T.}
\newblock {\em Asset price bubbles in market models with proportional
  transaction costs}.
\newblock PhD thesis, work in progress.

\bibitem{schachermayeradmissible}
{\sc Schachermayer, W.}
\newblock Admissible trading strategies under transaction costs.
\newblock {\em S{\'e}minaire de Probabilit{\'e}s XLVI\/} (2014), 317--331.

\bibitem{schachermayersuperreplication}
{\sc Schachermayer, W.}
\newblock The super-replication theorem under proportional transaction costs
  revisited.
\newblock {\em Mathematics and Financial Economics 8}, 4 (2014), 383--398.

\bibitem{schachermayerasymptotic}
{\sc Schachermayer, W.}
\newblock {\em Asymptotic theory of transaction costs}.
\newblock European Mathematical Society, 2017.

\bibitem{schatz2019inefficient}
{\sc Schatz, M., and Sornette, D.}
\newblock Inefficient bubbles and efficient drawdowns in financial markets.
\newblock {\em Swiss Finance Institute Research Paper}, 18-49 (2019).

\bibitem{scheinkmanoverconfidence}
{\sc Scheinkman, J.~A., and Xiong, W.}
\newblock Overconfidence and speculative bubbles.
\newblock {\em Journal of political Economy 111}, 6 (2003), 1183--1220.

\bibitem{shiller2015irrational}
{\sc Shiller, R.~J.}
\newblock {\em Irrational exuberance: Revised and expanded third edition}.
\newblock Princeton university press, 2015.

\bibitem{shleifer2000inefficient}
{\sc Shleifer, A.}
\newblock {\em Inefficient markets: An introduction to behavioural finance}.
\newblock OUP Oxford, 2000.

\bibitem{shleifer1990noise}
{\sc Shleifer, A., and Summers, L.~H.}
\newblock The noise trader approach to finance.
\newblock {\em Journal of Economic perspectives 4}, 2 (1990), 19--33.

\bibitem{shleifer1997limits}
{\sc Shleifer, A., and Vishny, R.~W.}
\newblock The limits of arbitrage.
\newblock {\em The Journal of finance 52}, 1 (1997), 35--55.

\bibitem{sornette2003critical}
{\sc Sornette, D.}
\newblock Critical market crashes.
\newblock {\em Physics Reports 378}, 1 (2003), 1--98.

\bibitem{vsperka2013transaction}
{\sc {\v{S}}perka, R., and Spi{\v{s}}{\'a}k, M.}
\newblock Transaction costs influence on the stability of financial market:
  agent-based simulation.
\newblock {\em Journal of Business Economics and Management 14}, sup1 (2013),
  1--12.

\bibitem{strasser}
{\sc Strasser, E.}
\newblock Necessary and sufficient conditions for the supermartingale property
  of a stochastic integral with respect to a local martingale.
\newblock {\em S{\'e}minaire de Probabilit{\'e}s XXXVII\/} (2003), 385--393.

\bibitem{xiongoverview}
{\sc Xiong, W.}
\newblock Bubbles, crises, and heterogeneous beliefs.
\newblock Tech. rep., National Bureau of Economic Research, 2013.

\end{thebibliography}

\bibliographystyle{acm}

\end{document}